\documentclass[fleqn]{lmcs}
\pdfoutput=1
\usepackage[utf8]{inputenc}

\usepackage{lastpage}
\lmcsdoi{21}{2}{29}
\lmcsheading{}{\pageref{LastPage}}{}{}%
{Mar.~06,~2024}{Jun.~27,~2025}{}

\usepackage{hyperref}
\usepackage{mathtools}
\usepackage{xspace}
\usepackage{amssymb}
\RequirePackage{pifont}
\usepackage{stmaryrd}
\usepackage{tikz}
\usetikzlibrary{decorations.pathreplacing}
\tikzstyle{sd}[6]=[sibling distance=#1mm]
\tikzstyle{ld}[5.5]=[level distance=#1mm]
\tikzstyle{mis}[20]=[minimum size=#1mm]
\tikzstyle{is}[3]=[inner sep=#1pt]
\tikzstyle{inner}=[mis=1.8,is=0,draw,fill=white,shape=circle]
\tikzstyle{leaf}=[mis=1.8,is=0,fill=black,shape=circle]
\tikzstyle{stage}=[scale=1]

\tikzstyle{H}=[level distance=7mm,level 1/.style={sd},
level 2/.style={sd},level 3/.style={sd},rotate=180]
\tikzstyle{brace}=[decoration={brace,raise=1.5mm},decorate,line
width=0.7pt]
\tikzstyle{noedge}=[edge from parent/.style={transparent}]

\newcommand{\btkz}{\begin{tikzpicture}}
\newcommand{\etkz}{\end{tikzpicture}}
\newcommand{\seq}[2][n]{{#2_1},\dots,{#2_{#1}}}
\newcommand{\hh}[1][.3]{\hspace{#1mm}}
\newcommand{\SET}[1]{\{\hh#1\hh\}}
\newcommand{\m}[1]{\mathsf{#1}}
\newcommand{\mc}[1]{\mathcal{#1}}
\newcommand{\mr}[1]{\mathrel{#1}}
\newcommand{\xA}{\mc{A}}

\newcommand{\xC}{\mc{C}}
\newcommand{\xD}{\mc{D}}
\newcommand{\xF}{\mc{F}}

\newcommand{\xH}{\mc{H}}

\newcommand{\xTH}{\mc{T}_{\xH}}
\newcommand{\xTS}[1][S]{\mc{T}_#1(\xF,\xV)}
\newcommand{\xCH}{\xC_{\xH}}

\newcommand{\xR}{\mc{R}}
\newcommand{\xS}{\mc{S}}
\newcommand{\xV}{\mc{V}}
\newcommand{\R}{\rightarrow}
\newcommand{\Rb}[1][]{\R_{#1}}
\newcommand{\Rab}[2][]{\R_{#1}^{#2}}
\newcommand{\MC}[2][0]{\makebox[#1mm]{#2}}
\newcommand{\ML}[2][0]{\makebox[#1mm][l]{#2}}
\newcommand{\Item}{\smallskip\item}

\newcommand{\AC}{\m{AC}}
\newcommand{\ac}{=_\m{AC}}

\newcommand{\NN}{\mathbb{N}}
\newcommand{\OO}{\mathbb{O}}

\newcommand{\gO}{>_\m{O}}

\newcommand{\GO}{~\gO~}

\newcommand{\geO}{\geqslant_\m{O}}

\newcommand{\eO}{=_\m{O}}
\newcommand{\EO}{~\eO~}

\newcommand{\lex}{{\m{lex}}}
\newcommand{\mul}{{\m{mul}}}

\newcommand{\DEC}{\xD\m{ec}}

\newcommand{\si}{\mathbin{|}}
\newcommand{\h}{\m{h}}
\newcommand{\I}{\m{i}}
\newcommand{\hs}[1][.3]{\hspace{#1mm}}
\newcommand{\plus}{\hs+\hs}
\newcommand{\Oplus}{\hs\oplus\hs}
\newcommand{\Otimes}{\hs\otimes\hs}

\newcommand{\Emb}{{\mathcal{E}\m{mb}}}

\newcommand{\EVALA}[2][\alpha]{[#1]_\xA(#2)}
\newcommand{\lab}[2][\alpha]{\m{lab}_#1(#2)}
\newcommand{\xFl}{\xF_\m{lab}}

\newcommand{\xHl}{\xH_\m{lab}}
\newcommand{\xRl}{\xR_\m{lab}}

\newcommand{\RHAC}[1][*]{\Rab[\xH/\AC]{#1}}
\newcommand{\RHnAC}[2][*]{\Rab[\xH_#2/\AC]{#1}}
\newcommand{\RRAC}[1][n]{\Rb[\xR_#1/\AC]}
\newcommand{\nR}[1]{\,\xrightarrow{\,\MC[2]{$\scriptstyle #1$}\,}\,}
\newcommand{\RT}[1]{\m{root}(#1)}
\newcommand{\acmpo}{{\m{acmpo}}}

\newcommand{\ACMPO}[1][>]{#1_{\m{acmpo}}}
\newcommand{\ACMPOm}[1][>]{#1_{\m{acmpo}}^\mul}
\newcommand{\ACm}{=_\AC^\mul}
\newcommand{\superterm}{\trianglerighteqslant}
\newcommand{\prsuperterm}{\rhd}
\newcommand{\prsubterm}{\mr{\vartriangleleft}}
\newcommand{\subterm}{\trianglelefteqslant}
\newcommand{\TF}[2][]{{\triangledown_{\!#1}}(#2)}
\newcommand{\msn}{\m{s}(\m{n})}
\newcommand{\msm}{\m{s}(\m{m})}

\newcommand{\AProVE}{\textsf{AProVE}\xspace}

\newcommand{\muterm}{\textsf{muterm}\xspace}
\newcommand{\TTTT}{\textsf{T\kern-0.15em\raisebox{-0.55ex}T\kern-0.15emT%
\kern-0.15em\raisebox{-0.55ex}2}\xspace}

\newcommand{\exaref}[1]{Example~\ref{exa:#1}}
\newcommand{\thmref}[1]{Theorem~\ref{thm:#1}}
\newcommand{\corref}[1]{Corollary~\ref{cor:#1}}
\newcommand{\lemref}[1]{Lemma~\ref{lem:#1}}
\newcommand{\defref}[1]{Definition~\ref{def:#1}}
\newcommand{\secref}[1]{Section~\ref{sec:#1}}
\newcommand{\figref}[1]{Figure~\ref{fig:#1}}

\title{Hydra Battles and AC Termination}
\thanks{The first author is supported by JSPS KAKENHI Grant Number
JP22K11900. Part of this work was performed when the second author was
employed at the Future Value Creation Research Center of Nagoya
University, Japan.}

\author{Nao Hirokawa\lmcsorcid{0000-0002-8499-0501}}[a]
\address{School of Information Science, JAIST, Japan}
\email{hirokawa@jaist.ac.jp}

\author{Aart Middeldorp\lmcsorcid{0000-0001-7366-8464}}[b]
\address{Department of Computer Science, University of Innsbruck, Austria}
\email{aart.middeldorp@uibk.ac.at}

\keywords{battle of Hercules and Hydra, term rewriting, AC termination}

\begin{document}

\begin{abstract}
We present a new encoding of the Battle of Hercules and Hydra as a
rewrite system with AC symbols. Unlike earlier term rewriting encodings,
it faithfully models any strategy of Hercules to beat Hydra. To prove
the termination of our encoding, we employ type introduction in connection
with many-sorted semantic labeling for AC rewriting and AC-MPO, a new
AC compatible reduction order that can be seen as a much weakened version
of AC-RPO.
\end{abstract}

\maketitle

\section{Introduction}
\label{sec:introduction}

The mythological monster Hydra is a dragon-like creature with multiple
heads. Whenever Hercules in his fight chops off a head, more and more new
heads can grow instead, since the beast gets increasingly angry. Here we
model a Hydra as an unordered tree. If Hercules cuts off a leaf
corresponding to a head, the tree is modified in the following way: If the
cut-off node $h$ has a grandparent $n$, then the branch from $n$ to the
parent of $h$ gets multiplied, where the number of copies 
corresponds to the
number of decapitations so far. Hydra dies if there are no heads left, in
that case Hercules wins. The following sequence shows an example fight:
\newcommand{\scissors}{\footnotesize \textcolor{red}{\ding{34}}}
\begin{align*}
&\btkz[ld,level 1/.style={sd},level 2/.style={sd=4},rotate=180]
\node[inner] at (0,0) {}
 child { node[leaf] {} }
 child { node[inner] {}
  child { node[inner] {}
   child { node[inner] (A) {} child { node[leaf] (B) {} } }
   child { node[inner] {} child { node[leaf] {} } } } }
 child { node[inner] {} child { node[leaf] {} } };
\path (A) -- (B) node[midway] () {\scissors};
\node[stage] at (0,.5) {0};
\etkz &
&\btkz[ld,level 1/.style={sd},level 2/.style={sd=5},rotate=180]
\node[inner] at (0,0) {}
 child { node[leaf] {} }
 child { node[inner] {}
  child { node[inner] (A) {}
   child { node[leaf] {} }
   child { node[leaf] (B) {} }
   child { node[inner] {}
    child { node[leaf] {} } } } }
 child { node[inner] {} child { node[leaf] {} } };
\path (A) -- (B) node[midway] () {\scissors};
\node[stage] at (0,.5) {1};
\etkz &
&\btkz[ld,level 1/.style={sd=11},level 2/.style={sd},level 3/.style={sd=3},
 rotate=180]
\node[inner] at (0,0) {}
 child { node[leaf] {} }
 child { node[inner] {}
  child { node[inner] {}
   child { node[leaf] {} }
   child { node[inner] {}
    child { node[leaf] {} } } }
 child { node[inner] {}
  child { node[leaf] {} }
  child { node[inner] {}
   child { node[leaf] {} } } }
 child { node[inner] {}
  child { node[leaf] {} }
  child { node[inner] {}
   child { node[leaf] {} } } } }
 child { node[inner] (A) {}
  child { node[leaf] (B) {} } };
\path (A) -- (B) node[midway] () {\scissors};
\node[stage] at (0,.5) {2};
\etkz &
&\btkz[ld,level 1/.style={sd=5.2},level 2/.style={sd=6.5},
 level 3/.style={sd=2.8},rotate=180]
\node[inner] at (0,0) (A) {}
 child { node[leaf] {} }
 child { node[leaf] {} }
 child { node[inner] {}
  child { node[inner] {}
   child { node[leaf] {} }
   child { node[inner] {}
    child { node[leaf] {} } } }
  child { node[inner] {}
   child { node[leaf] {} }
   child { node[inner] {}
    child { node[leaf] {} } } }
  child { node[inner] {}
   child { node[leaf] {} }
   child { node[inner] {}
    child { node[leaf] {} } } } }
 child { node[leaf] (B) {} }
 child { node[leaf] {} }
 child { node[leaf] {} };
\path (A) -- (B) node[midway,xshift=-.1mm,yshift=.25mm] () {\scissors};
\node[stage] at (0,.5) {3};
\etkz &
\btkz[ld,level 1/.style={sd=5},level 2/.style={sd=6.5},
 level 3/.style={sd=2.8},rotate=180]
\node[inner] at (0,0) {}
 child { node[leaf] {} }
 child { node[leaf] {} }
 child { node[inner] {}
  child { node[inner] {}
   child { node[leaf] {} }
   child { node[inner] {}
    child { node[leaf] {} } } }
 child { node[inner] {}
  child { node[leaf] {} }
  child { node[inner] {}
   child { node[leaf] {} } } }
 child { node[inner] {}
  child { node[leaf] {} }
  child { node[inner] {}
   child { node[leaf] {} } } } }
 child { node[leaf] {} }
 child { node[leaf] {} };
\node[stage] at (0,.5) {4};
\etkz
\end{align*}
Though the number of heads can grow considerably in one step, it turns out
that the fight always terminates, and Hercules will win independent of his
strategy. Proving termination of the Battle is challenging since
Kirby and Paris proved in their landmark paper \cite{KP82} that
termination for an arbitrary (computable) strategy is independent of
Peano arithmetic.
In \cite{KP82} a termination argument based on ordinals is used.

Starting with \cite[p.~271]{DJ90}, several TRS encodings of the Battle of
Hercules and Hydra have been proposed and studied
\cite{B06,DM07,FZ96,M09,T98}.
Touzet~\cite{T98} was the first to give a
rigorous termination proof and in \cite{ZWM15} the automation of ordinal
interpretations is discussed.
In this article we present yet another
encoding. In contrast to earlier TRS encodings that model a specific
strategy, it uses AC matching to represent \emph{arbitrary} battles.
To prove its termination, we adapt existing termination
methods for AC rewriting.

The remainder of the article is organized as follows. After
recalling some basic definitions in \secref{preliminaries}, we present
our new encoding of the Battle in \secref{encoding}. We give a
rigorous proof that our encoding faithfully represents the Battle.
In \secref{semantic labeling} we present 
many-sorted semantic labeling for AC rewriting and apply it to our
encoding. This results in an infinite AC rewrite system, which
can be shown terminating by Rubio's AC-RPO~\cite{R02}.
As a matter of fact, we do not need the full power of AC-RPO. Inspired
by Steinbach's AC-KBO~\cite{S90}, in \secref{acmpo} we introduce AC-MPO,
a much weakened version of AC-RPO, and show that it is powerful enough
for our purpose. Some of the properties of AC-MPO are proved in the
appendix.

Related work is discussed in
\secref{related}. In particular, we comment on earlier encodings of the
Battle. We conclude in \secref{conclusion} with suggestions for future
research.

A preliminary version of this article appeared in the proceedings of the
8th International Conference on Formal Structures for Computation and
Deduction~\cite{HM23}. AC-MPO is a new result. New examples
provide further illustration of the simulation of the Battle of Hercules
and Hydra.

\section{Preliminaries}
\label{sec:preliminaries}

Let $\xS$ be a set of \emph{sorts}. An $\xS$-\emph{sorted signature} $\xF$
consists of function symbols $f$ having a sort declaration 
$S_1 \times \cdots \times S_n \to S$. Here $\seq{S}$ and $S$ are sorts in
$\xS$ and $n$ is the \emph{arity} of $f$. By $f^{(n)}$ we indicate
that $f$ has arity $n$. Let $\xV$ be a countably infinite set of
variables, where every variable has its own sort.
We assume the existence of infinitely many variables of each sort.
\emph{Terms} of sort $S$ are inductively defined as usual: Every
variable of sort $S$ is a term of sort $S$ and if $f$ has sort
declaration $S_1 \times \cdots \times S_n \to S$ and $t_i$ is a term of
sort $S_i$ for all $1 \leqslant i \leqslant n$ then $f(\seq{t})$ is a term
of sort $S$. 
\emph{Ground terms} are terms without variables.
The \emph{root symbol} $\m{root}(t)$ of a term $t$ is $t$ if it is a
variable, and $f$ if $t = f(\seq{t})$.
For every sort $S$ we introduce a fresh constant $\Box_S$, called the
\emph{hole}. A term over $\xF \uplus \{ \Box_S \mid S \in \xS \}$ is a
\emph{context} over $\xF$ if it contains exactly one hole. Given a
context $C$ and a term $t$, we write $C[t]$ for the term resulting from
replacing the hole in $C$ by $t$.
We write $s \subterm t$ if $t = C[s]$ for some context $C$.  We write
$s \prsubterm t$ if $s \subterm t$ and $s \neq t$.
A mapping $\sigma$ that associates each variable to a
term of the same sort is a \emph{substitution} if its domain
$\SET{x \in \xV \mid \sigma(x) \neq x}$ is finite. The application
$t\sigma$ of $\sigma$ to a term $t$ is defined as $\sigma(t)$ if $t$ is a
variable and $f(t_1\sigma,\dots,t_n\sigma)$ if $t = f(\seq{t})$.
A binary relation $\to$ on terms is \emph{closed under substitutions} if
$s\sigma \to t\sigma$ whenever $s \to t$, for all substitutions
$\sigma$. It is
\emph{closed under contexts} if $C[s] \to C[t]$ whenever 
$s \to t$, for all contexts $C$. 
It has the \emph{subterm property} if the inclusion
${\prsuperterm} \subseteq {\to}$ holds.
Moreover, the relation $\to$ is said to be a \emph{rewrite relation}
if it is closed under contexts and substitutions.
\emph{Rewrite orders} are rewrite relations that are strict orders,
and \emph{reduction orders} are rewrite orders that are well-founded.

A \emph{rewrite rule} $\ell \R r$ consists of two terms $\ell$ and
$r$ of the same sort such that all variables in $r$ occur in $\ell$.
A (many-sorted) \emph{term rewrite system} (TRS) is a set of rewrite
rules. We denote by $\Rb[\xR]$ the smallest rewrite relation that
contains the pairs of the TRS $\xR$.
A rule $\ell \R r$ is \emph{non-collapsing} if $r$ is
not a variable. A TRS is called \emph{non-collapsing} if all rules
are non-collapsing.
A TRS $\xR$ is \emph{terminating} if $\Rb[\xR]$ is well-founded.

Let $\xF_\AC$ be a subset of the binary function symbols in
$\xF$ that have sort declarations of the form $S \times S \to S$. We
denote by $\AC$ the set of equations
\begin{align*}
f(f(x,y),z) &\approx f(x,f(y,z)) & f(x,y) &\approx f(y,x)
\end{align*}
expressing the associativity and commutativity of each $f \in \xF_\AC$.
Since equations in $\AC$ are rewrite
rules, we can view $\AC$ as a TRS. Using this fact, we
define the relation $\ac$
as the reflexive, transitive, and symmetric closure of $\Rb[\AC]$.
Let $\xR$ be a TRS. The relation $\ac \cdot \Rb[\xR] \cdot \ac$
is called AC rewriting and abbreviated by $\to_{\xR/\AC}$. We
say that $\xR$ is \emph{AC terminating} if
$\to_{\xR/\AC}$ is well-founded.
A reduction order $>$ is \emph{AC-compatible} if the inclusion
${\ac \cdot > \cdot \ac} \subseteq {>}$ holds. AC termination of a
TRS $\xR$ can be shown by finding an AC-compatible reduction order such
that $\xR \subseteq {>}$ holds.

The above definitions specialize to the usual unsorted setting
when the set of sorts is a singleton set.

Finally, we recall two order extensions. Let $>$ be a strict order on a
set $A$. The \emph{lexicographic extension} $>^\lex$ of $>$ is defined on
tuples over $A$ as follows: $(\seq[m]{a}) >^\lex (\seq{b})$ if $n = m$ and
there exists an index $1 \leqslant k \leqslant n$ such that $a_k > b_k$
and $a_i = b_i$ for all $i < k$. The \emph{multiset extension} $>^\mul$ of
$>$ is defined on multisets over $A$ as follows: $M >^\mul N$ if there
exist multisets $X$ and $Y$ such that $N = (M - X) \uplus Y$, 
$\varnothing \neq X \subseteq M$, and every $b \in Y$ admits an element
$a \in X$ with $a > b$.

\section{Encoding}
\label{sec:encoding}

First we give a formal account of the Hydra Battle.

\begin{defi}
\label{def:hydra}
To represent Hydras, we use a signature containing a constant
symbol $\h$ representing a head, a binary symbol $\si$ for siblings, and
a unary function symbol $\I$ representing the internal nodes. We use infix
notation for $\si$ and declare it to be an \textup{AC} symbol.
We write $\xTH$ for the set of ground terms over $\SET{\h,\I,{\si}}$.
\emph{Encodings} of Hydras are terms $t$ in $\xTH$ with 
$\RT{t} \in \SET{\h,\I}$.
\end{defi}

To improve readability we omit parentheses in terms with nested
$\si$ symbols in examples.

\begin{exa}
\label{exa:hydras}
The Hydras in the above example fight are represented by the terms
\begin{alignat*}{3}
H_0 &= \I(\I(\h) \,\si\, \I(\I(\I(\h) \si \I(\h))) \,\si\, \h) \\
H_1 &= \I(\I(\h) \,\si\, \I(\I(\I(\h) \si \h \si \h)) \,\si\, \h) \\
H_2 &= \I(\I(\h) \,\si\, \I(\I(\I(\h) \si \h) \si \I(\I(\h) \si \h) \si
\I(\I(\h) \si \h)) \,\si\, \h) \\
H_3 &= \I(\h \,\si\, \h \,\si\, \h \,\si\, \I(\I(\I(\h) \si \h) \si
\I(\I(\h)
\si \h) \si \I(\I(\h) \si \h)) \,\si\, \h \,\si\, \h) \\
H_4 &= \I(\h \,\si\, \h \,\si\, \I(\I(\I(\h) \si \h) \si \I(\I(\h) \si \h)
\si \I(\I(\h) \si \h)) \,\si\, \h \,\si\, \h)
\end{alignat*}
and they are encodings of Hydras. The term $\h \si \h$ is included in
$\xTH$ but not regarded as an encoding of a Hydra.
\end{exa}

\begin{defi}
\label{def:Rn}
Let $n$ be a natural number. The \textup{TRS} $\xR_n$ operates on
encodings of Hydras and consists of the following four rules:
\begin{align*}
\I(\I(\h)) &\nR{1} \I(\h^{n+2}) &
\I(\I(\h) \si y) &\nR{3} \I(\h^{n+2} \si y) \\
\I(\I(\h \si x)) &\nR{2} \I(\I(x)^{n+2}) &
\I(\I(\h \si x) \si y) &\nR{4} \I(\I(x)^{n+2} \si y)
\end{align*}
Here $t^k$ for $k \geqslant 1$ is defined inductively as follows:
\begin{align*}
t^k &= \begin{cases}
t &\text{if $k = 1$} \\
t^{k-1} \si t &\text{if $k > 1$}
\end{cases}
\end{align*}
The transition relation $\Rightarrow_n$ on encodings of Hydras is
defined as follows: $H \Rightarrow_n H'$ if
\begin{enumerate}
\Item
$H = \I(\h)$ and $H' = \h$, or
\Item
$H \ac \I(\h \si t)$ and $H' = \I(t)$ for some term $t$, or
\Item
$H \RRAC H'$.
\end{enumerate}
\end{defi}

That $H$ and $H'$ are encodings of successive Hydras at stages $n$ and
$n+1$ in a battle is expressed as $H \Rightarrow_n H'$. So fights with
Hydras are represented by finite or infinite sequences of the form
$H_0 \Rightarrow_0 H_1 \Rightarrow_1 H_2 \Rightarrow_2 \cdots$.

\begin{exa}[continued from \exaref{hydras}]
We have the following sequence:
\[
H_0 \,\Rightarrow_0\, H_1 \,\Rightarrow_1\, \cdots \,\Rightarrow_3\, H_4
\]
For instance, the first step is verified as follows: Let $\ell \to r$
be the third rule in $\xR_0$, and let
$C = \I(\I(\h) \,\si\, \I(\Box) \,\si\, \h)$ and
$\sigma = \SET{y \mapsto \I(\h)}$. Since $H_0 = C[\ell\sigma]$ and
$H_1 =_\AC C[r\sigma]$ hold, we have $H_0 \RRAC[0] H_1$ and so
$H_0 \Rightarrow_0 H_1$ is obtained.
\end{exa}

Now we present our TRS encoding of the Hydra Battle. We represent
natural numbers $n$ by $\m{s}^n(\m{0})$, which is abbreviated to $\m{n}$.

\begin{defi}
Let $\xF$ be the signature consisting of the constant $\m{0}$, the
unary symbol $\m{s}$, the five binary symbols $\m{A}\,\text{--}\,\m{E}$ as
well as the three symbols $\h$, $\I$, and $\si$ in \defref{hydra}.
The \textup{TRS} $\xH$ over $\xF$ consists of the following 14
rewrite rules:
\begin{align*}
\m{A}(n,\I(\h)) &\nR{1} \m{A}(\m{s}(n),\h) &
\m{D}(n,\I(\I(x))) &\nR{8} \I(\m{D}(n,\I(x))) \\
\m{A}(n,\I(\h \si x)) &\nR{2} \m{A}(\m{s}(n),\I(x)) &
\m{D}(n,\I(\I(x) \si y)) &\nR{9} \I(\m{D}(n,\I(x)) \si y) \\
\m{A}(n,\I(x)) &\nR{3} \m{B}(n,\m{D}(\m{s}(n),\I(x))) &
\m{D}(n,\I(\I(\h \si x) \si y)) &\nR{10} \I(\m{C}(n,\I(x)) \si y) \\
\m{C}(\m{0},x) &\nR{4} \m{E}(x) &
\m{D}(n,\I(\I(\h \si x))) &\nR{11} \I(\m{C}(n,\I(x))) \\
\m{C}(\m{s}(n),x) &\nR{5} x \si \m{C}(n,x) &
\m{D}(n,\I(\I(\h) \si y)) &\nR{12} \I(\m{C}(n,\h) \si y) \\
\I(\m{E}(x) \si y) &\nR{6} \m{E}(\I(x \si y)) &
\m{D}(n,\I(\I(\h))) &\nR{13} \I(\m{C}(n,\h)) \\
\I(\m{E}(x)) &\nR{7} \m{E}(\I(x)) &
\m{B}(n,\m{E}(x)) &\nR{14} \m{A}(\m{s}(n),x)
\end{align*}
\end{defi}

The Battle is started with the term $\m{A}(\m{0},H)$ where $H$
is the encoding of the initial Hydra. Rule 1 takes care of the dying Hydra
\btkz[ld=7,level 2/.style={sd},rotate=90]
\node[inner] at (0,0) {}
 child { node[leaf] {} };
\etkz\,. 
An application of the rule ends the battle with a term of
the form $\m{A}(n,\h)$. 
The $\h$ denotes here a dead Hydra; a Hydra with
only one head is represented by the term $\I(\h)$.
Rule 2 cuts a head without grandparent node, and so no
copying takes place. Due to the power of AC matching, the removed
head need not be the leftmost one. With rule 3, the search for locating a
head with grandparent node starts. The search is performed with the
auxiliary symbol $\m{D}$ and involves rules 8--13. When the head
to be cut is located (in rules 10--13), copying begins with the
auxiliary symbol $\m{C}$ and rules 4 and 5. The end of the copying phase
is signaled with $\m{E}$, which travels upwards with rules 6 and 7.
Finally, rule 14 creates the next stage of the Battle. Note that we make
extensive use of AC matching to simplify the search process.

\begin{thm}
\label{thm:simulation}
Let $n$ be a natural number. If $H \Rightarrow_n H'$ then
$\m{A}(\m{n},H) \RHAC[+] \m{A}(\msn,H')$.
\end{thm}

Before presenting the proof, we illustrate how 
AC rewriting of $\xH$ simulates fights with Hydras.

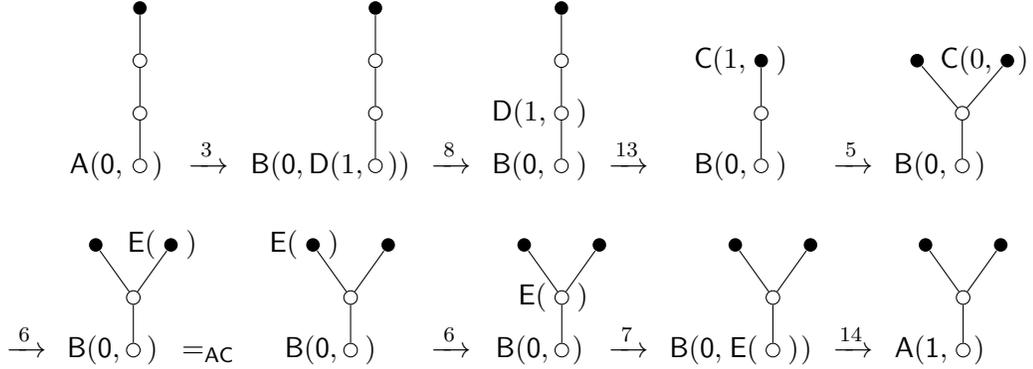
\begin{figure}
\[
\begin{array}{c@{\;\;}c@{\;\;}c@{\;\;}c@{\;\;}c@{\;\;}c@{\;\;}c@{\;\;}%
c@{\;\;}c@{\;\;}c}
&
\m{A}(\m{0},\,
\begin{tikzpicture}[H,baseline=(1.south)]
\node[inner] (1) at (0,0) {}
 child { node[inner] {} 
  child { node[inner] {} 
   child { node[leaf] {} } } };
\end{tikzpicture}\,)
&\nR{3}&
\m{B}(\m{0},\m{D}(1,
\begin{tikzpicture}[H,baseline=(1.south)]
\node[inner] (1) at (0,0) {}
 child { node[inner] (2) {} 
  child { node[inner] (3) {} 
   child { node[leaf] (4) {} } } };
\end{tikzpicture}\,))
&\nR{8}&
\m{B}(\m{0},\hh[-9]
\begin{tikzpicture}[H,baseline=(1.south)]
\node[inner] (1) at (0,0) {}
 child { node[inner] (2) {} 
  child { node[inner] (3) {} 
   child { node[leaf] (4) {} } } };
\node[left]  at (2) {$\m{D}(1,$\,};
\node[right] at (2) {\,$)$};
\end{tikzpicture}\hh[-2.8])
&\nR{13}&
\m{B}(\m{0},\hh[-9]
\begin{tikzpicture}[H,baseline=(1.south)]
\node[inner] (1) at (0,0) {}
 child { node[inner] (2) {} 
  child { node[leaf] (3) {} } } ;
\node[left] at (3) {$\m{C}(1,\,$};
\node[right] at (3) {\,$)$};
\end{tikzpicture}\hh[-2.8])
&\nR{5}&
\m{B}(\m{0},\hh[-5.5]
\begin{tikzpicture}[H,baseline=(1.south),
level 2/.style={sibling distance=12mm}]
\node[inner] (1) at (0,0) {}
 child { node[inner] (2) {} 
  child { node[leaf] (5) {} }
  child { node[leaf] (3) {} } } ;
\node[left] at (5) {$\m{C}(0,\,$};
\node[right] at (5) {$)$};
\end{tikzpicture}\hh[-8.5])
\\[3ex]
\nR{6}&
\m{B}(\m{0},\hh[-4.8]
\begin{tikzpicture}[H,baseline=(1.south),
level 2/.style={sibling distance=10mm}]
\node[inner] (1) at (0,0) {}
 child { node[inner] (2) {} 
  child { node[leaf] (5) {} }
  child { node[leaf] (3) {} }
 }
;
\node[left] at (5) {$\m{E}(\,$};
\node[right] at (5) {$\,)$};
\end{tikzpicture}\hh[-8])\;
&\ac&
\m{B}(\m{0},\hh[-11]
\begin{tikzpicture}[H,baseline=(1.south),
level 2/.style={sibling distance=10mm}]
\node[inner] (1) at (0,0) {}
 child { node[inner] (2) {} 
  child { node[leaf] (5) {} }
  child { node[leaf] (3) {} }
 }
;
\node[left] at (3) {$\m{E}(\,$};
\node[right] at (3) {$\,)$};
\end{tikzpicture}\hh[-4])
&\nR{6}&
\m{B}(\m{0},\hh[-6]
\begin{tikzpicture}[H,baseline=(1.south),
level 2/.style={sibling distance=10mm}]
\node[inner] (1) at (0,0) {}
 child { node[inner] (2) {} 
  child { node[leaf] (5) {} }
  child { node[leaf] (3) {} }
 }
;
\node[left] at (2) {$\m{E}(\,$};
\node[right] at (2) {$\,)$};
\end{tikzpicture}\hh[-4.5])
&\nR{7}&
\m{B}(\m{0},\hh[-1]
\begin{tikzpicture}[H,baseline=(1.south),
level 2/.style={sibling distance=10mm}]
\node[inner] (1) at (0,0) {}
 child { node[inner] (2) {} 
  child { node[leaf] (5) {} }
  child { node[leaf] (3) {} }
 }
;
\node[left] at (1) {$\m{E}(\,$};
\node[right] at (1) {$\,)$};
\end{tikzpicture}\hh[-2.2])
&\nR{14}&
\m{A}(\m{1},\hh[-4.5]
\begin{tikzpicture}[H,baseline=(1.south),
level 2/.style={sibling distance=10mm}]
\node[inner] (1) at (0,0) {}
 child { node[inner] (2) {} 
  child { node[leaf] (5) {} }
  child { node[leaf] (3) {} }
 }
;
\end{tikzpicture}\hh[-4.5])
\end{array}
\]
\caption{Rewriting from $\m{A}(\m{0},\I(\I(\I(\h))))$ to
$\m{A}(\m{1},\I(\I(\h \si \h)))$.}
\label{fig:illustration}
\end{figure}
\begin{exa}
\label{exa:fight}
Consider a fight with the Hydra of shape $\I(\I(\I(\h)))$. The fight
starts with the transition from $\I(\I(\I(\h)))$ to $\I(\I(\h \si \h))$.
This is simulated by the rewrite sequence
\begin{alignat*}{4}
&
&&\m{A}(\m{0},\I(\I(\I(\h))))
&&\nR{3} \m{B}(\m{0},\m{D}(\m{s}(\m{0}),\I(\I(\I(\h)))))
&&\nR{8} \m{B}(\m{0},\I(\m{D}(\m{s}(\m{0}),\I(\I(\h))))) \\
&\nR{13}
&&\m{B}(\m{0},\I(\I(\m{C}(\m{s}(\m{0}),\h))))
&&\nR{5} \m{B}(\m{0},\I(\I(\h \si \m{C}(\m{0},\h)))) 
&&\nR{4} \m{B}(\m{0},\I(\I(\h \si \m{E}(\h)))) \\
&\ac{}
&&\m{B}(\m{0},\I(\I(\m{E}(\h) \si \h)))
&&\nR{6} \m{B}(\m{0},\I(\m{E}(\I(\h \si \h))))
&&\nR{7} \m{B}(\m{0},\m{E}(\I(\I(\h \si \h)))) \\
&\nR{14}
&&\m{A}(\m{s}(\m{0}),\I(\I(\h \si \h))) 
\end{alignat*}
which is visualized in \figref{illustration}.
Rules 9--12 are variations of 8 and 13, which are used for handling
nodes that have siblings. To illustrate these, consider the first
step $H_0 \Rightarrow_0 H_1$
in the example fight in the introduction. The step is simulated by the
following rewrite sequence:
{\allowdisplaybreaks\begin{align*}
\m{A}(\m{0},H_0)
&\nR{3} \m{B}(\m{0},\m{D}(\m{s}(\m{0}),H_0)) \\
\ac \cdot\!
&\nR{9} \m{B}(\m{0},\I(\m{D}(\m{s}(\m{0}),\I(\I(\I(\h) \si \I(\h)))) \si
\I(\h) \si \h)) \\
&\nR{8} \m{B}(\m{0},\I(\I(\m{D}(\m{s}(\m{0}),\I(\I(\h) \si \I(\h)))) \si
\I(\h) \si \h)) \\
&\nR{12} \m{B}(\m{0},\I(\I(\I(\m{C}(\m{s}(\m{0}),\h) \si \I(\h))) \si
\I(\h) \si \h)) \\
&\nR{5} \m{B}(\m{0},\I(\I(\I(\h \si \m{C}(\m{0},\h) \si \I(\h))) \si
\I(\h) \si \h)) \\
&\nR{4} \m{B}(\m{0},\I(\I(\I(\h \si \m{E}(\h) \si \I(\h))) \si \I(\h)
\si \h)) \\
\ac \cdot\!
&\nR{6} \m{B}(\m{0},\I(\I(\m{E}(\I(\h \si \h \si \I(\h)))) \si \I(\h)
\si \h)) \\
&\nR{7} \m{B}(\m{0},\I(\m{E}(\I(\I(\h \si \h \si \I(\h)))) \si \I(\h)
\si \h)) \\
&\nR{6} \m{B}(\m{0},\m{E}(\I(\I(\I(\h \si \h \si \I(\h)))
\si \I(\h) \si \h))) \\
&\nR{14} \m{A}(\m{s}(\m{0}),\I(\I(\I(\h \si \h \si \I(\h)))
\si \I(\h) \si \h)) \,\ac\, \m{A}(\m{s}(\m{0}),H_1)
\end{align*}}
\end{exa}

It is important to note that the TRS $\xH$ defined above is
\emph{unsorted} and we establish in this article the result that it is AC
terminating on all terms. When simulating a battle, like in the statement
of the \thmref{simulation}, we deal with well-behaved terms adhering to
the sort discipline introduced shortly. The restriction to sorted terms is
crucial for our termination proof, but entails no loss of generality. This
is due to the following result, which is a special case of
\cite[Corollary~3.9]{MO00}.

\begin{thm}
\label{thm:persistency}
A non-collapsing \textup{TRS} over a many-sorted signature is
\textup{AC} terminating if and only if the corresponding \textup{TRS} over
the unsorted version of the signature is \textup{AC} terminating.
\qed
\end{thm}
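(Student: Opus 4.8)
The plan is to read this as a \emph{persistence} statement: AC termination is preserved when passing between the many-sorted signature and its unsorted collapse. Write $\xR$ for the many-sorted TRS and let $\xR^{\m u}$ denote the same set of rules over the unsorted signature. One implication is immediate. Every well-sorted term is an unsorted term, the AC equations $f(f(x,y),z)\approx f(x,f(y,z))$ and $f(x,y)\approx f(y,x)$ are sort-preserving, and every well-sorted context and substitution is in particular an unsorted one; hence sorted AC rewriting $\to_{\xR/\AC}$ is a sub-relation of unsorted AC rewriting $\to_{\xR^{\m u}/\AC}$. Consequently an infinite sorted AC-rewrite sequence is also an infinite unsorted one, so AC termination of $\xR^{\m u}$ implies AC termination of $\xR$.

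For the converse I argue by contraposition: from an infinite unsorted sequence $t_0 \to_{\xR^{\m u}/\AC} t_1 \to_{\xR^{\m u}/\AC} \cdots$ I extract an infinite sorted one. The tool is the layering of an unsorted term $t$ into a well-sorted \emph{cap} and maximal well-sorted \emph{aliens}. Descending from the root while the sort declarations are respected, and stopping at each position where the sort of a subterm clashes with the sort expected by its parent, yields a well-sorted context whose holes sit exactly at the clash positions; the maximal well-sorted subterms hanging there are the aliens, into which one recurses, assigning to $t$ a finite \emph{rank} (its number of nested layers). Replacing each alien by a fresh variable of the sort \emph{expected} at its position produces a genuinely well-sorted abstraction $\hat t$. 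Because every symbol carries a fixed sort declaration this decomposition is canonical, and it is invariant under $\ac$: associativity and commutativity only permute arguments of a symbol $f$ with declaration $S\times S\to S$, acting within a single layer and permuting equal-sort aliens without creating or removing clashes, so the map $\hat{\cdot}$ is compatible with $\ac$.

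The key observations are then that a step contracting a redex inside a single alien leaves $\hat t$ unchanged, whereas a step in the cap — where the images of the rule variables are whole aliens — projects to a genuine sorted step $\hat t \to_{\xR/\AC} \hat t'$. This is exactly where \emph{non-collapsing} is needed: a collapsing rule $\ell \R x$ could lift an alien into a cap position expecting a different sort, fusing two layers and breaking the projection; with collapsing rules forbidden the layering is never fused and the rank is non-increasing along the derivation. Taking a non-terminating term of minimal rank — so that all its aliens, having smaller rank, are AC terminating — and considering an infinite derivation issuing from it, a dichotomy follows: if infinitely many steps take place in the cap, their projections form an infinite sorted AC-rewrite sequence from $\hat t$, contradicting AC termination of $\xR$; otherwise all steps past some point lie inside aliens, and an infinite pigeonhole over the finitely many (AC-stable) alien positions forces an infinite derivation inside one fixed alien, contradicting minimality of the rank.

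The main obstacle I anticipate is the bookkeeping linking AC matching with the layering, rather than any isolated difficulty. One must check that matching a non-collapsing left-hand side modulo AC against an unsorted term confines the redex to a single layer — so that the flattened AC arguments captured by a variable are whole aliens of the common sort $S$ — and that an $\ac$-step alters the cap/alien decomposition only by permuting equal-sort aliens, keeping the alien positions stable enough for the pigeonhole step. Once these invariants are in place the rank induction is routine. For our concrete, almost single-sorted signature they are immediate; in the statement we therefore simply invoke the general persistence theorem \cite[Corollary~3.9]{MO00}.
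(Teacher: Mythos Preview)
The paper does not prove this theorem at all: it is stated as a special case of \cite[Corollary~3.9]{MO00} and left at that. Your proposal ultimately defers to the very same reference, so in that sense you match the paper exactly; the cap/alien layering sketch you add on top is a faithful outline of the standard persistence argument behind that corollary and is not needed here, but it does no harm.
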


The idea of using sorts to simplify termination proof goes back to
Zantema~\cite{Z94}.
The TRS $\xH$ can be seen as a TRS over the many-sorted signature $\xF'$: 
\begin{align*}
\h &: \m{O} &
\I, \m{E} &: \m{O} \to \m{O} &
{\si} &: \m{O} \times \m{O} \to \m{O} &
\m{A}, \m{B} &: \m{N} \times \m{O} \to \m{S} \\
\m{0} &: \m{N} &
\m{s} &: \m{N} \to \m{N} &&&
\m{C}, \m{D} &: \m{N} \times \m{O} \to \m{O}
\end{align*}
where $\m{N}$, $\m{O}$ and $\m{S}$ are sort symbols. Since $\xH$ is
non-collapsing, \thmref{persistency} guarantees that AC termination of
$\xH$ follows from AC termination of well-sorted terms over $\xF'$.

In the remainder of this section we present a proof of
\thmref{simulation} and its converse.

\begin{lem}
\label{lem:copy}
If $n > 0$ then $\m{C}(\m{n},t) \RHAC t^n \si \m{E}(t)$ for all
terms $t$.
\end{lem}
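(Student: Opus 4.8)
The plan is to argue by induction on $n \geqslant 1$, using only the two rules governing $\m{C}$, namely rule 5 ($\m{C}(\m{s}(n),x) \nR{5} x \si \m{C}(n,x)$) and rule 4 ($\m{C}(\m{0},x) \nR{4} \m{E}(x)$), together with the fact that $\si$ is an \textup{AC} symbol.

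For the base case $n = 1$ I would simply compute $\m{C}(\m{1},t) = \m{C}(\m{s}(\m{0}),t) \nR{5} t \si \m{C}(\m{0},t) \nR{4} t \si \m{E}(t)$ and observe that $t^1 = t$ by the definition of $t^k$ in \defref{Rn}, so that $t \si \m{E}(t)$ is precisely $t^1 \si \m{E}(t)$ and the claim holds.

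For the inductive step with $n > 1$, I would set $m = n - 1 \geqslant 1$, so that $\m{n} = \m{s}(\m{m})$ and the induction hypothesis gives $\m{C}(\m{m},t) \RHAC t^m \si \m{E}(t)$. A single application of rule 5 at the root yields $\m{C}(\m{n},t) \nR{5} t \si \m{C}(\m{m},t)$; since AC rewriting is closed under contexts (here the context $t \si \Box$), the induction hypothesis then gives $t \si \m{C}(\m{m},t) \RHAC t \si (t^m \si \m{E}(t))$. Finally, because $\si$ is associative and commutative and $t^n = t^m \si t$ by \defref{Rn}, we have $t \si (t^m \si \m{E}(t)) \ac (t^m \si t) \si \m{E}(t) = t^n \si \m{E}(t)$. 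Chaining the three parts gives $\m{C}(\m{n},t) \RHAC t^n \si \m{E}(t)$, which completes the induction.

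I do not expect any real obstacle. The two points that need a moment's attention are that the concluding AC rearrangement genuinely requires commutativity of $\si$ --- the definition of $t^k$ appends the fresh copy of $t$ on the \emph{right} --- and that the trailing $\ac$ step is harmless because it is preceded by actual $\xH$-steps and is therefore absorbed into $\RHAC$.
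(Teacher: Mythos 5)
Your proof is correct and follows essentially the same route as the paper: induction on $n$, rule 5 to peel off one copy of $t$, rule 4 to terminate the recursion, and an AC rearrangement to match the definition of $t^n$. The only difference is cosmetic (you phrase the step as $n>1$ with $m=n-1$ rather than $n\mapsto n+1$), and your base case actually corrects a small slip in the paper, which cites rule 6 instead of rule 4 for the step $\m{C}(\m{0},t) \to \m{E}(t)$.
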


\begin{proof}
We use induction on $n$. If $n = 1$ then
\[
\m{C}(\m{n},t) \nR{5} t \si \m{C}(\m{0},t) \nR{6} t \si \m{E}(t) =
t^n \si \m{E}(t)
\]
Suppose the result holds for $n \geqslant 1$ and consider $n + 1$. The
induction hypothesis yields $\m{C}(\m{n},t) \RHAC t^n \si \m{E}(t)$. Hence
\[
\m{C}(\msn,t) \nR{5} t \si \m{C}(\m{n},t) \RHAC t \si
(t^n \si \m{E}(t)) \ac t^{n+1} \si \m{E}(t)
\qedhere
\]
\end{proof}

\begin{lem}
\label{lem:cut}
Let $n$ be a natural number. If $H \Rightarrow_n H'$
then $\m{D}(\msn,H) \RHAC \m{E}(H')$.
\end{lem}

\begin{proof}
We use structural induction on $H$ and consider the following two cases. 
\begin{itemize}
\item
First suppose $H \RRAC H'$ is a root step. If the first rule of $\xR_n$ is
used then $H = \I(\I(\h))$ and $H' \ac \I(h^{n+2})$. We have
$\m{D}(\msn,H) \nR{13} \I(\m{C}(\msn,\h))$. Using
\lemref{copy} we obtain
\begin{align*}
\I(\m{C}(\msn,\h)) &\RHAC \I(\h^{n+1} \si \m{E}(\h))
\ac \cdot \nR{6} \m{E}(\I(\h \si \h^{n+1})) \ac \m{E}(H')
\end{align*}
If the second rule of $\xR_n$ is used then $H \ac \I(\I(\h \si t))$ and
$H' \ac \I(\I(t)^{n+2})$ for some term $t$. We have
$\m{D}(\msn,H) \ac \cdot \nR{11} \I(\m{C}(\msn,\I(t)))$. Using
\lemref{copy} we obtain
\[
\I(\m{C}(\msn,\I(t))) \RHAC \I(\I(t)^{n+1} \si \m{E}(\I(t)))
\ac \cdot \nR{6} \m{E}(\I(\I(t) \si \I(t)^{n+1})) \ac \m{E}(H')
\]
If the third rule of $\xR_n$ is used then
$H \ac \I(\I(\h) \si t)$ and $H' \ac \I(\h^{n+2} \si t)$ for
some term $t$. We have
$\m{D}(\msn,H) \ac \cdot \nR{12}
\I(\m{C}(\msn,\h) \si t)$. The
remaining
argument is the same as in the preceding cases. If the fourth
rule of $\xR_n$ is used then $H \ac \I(\I(\h \si s) \si t)$ and
$H' \ac \I(\I(s)^{n+2} \si t)$ for some terms $s$ and $t$. Using
\lemref{copy} we obtain
\begin{align*}
\m{D}(\msn,H) &\ac \cdot \nR{10} \I(\m{C}(\msn,\I(s)) \si t)
\RHAC \I((\I(s)^{n+1} \si \m{E}(\I(s))) \si t) \\
&\ac \cdot \nR{6} \m{E}(\I(\I(s) \si (\I(s)^{n+1} \si t))) \ac
\m{E}(H')
\end{align*}
\item
Otherwise, $H \ac \I(H_1 \si H_2 \si \cdots \si H_m)$ and
$H' \ac \I(H_1' \si H_2 \si \cdots \si H_m)$ for some $m \geqslant 1$
and Hydras $\seq[m]{H}, H_1'$ with $H_1 \RRAC H_1'$. We obtain
$\m{D}(\msn,H_1) \RHAC \m{E}(H_1')$ from the induction hypothesis.
Note that $\m{root}(H_1) = \I$. If $m = 1$ then
\begin{align*}
\m{D}(\msn,H) &\ac \m{D}(\msn,\I(H_1))
\nR{8} \I(\m{D}(\msn,H_1)) \RHAC \I(\m{E}(H_1'))
\nR{7} \m{E}(\I(H_1')) \\
&\ac \m{E}(H')
\end{align*}
and if $m > 1$ we reach the same conclusion using rules 9 and 6 instead
of 8 and 7.
\qedhere
\end{itemize}
\end{proof}

\begin{proof}[Proof of \thmref{simulation}]
Our task is to show
\[
\m{A}(\m{n},H) \RHAC \m{A}(\msn,H')
\]
If $H \Rightarrow_n H'$ is derived from condition (1) or (2) in
\defref{Rn}, the claim is immediate by rules 1 and 2 of $\xH$. Otherwise,
$H \RRAC H'$. This implies $\m{root}(H) = \I$. Using rules $3$ and
$14$ together with \lemref{cut} yields
\[
\m{A}(\m{n},H) \nR{3} \m{B}(\m{n},\m{D}(\msn,H))
\RHAC \m{B}(\m{n},\m{E}(H')) \nR{14} \m{A}(\msn,H')
\qedhere
\]
\end{proof}

In the remaining part of this section we prove the converse of
\thmref{simulation}.

\begin{thm}
\label{thm:converse}
Let $H, H' \in \xTH$ be encodings of Hydras and let $n$ be a natural
number. If $\m{A}(\m{n},H) \RHAC \m{A}(\msn,H')$ then
$H \Rightarrow_n H'$.
\end{thm}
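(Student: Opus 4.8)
The plan is to read off, from the sort discipline alone, the possible \emph{shapes} of a rewrite sequence $\m{A}(\m{n},H) \RHAC \m{A}(\m{n{+}1},H')$ and then match each shape against one of the three clauses of \lemref{successive Hydras}. First I would use the sort preservation underlying \thmref{persistency}: every rule of $\xH$ relates terms of equal sort and $\ac$ preserves sorts, so, starting from the well-sorted term $\m{A}(\m{n},H)$ over $\xF'$, every term in the sequence is well-sorted. In particular every intermediate term of sort $\m{S}$ has the form $\m{A}(\m{m},t)$ or $\m{B}(\m{m},t)$ with $t$ of sort $\m{O}$ and first argument a numeral, and this numeral can change only at a root step: rules~1 and~2 keep the root $\m{A}$ and increment it, rule~3 turns $\m{A}$ into $\m{B}$ leaving it unchanged, rule~14 turns $\m{B}$ back into $\m{A}$ and increments it, and no other rule applies at the root of a sort-$\m{S}$ term (no numeral contains a redex either). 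Hence the root steps of the sequence split into blocks, each block being a single rule-1 step, a single rule-2 step, or a rule-3 step together with the matching rule-14 step; each block increments the numeral by one, and since it goes from $\m{n}$ to $\m{n{+}1}$ there is exactly one block.

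This yields three cases. If the block is a single rule-1 step then $H = \I(\h)$ and the step produces $\m{A}(\m{n{+}1},\h)$, which is an $\RHAC$-normal form (its $\ac$-class is a singleton and no rule applies), so the sequence ends there and $H' = \h$: this is clause~(1) of \lemref{successive Hydras}. If the block is a single rule-2 step then $H \ac \I(\h \si t)$ and the step produces $\m{A}(\m{n{+}1},\I(t))$ with $t \iN \xTH$; no further step is then possible, since a non-root step would require a $\m{C}$-, $\m{D}$- or $\I(\m{E}(\cdot))$-redex, absent from $\xTH$, while a further root step would eventually push the numeral strictly above $\m{n{+}1}$, which can never be undone. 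So $H' \ac \I(t)$, which is clause~(2).

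In the remaining case the block is a rule-3 step followed by the matching rule-14 step. Rule~3 gives $\m{A}(\m{n},H) \to \m{B}(\m{n},\m{D}(\m{n{+}1},H))$ with $\RT{H} = \I$; the numeral stays $\m{n}$ until rule~14, all steps in between take place in the second argument of $\m{B}$, and rule~14 can fire only once that argument has root $\m{E}$. Hence $\m{D}(\m{n{+}1},H) \RHAC \m{E}(u)$ with $u \ac H'$ (and $u \iN \xTH$, since otherwise leftover $\m{C}$, $\m{D}$ or $\m{E}$ would eventually surface at the top of the second argument, contradicting $H' \iN \xTH$). The theorem thus reduces to the converse of \lemref{cut}: if $\m{D}(\m{n{+}1},w) \RHAC \m{E}(u)$ with $w \iN \xTH$ then $w \RRAC u$. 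This I would prove by structural induction on $w$, paralleling the proof of \lemref{cut}: rules~8 and~9 merely recurse under an $\I$ (handled by the induction hypothesis plus a context step and rules~7 or~6 for the returning $\m{E}$), and the sequence must eventually apply one of the committing rules~10--13, replacing $\m{D}(\m{n{+}1},\cdot)$ by $\I(\m{C}(\m{n{+}1},p))$ or $\I(\m{C}(\m{n{+}1},p) \si y)$. At that point one invokes the converse of \lemref{copy}: the symbol $\m{E}$ is created only when the counter of $\m{C}$ reaches $\m{0}$ (rule~4), so rule~5 must fire exactly $n+1$ times before rule~4; rules~6 and~7 then move the resulting single $\m{E}$ strictly upward; and together these force the unfolding of $\m{C}(\m{n{+}1},p)$, once wrapped in the surrounding $\I$'s and with $\m{E}$ bubbled to the top, to match the right-hand side of the corresponding rule of $\xR_n$, with replication factor necessarily $n+2$. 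Which head and which siblings are selected is immaterial, as $\xR_n$ too is taken modulo $\ac$. Consequently $H \RRAC u \ac H'$, i.e.\ clause~(3).

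The step I expect to be the main obstacle is this last one: showing that the \emph{a priori} highly nondeterministic interleaving of the $\m{D}$-search, the $\m{C}$-copying and the $\m{E}$-bubbling is in fact forced — modulo $\ac$ and modulo commutation of independent redexes — to realize exactly one of the four rules of $\xR_n$, and in particular that the number of copies produced is precisely $n+2$; everything else is a routine root-rule case analysis plus bookkeeping on the numeral. In all three cases $H$ and $H'$ meet one of the clauses of \lemref{successive Hydras} and hence are successive Hydras in a battle.
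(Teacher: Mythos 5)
Your reduction to the three clauses of \lemref{successive Hydras} and your numeral/root-symbol bookkeeping match the paper's: both arguments conclude that the sequence is either a single root application of rule 1 or 2, or has the shape
$s \RHnAC{2} \cdot \RHnAC[]{3} \cdot \RHnAC{2} t$ with exactly one ``committing'' step from $\xH_3 = \SET{10,11,12,13}$ sandwiched between steps from $\xH_2 = \SET{3,4,5,6,7,8,9,14}$. Where you genuinely diverge is in how the hard case is finished. The paper does \emph{not} prove a converse of \lemref{cut} by structural induction; instead it introduces the reachable-term set $U$ (\lemref{U}) and a projection $\pi$ whose key clause $\pi(\m{C}(\m{m},u)) = u^{m+1}$ anticipates the copies that $\m{C}$ will eventually produce, and then shows in \lemref{projection} that every AC step and every $\xH_2$-step leaves $\pi$ fixed up to $\ac$, while every $\xH_3$-step projects to exactly one $\xR_n/\AC$ step. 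This makes the argument local --- one rewrite step at a time --- and completely removes the need to reason about how the $\m{D}$-search, the $\m{C}$-unfolding and the $\m{E}$-bubbling interleave, which is precisely the point you flag as ``the main obstacle'' and leave as a sketch. Your alternative (a converse of \lemref{cut} fed by a converse of \lemref{copy}, proved by induction on the Hydra) is plausibly completable, since at any moment the term contains at most one occurrence of $\m{C}$, $\m{D}$ or $\m{E}$ and the steps acting on it are essentially forced; but establishing that invariant, that the surrounding context is preserved until the final rule-6/7 steps, and that the replication factor is exactly $n+2$, is the entire technical content of the theorem rather than ``routine bookkeeping.'' If you pursue your route, prove the one-special-symbol invariant first; otherwise the cleaner fix is to adopt the projection $\pi$, which packages all of that into a single step-wise lemma.
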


In order to show the claim we need a few auxiliary lemmata.
Let $\xCH$ be the set of ground contexts over $\SET{\h,\I,{\si}}$.

\begin{defi}
We define $U$ as the set consisting of all terms of the forms
$\m{A}(\m{n},t)$, 
$\m{B}(\m{n},C[\m{C}(\m{m},t)])$,
$\m{B}(\m{n},C[\m{D}(\msn,t)])$, and
$\m{B}(\m{n},C[\m{E}(t)])$,
where $n, m \in \NN$, $t \in \xTH$, and $C \in \xCH$.
\end{defi}

The set $U$ contains all terms reachable from $\m{A}(\m{n},H)$.

\begin{lem}
\label{lem:U}
If $t \in U$ and $t \Rab[\xH\hh\cup\hh\AC]{*} u$ then $u \in U$.
\end{lem}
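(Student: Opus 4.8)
The plan is to reduce to a single step: by induction on the length of $t \Rab[\xH \cup \AC]{*} u$ it suffices to show that $U$ is closed under one $\Rb[\xH \cup \AC]$-step. So I would take $t \in U$ with $t \Rb[\xH \cup \AC] u$ and perform a case analysis on which of the four shapes in the definition of $U$ applies to $t$, refined by the rule used and the position of the redex.

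The observation that keeps this analysis short is that the left-hand side of \emph{every} rule of $\xH$ contains at least one of the symbols $\m{A}$, $\m{B}$, $\m{C}$, $\m{D}$, $\m{E}$. Hence an $\xH$-redex can occur neither inside a term of $\xTH$, nor strictly inside a context of $\xCH$, nor inside a numeral, which pins the redex to a tightly controlled location. Together with the trivial facts that $\xTH$ is closed under $\ac$ and that $\xCH$ is closed under $C \mapsto C[\I(\Box)]$, under $C \mapsto C[\Box \si s]$ and $C \mapsto C[s \si \Box]$ with $s \in \xTH$, and under $\ac$-rearrangements of the $\si$-symbols around its hole, the shape is preserved in every case. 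Concretely: for $t = \m{A}(\m{n},s)$ the step is an $\ac$-step in $s$ or a root step by rule 1, 2, or 3; rules 1 and 2 again yield an $\m{A}$-term, and rule 3 yields $\m{B}(\m{n},\m{D}(\m{n{+}1},\I(s')))$, i.e.\ a $\m{D}$-shape with $C = \Box$. For $t = \m{B}(\m{n},C[\m{C}(\m{m},s)])$ rule 14 cannot fire at the root (the root of $C[\m{C}(\m{m},s)]$ lies in $\{\I,{\si},\m{C}\}$, never $\m{E}$), so the $\xH$-steps act on the displayed $\m{C}$-subterm: rule 4 produces the $\m{E}$-shape, and rule 5 produces $\m{B}(\m{n},C[s \si \m{C}(\m{m}',s)])$ with $\m{m} = \m{s}(\m{m}')$, a $\m{C}$-shape whose context $C[s \si \Box]$ is again in $\xCH$; $\ac$-steps stay inside $s$, stay inside $C$, or reshuffle the $\si$'s around the hole. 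The case $t = \m{B}(\m{n},C[\m{D}(\m{n{+}1},s)])$ is analogous: rules 8 and 9 keep a $\m{D}(\m{n{+}1},\cdot)$-subterm after moving an $\I(\cdot)$ or $\I(\cdot \si s')$ into the context, while rules 10--13 turn it into a $\m{C}(\m{n{+}1},\cdot)$-subterm — legitimate because $U$ places no constraint on the first argument of $\m{C}$ — and the first argument of $\m{D}$ is passed on unchanged, hence stays a successor. Finally, for $t = \m{B}(\m{n},C[\m{E}(s)])$, rule 14 fires precisely when $C = \Box$, giving $\m{A}(\m{n{+}1},s)$, while rules 6 and 7 lift the unique $\m{E}$-occurrence past one $\si$ or one $\I$, absorbing the remainder into the context.

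I expect the main obstacle to be the uniform treatment of the $\ac$-steps inside $C[v]$, where $v$ is one of $\m{C}(\m{m},s)$, $\m{D}(\m{n{+}1},s)$, $\m{E}(s)$: one has to note that, because the root of $v$ is not $\si$, in any instance of an associativity or commutativity equation for $\si$ the subterm $v$ sits at a variable position, so the step merely rearranges the surrounding $\si$-structure and leaves $v$ intact (up to an $\ac$-step inside $s$); the outcome is again $C'[v']$ with $C' \in \xCH$ and $v'$ of the same shape. The remaining work is a finite, routine inspection of the 14 rules.
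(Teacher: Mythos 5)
Your proof is correct and is precisely the routine case analysis that the paper omits (the lemma is stated with its proof suppressed as immediate): the key observations — every left-hand side of $\xH$ contains one of $\m{A},\m{B},\m{C},\m{D},\m{E}$, so redexes are pinned to the unique marked subterm, and $\ac$-steps move that subterm intact because its root is not $\si$ — are exactly what make the shape invariant go through. No gaps.
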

\begin{proof}
The claim is easily shown by induction on the length of 
$t \Rab[\xH\hh\cup\hh\AC]{*} u$.
\end{proof}

In order to analyze the rewrite sequence
$\m{A}(\m{n},H) \RHAC \m{A}(\msn,H')$ we define three subsets
of $\xH$: $\xH_1 = \SET{1,2}$, $\xH_2 = \SET{3\,\text{--}\,9,14}$, and
$\xH_3 = \SET{10\,\text{--}\,13}$. The second rewrite sequence in
\exaref{fight} can then be described as follows:
\begin{align*}
\m{A}(\m{0},H_0)
&\RHnAC{2} \m{B}(\m{0},\I(\I(\m{D}(\m{s}(\m{0}),\I(\I(\h) \si \I(\h))))
\si \I(\h) \si \h)) \\
&\RHnAC[]{3} \m{B}(\m{0},\I(\I(\I(\m{C}(\m{s}(\m{0}),\h) \si \I(\h))) \si
\I(\h) \si \h)) \\
&\RHnAC{2} \m{A}(\m{1},H_1)
\end{align*}

\begin{defi}
We define $V$ as the extension of $U$ with $\xTH$ and all terms of the
forms $C[\m{C}(\m{n},t)]$ $C[\m{D}(\m{n},t)]$, and $C[\m{E}(t)]$
where $n \in \NN$, $t \in \xTH$, and $C \in \xCH$.
The mapping $\pi\colon V \to \xTH$ is defined as follows:
\[
\pi(t) = \begin{cases}
\h &\text{if $t = \h$} \\
\I(\pi(u)) &\text{if $t = \I(u)$} \\
\pi(u) \si \pi(v) &\text{if $t = u \si v$} \\
u &\text{if $t = \m{A}(\m{n},u)$ or $t = \m{D}(\m{n},u)$ or
$t = \m{E}(u)$} \\
\pi(u) &\text{if $t = \m{B}(\m{n},u)$} \\
u^{n+1} &\text{if $t = \m{C}(\m{n},u)$}
\end{cases}
\]
\end{defi}

Taking the role of $\m{C}$ into account, the mapping $\pi$ computes the
Hydra in a given term. Applying $\pi$ to the terms in the above rewrite
sequence of $\xH_2/\AC$ and $\xH_3/\AC$, we obtain
\begin{align*}
H_0 = \pi(\m{A}(\m{0},H_0)) \ac {}&
\pi(\m{B}(\m{0},\I(\I(\m{D}(\m{s}(\m{0}),\I(\I(\h) \si \I(\h)))) \si
\I(\h) \si \h))) \\
\RRAC[0] {}&
\pi(\m{B}(\m{0},\I(\I(\I(\m{C}(\m{s}(\m{0}),\h) \si \I(\h))) \si \I(\h)
\si \h))) \\
\ac {}& \pi(\m{A}(\m{1},H_1)) = H_1
\end{align*}
This verifies that $H_1$ is a successor of $H_0$. 

\begin{lem}
\label{lem:pi}
The following properties hold.
\begin{enumerate}
\item
$\pi(t) = t$ for all terms $t \in \xTH$,
\item
$\pi(C[t]) = C[\pi(t)]$ for all terms $t \in V$ and contexts
$C \in \xCH$,
\item
$\pi(C[t]) \ac \pi(D[u])$ for all terms $t, u \in \xTH$ and
contexts $C, D \in \xCH$ with $t \ac u$ and $C \ac D$.
\end{enumerate}
\end{lem}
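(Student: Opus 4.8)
The plan is to prove the three items in order, each by a straightforward structural induction following the recursive definition of $\pi$. For item~(1), I would argue by induction on the structure of a term $t \in \xTH$. Since $\xTH$ consists only of ground terms built from $\h$, $\I$, and $\si$, the only applicable clauses in the definition of $\pi$ are the first three, and each of them mirrors the corresponding term constructor: $\pi(\h) = \h$, $\pi(\I(u)) = \I(\pi(u))$, and $\pi(u \si v) = \pi(u) \si \pi(v)$. Applying the induction hypothesis to the proper subterms $u$ (and $v$) immediately yields $\pi(t) = t$. No case for $\m{A}$, $\m{B}$, $\m{C}$, $\m{D}$, or $\m{E}$ arises because such symbols do not occur in terms of $\xTH$.

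For item~(2), I would fix $t \in V$ and perform induction on the context $C \in \xCH$. In the base case $C = \Box$ we have $C[t] = t$ and $C[\pi(t)] = \pi(t)$, so the claim is trivial. For the inductive step, $C$ has the form $\I(C')$, $C' \si s$, or $s \si C'$ with $s \in \xTH$ and $C' \in \xCH$ a smaller context. In the case $C = \I(C')$ we compute $\pi(C[t]) = \pi(\I(C'[t])) = \I(\pi(C'[t]))$ using the second clause of $\pi$, then apply the induction hypothesis to get $\I(C'[\pi(t)]) = C[\pi(t)]$. The two $\si$-cases are analogous, using the third clause of $\pi$ together with item~(1) to rewrite $\pi(s)$ as $s$. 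The only subtlety is that $C'[t]$ must again lie in $V$ so that the outer $\pi$-clause is applicable; this is clear since $V$ is closed under the relevant context formation, and the "interesting" subterm of $C[t]$ with an auxiliary root symbol is exactly the one inside $C'[t]$.

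Item~(3) is the part I expect to need the most care, since it concerns the equational theory $\ac$ rather than syntactic identity. Here $t, u \in \xTH$ with $t \ac u$ and $C, D \in \xCH$ with $C \ac D$, and the goal is $\pi(C[t]) \ac \pi(D[u])$. By item~(2) this reduces to showing $C[\pi(t)] \ac D[\pi(u)]$, and by item~(1) to $C[t] \ac D[u]$. The latter follows because $\ac$ is a congruence (closed under contexts): from $t \ac u$ we get $C[t] \ac C[u]$, and from $C \ac D$ — viewing the hole as a fresh constant, so that $C \ac D$ as terms over $\xF \uplus \{\Box_S\}$ — we get $C[u] \ac D[u]$ by substituting $u$ for the hole, which is legitimate since $\ac$ is closed under substitutions. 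Chaining these two steps gives $C[t] \ac D[u]$, and transitivity of $\ac$ finishes the argument. The one point to keep in mind is that $\si$ is the AC symbol, so the congruence steps are exactly the ones sanctioned by the axioms in $\AC$; everything here is routine once items~(1) and~(2) are in place.
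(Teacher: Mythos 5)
Your proposal is correct and follows essentially the same route as the paper: structural induction on $t$ for item (1), induction on the context $C$ for item (2) using item (1) for the side terms, and for item (3) the chain $\pi(C[t]) = C[t] \ac D[u] = \pi(D[u])$ obtained from items (1) and (2) together with the congruence/substitution closure of $\ac$. The only cosmetic difference is the order of the two congruence steps in item (3) ($C[t] \ac C[u] \ac D[u]$ versus the paper's $C[t] \ac D[t] \ac D[u]$), which is immaterial.
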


\begin{proof}
The first statement is proved by induction on $t \in \xTH$. If $t = \h$
then $\pi(t) = \h = t$. If $t = \I(u)$ with $u \in \xTH$ then
$\pi(t) = \I(\pi(u)) = \I(u) = t$. If $t = u \si v$ with $u, v \in \xTH$
then $\pi(t) = \pi(u) \si \pi(v) = u \si v = t$.
For the second statement we use induction on the context $C \in \xCH$.
If $C = \Box$ then $\pi(C[t]) = \pi(t) = C[\pi(t)]$. If $C = \I(D)$ then
$\pi(C[t]) = \I(\pi(D[t])) = \I(D[\pi(t)]) = C[\pi(t)]$. If $C = D \si u$
then $D \in \xCH$ and $u \in \xTH$ and thus
$\pi(C[t]) = \pi(D[t]) \si \pi(u) = D[\pi(t)] \si u = C[\pi(t)]$. If
$C = u \si D$ then $D \in \xCH$ and $u \in \xTH$ and thus
$\pi(C[t]) = \pi(u) \si \pi(D[t]) = u \si D[\pi(t)] = C[\pi(t)]$.
The third statement follows from statements (1) and (2):
$\pi(C[t]) = C[\pi(t)] = C[t] \ac D[t] \ac D[u] = D[\pi(u)] = \pi(D[u])$.
\end{proof}

The following lemma relates AC rewriting of $\xH$ to rewriting of Hydras
according to \defref{Rn}.

\begin{lem}
\label{lem:projection}
The following statements hold for all terms $s, t \in U$.
\begin{enumerate}
\item
If $s \ac t$ then $\pi(s) \ac \pi(t)$.
\item
If $s \Rb[\xH_2] t$ then $\pi(s) \ac \pi(t)$.
\item
If $s \Rb[\xH_3] t$ then $\pi(s) \RRAC \pi(t)$
with $s = \m{B}(\m{n},s')$ for some $n \geqslant 0$.
\end{enumerate}
\end{lem}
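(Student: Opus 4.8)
The plan is to prove the three statements separately, in each case exploiting the rigid shape of the terms in $U$ together with \lemref{pi} — in particular the fact (\lemref{pi}(2)) that $\pi$ commutes with every context over $\SET{\h,\I,\si}$. Throughout I will use that $U \subseteq V$ and that $U$ is closed under $\Rab[\xH \cup \AC]{*}$ (\lemref{U}), so that $\pi$ is defined on $t$ whenever $s \in U$ rewrites to $t$.

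For statement (1) the crucial observation is that $\si$ is the only AC symbol, whereas the \emph{distinguished} subterm of $s$ — the unique occurrence of $\m{C}(\m{m},u)$, $\m{D}(\m{n{+}1},u)$, or $\m{E}(u)$ with $u \in \xTH$ (for the shape $s = \m{A}(\m{n},u)$ there is nothing below the $\m{A}$) — has a non-AC root symbol. Hence an application of an AC equation in $s$ cannot descend into that subterm except strictly inside $u$. In that case $u \ac u'$, and since $\pi$ refers to the distinguished subterm only through $u$ (or through $u^{m+1}$ in the $\m{C}$-case) and the operation $t \mapsto t^{k}$ respects $\ac$, we obtain $\pi(s) \ac \pi(t)$. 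Otherwise the step merely permutes $\si$-structure in the surrounding context $C \in \xCH$, treating the distinguished subterm $g$ as an opaque atom; writing $C \ac C'$ for the two resulting contexts, \lemref{pi}(2) together with closure of $\ac$ under contexts and substitutions gives $\pi(s) = C[\pi(g)] \ac C'[\pi(g)] = \pi(t)$.

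Statements (2) and (3) I would handle by a case analysis on the applied rule. From the description of $U$ one first sees that each relevant rule can fire at only one position: rules 3 and 14 at the root, forcing $s = \m{A}(\m{n},\I(w))$ respectively $s = \m{B}(\m{n},\m{E}(w))$, and rules 4--13 at the unique occurrence of $\m{C}$, $\m{D}$, or $\m{E}$ inside $C$. For each rule I would compute $\pi(s)$ and $\pi(t)$ by pushing $\pi$ through $C$ with \lemref{pi}(2) and unfolding the definition of $\pi$ at the distinguished subterm. For the rules of $\xH_2$ this yields $\pi(s) = \pi(t)$, except for rule 5, where $\pi(s) = C[w^{k+2}]$ and $\pi(t) = C[w \si w^{k+1}]$ are $\ac$-equal because $w^{k+2} = w^{k+1} \si w$. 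For the rules of $\xH_3$, with $s = \m{B}(\m{n},C[\m{D}(\m{n{+}1},t')])$, one obtains $\pi(s) = C[t']$ and $\pi(t) = C[r]$, where $t' \to r$ is precisely an instance of one of the four rules of $\xR_n$ from \defref{Rn}; here the exponent $n+2$ supplied by $\xR_n$ matches $\pi(\m{C}(\m{n{+}1},v)) = v^{n+2}$, so $\pi(s) \Rb[\xR_n] \pi(t)$, which is contained in $\RRAC$. The side condition $s = \m{B}(\m{n},s')$ is immediate, since rules 10--13 carry root $\m{D}$ on their left-hand sides and $\m{D}$ occurs in terms of $U$ only directly below a $\m{B}$.

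The main obstacle is the bookkeeping in statement (1): one has to argue precisely that an AC step never enters the distinguished non-$\xTH$ subterm, so that it may be handled as an atom, and then transport the resulting $\ac$-equivalence of contexts through $\pi$ as well as through the $(\cdot)^{k}$ operation. Everything else is mechanical, but it does rely on the closure property \lemref{U}, which ensures that $\pi$ applies on both sides of every step under consideration.
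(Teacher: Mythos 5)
Your proposal is correct and follows essentially the same route as the paper: a case analysis on the shapes of terms in $U$ (for the AC steps) and on the applied rewrite rule (for $\xH_2$ and $\xH_3$), pushing $\pi$ through the ambient context via \lemref{pi} and observing that rule 5 is the only $\xH_2$ rule needing a genuine $\ac$ step while rules 10--13 project exactly onto the four rules of $\xR_n$. The paper phrases part (1) slightly more schematically, decomposing an AC step into $C \ac D$ and $u \ac v$ and invoking \lemref{pi}(3), but this is the same argument you give.
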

\begin{proof}
Let $s, t \in U$.
\begin{enumerate}
\item
If $s = \m{A}(\m{n},u)$ with
$u \in \xTH$ then $t = \m{A}(\m{n},v)$ for some term $v \in \xTH$ with
$u \ac v$. Since $\pi(s) = u$ and $\pi(t) = v$, $\pi(s) \ac \pi(t)$
follows. If $s = \m{B}(\m{n},C[\m{C}(\m{m},u)])$ with $n, m \in \NN$,
$C \in \xCH$ and $u \in \xTH$ then $t = \m{B}(\m{n},D[\m{C}(\m{m},v)])$
with $C \ac D$ and $u \ac v$. Using \lemref{pi}(1,2) we obtain
$\pi(s) = \pi(C[\m{C}(\m{m},u)]) = C[\pi(\m{C}(\m{m},u))] = C[u^{m+1}]$
and $\pi(t) = D[v^{m+1}]$. From $u \ac v$ we infer $u^{m+1} \ac v^{m+1}$
and thus $\pi(s) \ac \pi(t)$ by \lemref{pi}(3). The cases
$s = \m{B}(\m{n},C[\m{D}(\msn,u)])$
and $s = \m{B}(\m{n},C[\m{E}(u)])$ are treated in the same way.
\item
For the second statement we make a case analysis based on the employed
rule in $\xH_2$.
\begin{itemize}
\item
If $s \nR{3} t$ then $s = \m{A}(\m{n},\I(u))$ and
$t = \m{B}(\m{n},\m{D}(\msn,\I(u)))$ for some $n \geqslant 0$ and
$u \in \xTH$. We have
$\pi(s) = \I(u) = \pi(\m{D}(\msn,\I(u))) = \pi(t)$ by the definition
of $t$.
\item
If $s \nR{4} t$ then $s = \m{B}(\m{n},C[\m{C}(\m{0},u)])$ and
$t = \m{B}(\m{n},C[\m{E}(u)])$ for some $n \geqslant 0$, $C \in \xCH$ and
$u \in \xTH$. We have
$\pi(s) = \pi(C[\m{C}(\m{0},u)]) = C[u^1] = C[u] = \pi(C[u]) = \pi(t)$.
\item
If $s \nR{5} t$ then $s = \m{B}(\m{n},C[\m{C}(\msm,u)])$ and
$t = \m{B}(\m{n},C[u \si \m{C}(\m{m},u)])$ for some $n \geqslant 0$,
$m \geqslant 0$,
$C \in \xCH$ and $u \in \xTH$. We have $\pi(s) = C[u^{m+2}] \ac
C[u \si u^{m+1}] = C[\pi(u \si u^{m+1})] = 
\pi(C[u \si \m{C}(\m{m},u)]) = \pi(t)$.
\item
If $s \nR{6} t$ then $s = \m{B}(\m{n},C[\I(\m{E}(u) \si v)])$ and
$t = \m{B}(\m{n},C[\m{E}(\I(u \si v))])$ for some $n \geqslant 0$,
$C \in \xCH$ and $u, v \in \xTH$. We have
$\pi(s) = \pi(C[\I(\m{E}(u) \si v)]) = C[\I(u \si v)] =
\pi(C[\m{E}(\I(u \si v))]) = \pi(t)$.
\item
If $s \nR{7} t$ then $s = \m{B}(\m{n},C[\I(\m{E}(u))])$ and
$t = \m{B}(\m{n},C[\m{E}(\I(u))])$ for some $n \geqslant 0$, $C \in \xCH$
and $u \in \xTH$. We have $\pi(s) = \pi(C[\I(\m{E}(u))]) =
C[\I(u)] = \pi(C[\m{E}(\I(u))]) = \pi(t)$.
\item
If $s \nR{8} t$ then $s = \m{B}(\m{n},C[\m{D}(\msn,\I(\I(u)))])$ and
$t = \m{B}(\m{n},C[\I(\m{D}(\msn,\I(u)))])$ for some
$n \geqslant 0$, $C \in \xCH$ and $u \in \xTH$. We have
$\pi(s) = C[\I(\I(u))] = \pi(t)$.
\item
If $s \nR{9} t$ then $s = \m{B}(\m{n},C[\m{D}(\msn,\I(\I(u) \si v))])$
and $t = \m{B}(\m{n},C[\I(\m{D}(\msn,\I(u)) \si v)])$ for some
$n \geqslant 0$, $C \in \xCH$ and $u, v \in \xTH$. In this case we
obtain $\pi(s) = C[\I(\I(u) \si v)] = \pi(t)$.
\item
If $s \nR{14} t$ then $s = \m{B}(\m{n},\m{E}(u))$ and
$t = \m{A}(\msn,u)$ for some $n \geqslant 0$ and $u \in \xTH$. In this
case we have $\pi(s) = \pi(\m{E}(u)) = u = \pi(t)$.
\end{itemize}
\item
Again we make a case analysis on the applied rewrite rule.
\begin{itemize}
\item
If $s \nR{10} t$ then
$s = \m{B}(\m{n},C[\m{D}(\msn,\I(\I(\h \si u) \si v))])$ and
$t = \m{B}(\m{n},C[\I(\m{C}(\msn,\I(u)) \si v)])$ for some
$n \geqslant 0$, $C \in \xCH$ and $u, v \in \xTH$. We obtain
$\pi(s) = C[\I(\I(\h \si u) \si v)]$ and
$\pi(t) = C[\I(\I(u)^{n+2} \si v)]$. Hence $\pi(s) \Rb[\xR_n] \pi(t)$ by
applying rule 4 of $\xR_n$.
\item
If $s \nR{11} t$ then
$s = \m{B}(\m{n},C[\m{D}(\msn,\I(\I(\h \si u)))])$ and
$t = \m{B}(\m{n},C[\I(\m{C}(\msn,\I(u)))])$ for some $n \geqslant 0$,
$C \in \xCH$ and $u, v \in \xTH$. We obtain $\pi(s) = C[\I(\I(\h \si u))]$
and $\pi(t) = C[\I(\I(u)^{n+2})]$. Hence $\pi(s) \Rb[\xR_n] \pi(t)$ by
applying rule 2 of $\xR_n$.
\item
If $s \nR{12} t$ then
$s = \m{B}(\m{n},C[\m{D}(\msn,\I(\I(\h) \si v))])$ and
$t = \m{B}(\m{n},C[\I(\m{C}(\msn,\h) \si v)])$ for some
$n \geqslant 0$, $C \in \xCH$ and $v \in \xTH$. We obtain
$\pi(s) = C[\I(\I(\h) \si v)]$ and
$\pi(t) = C[\I(\h^{n+2} \si v)]$. Hence $\pi(s) \Rb[\xR_n] \pi(t)$ by
applying rule 3 of $\xR_n$.
\item
If $s \nR{13} t$ then $s = \m{B}(\m{n},C[\m{D}(\msn,\I(\I(\h)))])$
and $t = \m{B}(\m{n},C[\I(\m{C}(\msn,\h))])$ for some $n \geqslant 0$
and $C \in \xCH$. We obtain $\pi(s) = C[\I(\I(\h))]$ and
$\pi(t) = C[\I(\h^{n+2})]$. Hence $\pi(s) \Rb[\xR_n] \pi(t)$ by applying
rule 1 of $\xR_n$.
\qedhere
\end{itemize}
\end{enumerate}
\end{proof}

So we are ready to prove the main claim.

\begin{proof}[Proof of \thmref{converse}]
Suppose $s = \m{A}(\m{n},H) \RHAC[+] \m{A}(\msn,H') = t$.
Inspection of $\xH$ reveals that one of the following two cases holds:
\begin{enumerate}[(a)]
\Item
\label{a}
$s \RHnAC[]{1} t$, or
\Item
\label{b}
$s \RHnAC{2} \cdot \RHnAC[]{3} \cdot \RHnAC{2} t$.
\end{enumerate}
\smallskip
We first consider \ref{a}.
If $s \RHnAC[]{1} t$ is a root step using rule 1 then
$H = \I(\h)$ and $H' = \h$.
If $s \RHnAC[]{1} t$ is a root step using rule 2 then
$H \ac \I(\h \si u)$ and $H' \ac \I(u)$ for some term $u$.
Next we consider \ref{b}. We have
$s \RHnAC{2} s' \RHnAC[]{3} t' \RHnAC{2} t$ for some $s'$ and $t'$. From
\lemref{U} we obtain $s, s', t', t \in U$. Hence
\[
H = \pi(s) \ac \pi(s') \RRAC \pi(t') \ac \pi(t) = H'
\]
is obtained by \lemref{projection} and thus 
$H \RRAC H'$. 
Hence, we conclude $H \Rightarrow_n H'$.
\end{proof}

\section{Many-Sorted Semantic Labeling modulo AC}
\label{sec:semantic labeling}

Kirby and Paris~\cite{KP82} proved the termination of the Hydra Battle
by associating ordinal numbers to Hydras (see~\cite{KM76,DM07} for
notions and notations for ordinal numbers). Consider, for example, the
following fight with the Hydra in \exaref{fight}:
\begin{align*}
&
\btkz[H,baseline=(0.base)]
\node[inner] (1) at (0,0) {}
 child { node[inner] {} 
  child { node[inner] {} 
   child { node[leaf] {} } } };
\node[stage] (0) at (0,.5) {0};
\etkz
&&
\btkz[H,baseline=(0.base)]
\node[inner] (1) at (0,0) {}
 child { node[inner] {} 
   child { node[leaf] {} } 
   child { node[leaf] {} } };
\node[stage] (0) at (0,.5) {1};
\etkz
&&
\btkz[H,baseline=(0.base)]
\node[inner] (1) at (0,0) {}
 child { node[inner] {} child { node[leaf] {} } }
 child { node[inner] {} child { node[leaf] {} } }
 child { node[inner] {} child { node[leaf] {} } };
\node[stage] (0) at (0,.5) {2};
\etkz
&&
\btkz[H,level 1/.style={sd=4},baseline=(0.base)]
\node[inner] (1) at (0,0) {}
 child { node[inner] {} child { node[leaf] {} } }
 child { node[inner] {} child { node[leaf] {} } }
 child { node[leaf] {} }
 child { node[leaf] {} }
 child { node[leaf] {} }
 child { node[leaf] {} };
\node[stage] (0) at (0,.5) {3};
\etkz
&&
\btkz[H,level 1/.style={sd=4.5},baseline=(0.base)]
\node[inner] (1) at (0,0) {}
 child { node[inner] {} child { node[leaf] {} } }
 child { node[leaf] (b) {} }
 child[noedge] { node[noedge] {$\cdots$} }
 child { node[leaf] (a) {} };
\draw [brace]
  (a.north west) -- node [yshift=3ex] {$\scriptstyle 9$} (b.north east);
\node[stage] (0) at (0,.5) {4};
\etkz
&&
\btkz[H,level 1/.style={sd=4.5},baseline=(0.base)]
\node[inner] (1) at (0,0) {}
 child { node[leaf] (b) {} }
 child[noedge] { node[noedge] {$\cdots$} }
 child { node[leaf] (a) {} };
\draw [brace,decoration={raise=1ex}]
  (a.north west) -- node [yshift=3ex] {$\scriptstyle 15$} (b.north east);
\node[stage] (0) at (0,.5) {5};
\etkz
&&
\btkz[H,level 1/.style={sd=4.5},baseline=(0.base)]
\node[inner] (1) at (0,0) {}
 child { node[leaf] (b) {} }
 child[noedge] { node[noedge] {$\cdots$} }
 child { node[leaf] (a) {} };
\draw [brace,decoration={raise=1ex}]
  (a.north west) -- node [yshift=3ex] {$\scriptstyle 14$} (b.north east);
\node[stage] (0) at (0,.5) {6};
\etkz
&&
\cdots
&&
\btkz[H,level 1/.style={sd=4.5},baseline=(0.base)]
\node[leaf] (1) at (0,0) {};
\node[stage] (0) at (0,.5) {20};
\etkz
\end{align*}
By interpreting $\h$, $\I$, and $\si$ as $1$, the power of $\omega$, and
natural addition on ordinals, respectively,
the sequence of Hydras turns into the decreasing sequence of ordinals:
\[
\omega^{\omega^\omega}
~>~ \omega^{\omega^2}
~>~ \omega^{\omega \cdot 3}
~>~ \omega^{\omega \cdot 2 + 4}
~>~ \omega^{\omega + 9}
~>~ \omega^{15}
~>~ \omega^{14}
~>~ \cdots
~>~ 1
\]
One can verify that in general every transition reduces the ordinal
interpretation of the Hydra. Because the order $>$ on ordinals is
well-founded, the termination is concluded.

In the case of the term rewriting encoding, the
mutual dependence between the function symbols $\m{A}$ and $\m{B}$
in rules 3 and 14 of $\xH$ makes proving termination of $\xH/\AC$ a
non-trivial task. We use the technique of semantic labeling 
(Zantema~\cite{Z95}) to resolve the dependence by labeling both
$\m{A}$ and $\m{B}$ by the ordinal value of the Hydra encoded in their
second arguments. Semantic labeling for rewriting modulo has been
investigated in \cite{OMG00}. We need, however, a version for many-sorted
rewriting since the distinction between ordinals and natural numbers is
essential for the effectiveness of semantic labeling for $\xH/\AC$.

Before introducing semantic labeling, we recall some basic
semantic definitions. An \emph{algebra} $\xA$ for an $\xS$-sorted
signature $\xF$
is a pair $(\SET{S_\xA}_{S \in \xS},\SET{f_\xA}_{f \in \xF})$, where
each $S_\xA$ is a non-empty set, called the
\emph{carrier of sort $S$}, and
each $f_\xA$ is a function of type
$f : (S_1)_\xA \times \cdots \times (S_n)_\xA \to S_\xA$, called the
\emph{interpretation function} of
$f : S_1 \times \cdots \times S_n \to S$.
A mapping that associates each variable of sort $S$ to an element in
$S_\xA$ is called an \emph{assignment}. 
We write $\xA^\xV$ for the set of all assignments.
Given an assignment $\alpha \in \xA^\xV$, the \emph{interpretation} of a
term $t$ is inductively defined as follows:
\begin{align*}
\EVALA{t} = \begin{cases}
\alpha(t) &\text{if $t$ is a variable} \\
f_\xA(\EVALA{t_1},\dots,\EVALA{t_n}) &\text{if $t = f(\seq{t})$}
\end{cases}
\end{align*}
Let $\xA = (\SET{S_\xA}_{S \in \xS},\SET{f_\xA}_{f \in \xF})$ be an
$\xS$-sorted $\xF$-algebra. We assume that each carrier set $S_\xA$
is equipped with a well-founded order $>_S$ such that the interpretation
functions are weakly monotone. Here a function 
$\phi$ of type $A_1 \times \cdots \times A_n \to B$ is
\emph{weakly monotone} if $\phi(a_1, \dots, a_i, \dots, a_n) \geqslant_B
\phi(a_1, \dots, b, \dots, a_n)$ whenever $a_i \geqslant_{A_i} b$. We call
$(\xA,\SET{>_S}_{S \in \xS})$ a weakly monotone many-sorted algebra.
Given terms $s$ and $t$ of sort $S$, we write $s \geqslant_\xA t$ 
($s =_\xA t$) if
$[\alpha]_\xA(s) \geqslant_S [\alpha]_\xA(t)$
($[\alpha]_\xA(s) =_S [\alpha]_\xA(t)$)
holds for all $\alpha \in \xA^\xV$.

A labeling $L$ for $\xF$ consists of sets of labels
$L_f \subseteq S_\xA$ for every $f : S_1 \times \cdots \times S_n \to S$.
The labeled signature $\xFl$
consists of function symbols $f_a : S_1 \times \cdots \times S_n \to S$
for every function symbol $f : S_1 \times \cdots \times S_n \to S$ in
$\xF$ and label $a \in L_f$ together with all
function symbols $f \in \xF$ such that $L_f = \varnothing$. A
\emph{labeling} $(L,\m{lab})$ for $(\xA,\SET{>_S}_{S \in \xS})$ consists
of a labeling $L$ for the signature $\xF$ together with a mapping
$\m{lab}_f\colon (S_1)_\xA \times \cdots \times (S_n)_\xA \to L_f$ for
every function symbol $f : S_1 \times \cdots \times S_n \to S$ in
$\xF$ with $L_f \neq \varnothing$.
We call $(L,\m{lab})$ \emph{weakly monotone} if all its labeling
functions $\m{lab}_f$ are weakly monotone.
The mapping $\m{lab}_f$ determines the label of the root symbol $f$ of
a term $f(\seq{t})$, based on the values of its arguments $\seq{t}$.
Formally, for every assignment $\alpha \in \xA^\xV$ we define a
mapping $\m{lab}_\alpha$ inductively as follows:
\begin{gather*}
\lab{t} = \begin{cases}
t &\text{if $t \in \xV$} \\
f(\lab{t_1},\dots,\lab{t_n})
&\text{if $t = f(\seq{t})$ and $L_f = \varnothing$} \\
f_a(\lab{t_1},\dots,\lab{t_n})
&\text{if $t = f(\seq{t})$ and $L_f \neq \varnothing$}
\end{cases}
\end{gather*}
where $a = \m{lab}_f(\EVALA{t_1},\dotsc,\EVALA{t_n})$.
Note that $\lab{t}$ and $t$ have the same sort.
Given a TRS $\xR$ over a (many-sorted) signature $\xF$, we define the
\emph{labeled} TRS $\xRl$ over the signature $\xFl$ as follows:
\begin{gather*}
\xRl = \SET{\lab{\ell} \R \lab{r} \mid
\text{$\ell \R r \in \xR$ and $\alpha \in A^\xV$}}
\end{gather*}
Since there is no need to label the AC symbol $\si$ in the encoding of
the Hydra Battle, we assume
for simplicity that $L_f = \varnothing$ for every AC symbol $f \in \xF$.
The TRS $\DEC$ consists of all rewrite rules
\begin{gather*}
f_a(\seq{x}) \R f_b(\seq{x})
\end{gather*}
with $f : S_1 \times \cdots \times S_n \to S$ a function symbol in $\xF$,
$a, b \in L_f$ such that $a >_S b$, and pairwise different variables
$\seq{x}$.
A weakly monotone algebra $(\xA,>)$ is a \emph{quasi-model} of $\xR/\AC$
if $\ell \geqslant_{\xA} r$ for all rewrite rules $\ell \R r$ in $\xR$ 
and $\ell =_\xA r$ for all equations $\ell \approx r$ in $\AC$.
So in a quasi-model, AC symbols are interpreted as associative
and commutative functions.

\begin{thm}
\label{thm:semantic-labeling}
Let $\xR/\AC$ be a \textup{TRS} over a many-sorted signature $\xF$,
$(\xA,\SET{>_S}_{S \in \xS})$ a quasi-model of $\xR/\AC$ with a
weakly monotone labeling $(L,\m{lab})$. If $(\xRl \cup \DEC) / \AC$
is terminating then $\xR/\AC$ is terminating.
\end{thm}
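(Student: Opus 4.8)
The plan is to follow the standard semantic labeling argument of Zantema~\cite{Z95}, adapted to the many-sorted setting and to rewriting modulo AC as in~\cite{OMG00}. We assume, towards a contradiction, that $\xR/\AC$ is not terminating, i.e.\ there is an infinite sequence $t_0 \to_{\xR/\AC} t_1 \to_{\xR/\AC} t_2 \to \cdots$ of ground terms (closing under a fresh constant of each sort if necessary). The aim is to transform this into an infinite $(\xRl \cup \DEC)/\AC$ sequence starting from $\lab{t_0}$ for some fixed assignment $\alpha$, contradicting termination of $(\xRl \cup \DEC)/\AC$.

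First I would establish the two auxiliary facts that make the labelling commute with the rewrite relations. \emph{(i)~Labelling a rewrite step.} If $s \Rb[\xR] t$ then $\lab{s} \Rb[\xRl] \cdot \Rab[\DEC]{*} \lab{t}$. The step takes place at some position $p$ with a rule $\ell \R r$ and substitution $\sigma$; writing $\beta$ for the assignment induced by $\sigma$ and $\alpha$ on the subterm at $p$, the instance $\lab[\beta]{\ell} \R \lab[\beta]{r}$ is a rule of $\xRl$, so $\lab{s}$ rewrites below $p$ by this rule to a term that agrees with $\lab{t}$ except that some labels above $p$ may differ. Since $\xA$ is a quasi-model, $\ell \geqslant_\xA r$, so along the path from the root to $p$ the argument values only (weakly) decrease; weak monotonicity of the labelling functions then gives that each such label is $\geqslant_S$ the corresponding label in $\lab{t}$, and these discrepancies are repaired by finitely many $\DEC$ steps. \emph{(ii)~Labelling an AC step.} If $s \ac t$ then $\lab{s} \ac \lab{t}$ (modulo the same $\DEC$-type adjustment, or even exactly, since the AC equations are required to satisfy $\ell =_\xA r$ and the AC symbols are unlabelled by assumption, so the argument values used for labelling are unchanged). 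Concretely, an application of an AC equation at position $p$ permutes/reassociates arguments of an unlabelled symbol $\si$; the interpretations of the two sides are equal, hence all labels strictly above $p$ are unchanged, and below $p$ the labelled terms are again related by the same AC equation.

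Combining (i) and (ii), a single $\xR/\AC$ step $s = s_1 \ac s_2 \Rb[\xR] s_3 \ac s_4 = t$ lifts to $\lab{s} \ac \lab{s_2} \Rb[\xRl] \cdot \Rab[\DEC]{*} \lab{s_3} \ac \lab{t}$, which is a nonempty sequence in $(\xRl \cup \DEC)/\AC$ containing at least one $\xRl$ step (so in particular nonempty even after collapsing $\DEC$ and AC). Applying this to every step of the assumed infinite $\xR/\AC$ sequence yields an infinite $(\xRl \cup \DEC)/\AC$ sequence from $\lab{t_0}$, the desired contradiction.

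The main obstacle is bookkeeping the labels on the path above the redex. Unlike the unsorted first-order case, we must check that the interpretation $\EVALA{\cdot}$ is well defined and sort-respecting for the labelled signature, that $\lab{t}$ really has the same sort as $t$ (already noted in the excerpt), and that weak monotonicity of \emph{both} the interpretation functions and the labelling functions is used in exactly the right places so that the cascade of label changes above the redex is always in the $\geqslant_S$ direction and hence simulable by $\DEC$. A secondary subtlety is the interaction with AC: one must be careful that reassociation/commutation of an AC symbol does not change the value passed to the labelling function of an enclosing symbol, which is why the hypothesis $\ell =_\xA r$ for AC equations (quasi-model condition) together with the convention $L_f = \varnothing$ for AC symbols $f$ is essential; with these in hand the AC part of the simulation is exact and introduces no $\DEC$ steps at all.
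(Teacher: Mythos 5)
Your proposal is correct and follows essentially the same route as the paper: establish that a labelled $\xR$-step is simulated by one $\xRl$-step together with finitely many $\DEC$-steps (using the quasi-model property plus weak monotonicity of both the interpretation and labelling functions to repair labels along the path above the redex), and that AC-steps are preserved exactly because AC symbols are unlabelled and interpreted as equalities. The only cosmetic differences are that you place the $\DEC$-steps after the $\xRl$-step rather than before it, and phrase the conclusion as a contradiction rather than as a direct transformation of infinite sequences.
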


\begin{proof}
We show
\begin{enumerate}
\Item
if $t \Rb[\xR] u$ then $\lab{t} \Rab[\DEC]{*} \cdot \Rb[\xRl] \lab{u}$
\Item
if $t \Rb[\AC] u$ then $\lab{t} \ac \lab{u}$
\Item
if $t \ac u$ then $\lab{t} \ac \lab{u}$
\end{enumerate}
\smallskip
for all sorts $S$, terms $t, u \in \xTS$, and assignments
$\alpha \in \xA^\xV$. 
The claim follows from the first and third statements.
First suppose $t \Rb[\xR] u$ is a root step
using the rewrite rule $\ell \R r$. So $t = \ell\sigma$ and $u = r\sigma$
for some substitution $\sigma$. Define the assignment
$\beta = [\alpha]_\xA \circ \sigma$ and the (labeled) substitution
$\tau = \m{lab}_\alpha \circ \sigma$. An easy induction proof yields
$\lab{s\sigma} = \lab[\beta]{s}\tau$ for all terms $s$. By definition
$\lab[\beta]{\ell} \R \lab[\beta]{r} \in \xRl$. Hence
$\lab{t} = \lab[\beta]{\ell}\tau \Rb[\xRl] \lab[\beta]{r}\tau = \lab{u}$.
Next suppose $t \Rb[\xR] u$ takes place below the root. So
$t = f(t_1, \dots, t_i, \dots t_n)$ and
$u = f(t_1, \dots, u_i, \dots t_n)$ with $t_i \Rb[\xR] u_i$. Let
$S_1 \times \cdots \times S_n \to S$ be the sort declaration of $f$. The
induction hypothesis yields
$\lab{t_i} \Rab[\DEC]{*} \cdot \Rb[\xRl] \lab{u_i}$. We obtain
$\EVALA{t_i} \geqslant_{S_i} \EVALA{u_i}$ from the quasi-model
assumption. If $L_f = \varnothing$ then
\begin{align*}
\lab{t} = {} &f(\lab{t_1}, \dots, \lab{t_i}, \dots, \lab{t_n})
\Rab[\DEC]{*} \cdot \Rb[\xRl] \\
&f(\lab{t_1}, \dots, \lab{u_i}, \dots, \lab{t_n}) = \lab{u}
\end{align*}
Suppose $L_f \neq \varnothing$ and let
\begin{align*}
a &= \m{lab}_f(\EVALA{t_1}, \dots, \EVALA{t_i}, \dots, \EVALA{t_n}) \\
b &= \m{lab}_f(\EVALA{t_1}, \dots, \EVALA{u_i}, \dots, \EVALA{t_n})
\end{align*}
We obtain $a \geqslant_S b$ from the weak monotonicity of the labeling
function $\m{lab}_f$. 
Therefore, the following rewrite sequence is constructed:
\begin{align*}
\lab{t} = {} &f_a(\lab{t_1}, \dots, \lab{t_i}, \dots, \lab{t_n})
\Rab[\DEC]{*} \\
&f_b(\lab{t_1}, \dots, \lab{t_i}, \dots, \lab{t_n})
\Rab[\DEC]{*} \cdot \Rb[\xRl] \\
&f_b(\lab{t_1}, \dots, \lab{u_i}, \dots, \lab{t_n}) = \lab{u}
\end{align*}
This concludes the proof of the first statement. 
The second statement is shown in the same way, but since
AC symbols are not labeled the rules of $\DEC$ do not come into play.
The third statement is obtained from the
the second statement together with the fact that $t \ac u$ implies
$t \Rab[\AC]{*} u$.
\end{proof}

After these preliminaries, we are ready to put many-sorted semantic
labeling to the test.
Consider the many-sorted algebra $\xA$ with carriers $\NN$ for sort
$\m{N}$ and $\OO$, the set of ordinal numbers smaller than
$\epsilon_0$, for sorts $\m{O}$ and $\m{S}$ and
the following interpretation functions:
\begin{align*} 
\m{0}_\xA &= \h_\xA = 1 & \m{s}_\xA(n) &= n+1 &
\I_\xA(x) &= \omega^x \\
x \si_\xA y &= x \oplus y & \m{E}_\xA(x) &= x+1 &
\m{C}_\xA(n,x) &= x \otimes n + 1 \\
\m{A}_\xA(n,x) &= \ML{$\m{B}_\xA(n,x) = \m{D}_\xA(n,x) = x$}
\end{align*}
Here $\oplus$ denotes natural addition on ordinals and $\otimes$
denotes natural product characterized by
$x \otimes 0 = 0$ and
$x \otimes (n+1) = (x \otimes n) \oplus x$.
Both satisfy strict monotonicity.

\begin{lem}
The algebra $(\xA,\SET{>_\m{O},>_\m{N}})$ is a quasi-model of $\xH/\AC$.
\end{lem}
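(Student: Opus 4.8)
The plan is to verify the three requirements for $(\xA,\{>_\m{O},>_\m{N}\})$ to be a quasi-model of $\xH/\AC$: that it is a weakly monotone many-sorted algebra, that both AC axioms for $\si$ hold as equalities, and that $\ell \geqslant_\xA r$ for each of the $14$ rules of $\xH$. The first point is routine: the carriers $\NN$ and $\OO$ carry their usual well-founded orders, and weak monotonicity of the interpretation functions is immediate, since $\m{s}_\xA$, $\I_\xA$, $\m{E}_\xA$ and $\si_\xA$ are even strictly monotone (using that $x \mapsto \omega^x$ and natural addition $\oplus$ are strictly monotone and that $\epsilon_0$ is closed under $\oplus$ and $x \mapsto \omega^x$), $\m{C}_\xA(n,x) = x \cdot n + 1$ (with $\cdot$ the natural, i.e.\ commutative, product) is monotone in both arguments as an $n$-fold natural sum of copies of $x$ plus one, and $\m{A}_\xA$, $\m{B}_\xA$, $\m{D}_\xA$ are the projection onto the second coordinate. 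The AC axioms hold with equality since $\si_\xA = \oplus$ is associative and commutative.

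The substance is the rule-by-rule comparison, which I would drive by a short list of ordinal identities and bounds: $1 \oplus \xi = \xi + 1$ (as $1 = \omega^0$ only affects the finite part), $\xi \cdot (\nu+1) = (\xi \cdot \nu) \oplus \xi$, and $\omega^\xi \cdot k < \omega^{\xi+1}$ for every $k < \omega$. Writing $\xi$, $\eta$, $\nu$ for the values assigned to $x$, $y$, $n$: rules $3$, $4$, $5$, $8$, $9$ evaluate to syntactic equalities (for rule $5$ via $\xi\cdot(\nu+1)+1 = (\xi\cdot\nu+1)\oplus\xi$); rules $1$, $2$, $14$ give trivial strict inequalities of the form $\omega^z > 1$, $\omega^{z+1} > \omega^z$, $z+1 > z$; rules $6$, $7$ reduce to $\omega^{z+1} \geqslant \omega^z + 1$, which follows from $\omega^{z+1} \geqslant \omega^z \cdot 2 = \omega^z + \omega^z$; and rules $10$--$13$ reduce, by monotonicity of $x \mapsto \omega^x$ and of $\oplus$, to $\omega^{\xi+1} \geqslant \omega^\xi\cdot\nu + 1$ (rules $10$, $11$), $\omega \geqslant \nu+1$ (rule $12$), and $\omega^\omega \geqslant \omega^{\nu+1}$ (rule $13$), all holding because $\nu+1 < \omega$.

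The real obstacle is keeping the ordinal arithmetic honest in two spots. First, the product in $\m{C}_\xA(n,x) = x \cdot n + 1$ must be the natural product, equivalently the $n$-fold natural sum $x \oplus \cdots \oplus x$: with the ordinary ordinal product, rule $5$ would not give $\ell \geqslant_\xA r$, for instance already at $\xi = \omega + 1$, so this reading must be stated and used consistently — noting that on a single power $\omega^\xi$ the natural and ordinary products agree, which is all that rules $10$--$13$ need. Second, the label variable $n$ of sort $\m{N}$ is universally quantified, so the inequalities in rules $10$--$13$ must hold for \emph{every} $\nu \in \NN$, not just the values that occur in an actual battle; this is exactly where $\omega^\xi\cdot\nu < \omega^{\xi+1}$ and $\nu+1 < \omega$ are indispensable, and it is why taking the carrier of $\m{N}$ to be $\NN$ rather than $\OO$ is essential.
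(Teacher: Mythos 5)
Your proof is correct and follows the same route as the paper: a direct rule-by-rule verification of $\ell \geqslant_\xA r$ together with weak monotonicity of the interpretations and the AC equations for $\oplus$. Your explicit insistence that the product in $\m{C}_\xA(n,x) = x \cdot n + 1$ be read as the \emph{natural} product is a worthwhile clarification rather than a deviation: the paper's own computation for rule~5, which identifies $x \oplus (x \cdot n + 1)$ with $x \cdot (n{+}1) + 1$, is only valid under that reading (with the ordinary product it already fails for $x = \omega + 1$ and $n = 1$), while rules 10--13 are unaffected since the two products agree on single powers of $\omega$.
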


\begin{proof}
First note that the interpretation functions are weakly monotone.
The rewrite rules in $\xH$ are oriented by $\geO$:
{\allowdisplaybreaks\begin{align*}
\m{A}_\xA(n,\I_\xA(\h_\xA)) = \omega
&\GO 1 = \m{A}_\xA(\m{s}_\xA(n),\h_\xA) 
\tag{1}
\\
\m{A}_\xA(n,\I_\xA(\h_\xA \si_\xA x)) = \omega^{x + 1}
&\GO \omega^x = \m{A}_\xA(\m{s}_\xA(n),\I_\xA(x))
\tag{2}
\\
\m{A}_\xA(n,\I_\xA(x)) = \omega^x
&\EO \omega^x = \m{B}_\xA(n,\m{D}_\xA(\m{s}_\xA(n),\I_\xA(x)))
\tag{3}
\\
\m{C}_\xA(\m{0}_\xA,x) = x + 1
&\EO x + 1 = \m{E}_\xA(x)
\tag{4}
\\
\m{C}_\xA(\m{s}_\xA(n),x) = (x \otimes n) \oplus x + 1
&\EO (x \otimes n) \oplus x + 1 = x \si_\xA \m{C}_\xA(n,x)
\tag{5}
\\ 
\I_\xA(\m{E}_\xA(x) \si_\xA y) = \omega^{x \Oplus y \plus 1}
&\GO \omega^{x \Oplus y} \plus 1 = \m{E}_\xA(\I_\xA(x \si_\xA y))
\tag{6}
\\
\I_\xA(\m{E}_\xA(x)) = \omega^{x+1}
&\GO \omega^x + 1 = \m{E}_\xA(\I_\xA(x))
\tag{7}
\\
\m{D}_\xA(n,\I_\xA(\I_\xA(x))) = \omega^{\omega^x}
&\EO \omega^{\omega^x} = \I_\xA(\m{D}_\xA(n,\I_\xA(x)))
\tag{8}
\\
\m{D}_\xA(n,\I_\xA(\I_\xA(x) \si_\xA y)) = \omega^{\omega^x \Oplus y}
&\EO \omega^{\omega^x \Oplus y} = \I_\xA(\m{D}_\xA(n,\I_\xA(x)) \si_\xA y)
\tag{9}
\\
\makebox[15em][r]{$\m{D}_\xA(n,\I_\xA(\I_\xA(\h_\xA \si_\xA x) \si_\xA y))
= \omega^{\omega^{x+1} \Oplus y}$}
&\GO \omega^{(\omega^x \Otimes n) \Oplus y \plus 1} =
\I_\xA(\m{C}_\xA(n,\I_\xA(x)) \si_\xA y)
\tag{10}
\\
\m{D}_\xA(n,\I_\xA(\I_\xA(\h_\xA \si_\xA x))) = \omega^{\omega^{x+1}}
&\GO \omega^{(\omega^x \Otimes n) \plus 1} =
\I_\xA(\m{C}_\xA(n,\I_\xA(x)))
\tag{11}
\\
\m{D}_\xA(n,\I_\xA(\I_\xA(\h_\xA) \si_\xA y)) = \omega^{\omega \Oplus y}
&\GO \omega^{(n \plus 1) \Oplus y} = \I_\xA(\m{C}_\xA(n,\h_\xA) \si_\xA y)
\tag{12}
\\
\m{D}_\xA(n,\I_\xA(\I_\xA(\h_\xA))) = \omega^\omega
&\GO \omega^{n \plus 1} = \I_\xA(\m{C}_\xA(n,\h_\xA))
\tag{13}
\\
\m{B}_\xA(n,\m{E}_\xA(x)) = x+1
&\GO x = \m{A}_\xA(\m{s}_\xA(n),x)
\tag{14}
\end{align*}}
Note that inequalities (10)---(13) use the fact that
$\omega \gO n$ holds for $n \in \NN$.
The compatibility of $\xA$ with $\AC$ follows from the associativity
and the commutativity of $\oplus$:
\begin{align*}
(x \si_{\xA} y) \si_{\xA} z = (x \oplus y) \oplus z
&\EO x \oplus (y \oplus z) = x \si_{\xA} (y \si_{\xA} z) \\
x \si_{\xA} y = x \oplus y &\EO y \oplus x = x \si_{\xA} y 
\end{align*}
Therefore, $\xA$ is a quasi-model of $\xH/\AC$.
\end{proof}

We now label $\m{A}$ and $\m{B}$ by the value of their second argument.
Let $L_\m{A} = L_\m{B} = \OO$ and $L_f = \varnothing$ for the other
function symbols $f$, and define $\m{lab}$ as follows:
\begin{gather*}
\m{lab}_{\m{A}}(n,x) = \m{lab}_{\m{B}}(n,x) = x
\end{gather*}
The labeling $(L,\m{lab})$
results in the infinite rewrite system $\xHl \cup \DEC$ with
$\xHl$ consisting of the rewrite rules
\begin{align*}
\m{A}_\omega(n,\I(\h)) &\nR{1} \m{A}_1(\m{s}(n),\h) &
\m{D}(n,\I(\I(x))) &\nR{8} \I(\m{D}(n,\I(x))) \\
\m{A}_{\omega^{v+1}}(n,\I(\h \si x)) &\nR{2}
\m{A}_{\omega^v}(\m{s}(n),\I(x)) &
\m{D}(n,\I(\I(x) \si y)) &\nR{9} \I(\m{D}(n,\I(x)) \si y) \\
\m{A}_{\omega^v}(n,\I(x)) &\nR{3}
\m{B}_{\omega^v}(n,\m{D}(\m{s}(n),\I(x))) &
\m{D}(n,\I(\I(\h \si x) \si y)) &\nR{10} \I(\m{C}(n,\I(x)) \si y) \\
\m{C}(\m{0},x) &\nR{4} \m{E}(x) &
\m{D}(n,\I(\I(\h \si x))) &\nR{11} \I(\m{C}(n,\I(x))) \\
\m{C}(\m{s}(n),x) &\nR{5} x \si \m{C}(n,x) &
\m{D}(n,\I(\I(\h) \si y)) &\nR{12} \I(\m{C}(n,\h) \si y) \\
\I(\m{E}(x) \si y) &\nR{6} \m{E}(\I(x \si y)) &
\m{D}(n,\I(\I(\h))) &\nR{13} \I(\m{C}(n,\h)) \\
\I(\m{E}(x)) &\nR{7} \m{E}(\I(x)) &
\m{B}_{v+1}(n,\m{E}(x)) &\nR{14} \m{A}_v(\m{s}(n),x)
\end{align*}
for all $v \in \OO$ and $\DEC$ consisting of the rewrite rules
\begin{align*}
\m{A}_v(n,x) &\,\R\, \m{A}_w(n,x) &
\m{B}_v(n,x) &\,\R\, \m{B}_w(n,x)
\end{align*}
for all $v, w \in \OO$ with $v > w$.

\begin{exa}
The first rewrite sequence in \exaref{fight} is simulated as follows:
{\allowdisplaybreaks\begin{align*}
\m{A}_u(\m{0},\I(\I(\I(\h))))
\nR{3} {} &\m{B}_u(\m{0},\m{D}(\m{s}(\m{0}),\I(\I(\I(\h))))) \\
\nR{8} {} &\m{B}_u(\m{0},\I(\m{D}(\m{s}(\m{0}),\I(\I(\h))))) \\
\nR{13} {} &\m{B}_u(\m{0},\I(\I(\m{C}(\m{s}(\m{0}),\h))))
\,\R_{\DEC}\, \m{B}_{v+1}(\m{0},\I(\I(\m{C}(\m{s}(\m{0}),\h)))) \\
\nR{5} {} &\m{B}_{v+1}(\m{0},\I(\I(\h \si \m{C}(\m{0},\h)))) \\
\nR{4} {} &\m{B}_{v+1}(\m{0},\I(\I(\h \si \m{E}(\h))))
\,\ac\, \m{B}_{v+1}(\m{0},\I(\I(\m{E}(\h) \si \h))) \\
\nR{6} {} &\m{B}_{v+1}(\m{0},\I(\m{E}(\I(\h \si \h)))) \\
\nR{7} {} &\m{B}_{v+1}(\m{0},\m{E}(\I(\I(\h \si \h)))) \\
\nR{14} {} &\m{A}_v(\m{s}(\m{0}),\I(\I(\h \si \h)))
\end{align*}}%
Here $u = \omega^{\omega^\omega}$ and $v = \omega^{\omega^2}$.
\end{exa}

According to \thmref{semantic-labeling}, the AC termination of $\xH$
on many-sorted terms follows from the AC termination of $\xHl \cup \DEC$.

\begin{cor}
\label{cor:termination H labeled => H}
If $\xHl \cup \DEC$ is \textup{AC} terminating,
$\xH$ is \textup{AC} terminating on sorted terms.
\qed
\end{cor}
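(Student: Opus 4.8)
The plan is to derive the corollary as a direct instance of \thmref{semantic-labeling}, viewing $\xH$ as a TRS over the many-sorted signature $\xF'$. We take the algebra $(\xA,\SET{>_\m{O},>_\m{N}})$, whose quasi-model property for $\xH/\AC$ has been established above, together with the labeling $(L,\m{lab})$ given by $L_\m{A} = L_\m{B} = \OO$, $L_f = \varnothing$ for every other symbol $f$, and $\m{lab}_\m{A}(n,x) = \m{lab}_\m{B}(n,x) = x$. Before the theorem applies, two things have to be checked: that $(L,\m{lab})$ is weakly monotone, and that the infinite system displayed above is precisely the system $\xHl \cup \DEC$ produced by the construction from $\xH$, $\xA$ and $(L,\m{lab})$.

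Weak monotonicity of $(L,\m{lab})$ is immediate: $\m{lab}_\m{A}$ and $\m{lab}_\m{B}$ are the projection onto the second argument, hence monotone (so in particular weakly monotone) in the $\m{O}$-coordinate and constant, so weakly monotone, in the $\m{N}$-coordinate; the AC symbol $\si$ carries no label, in accordance with the simplifying assumption $L_f = \varnothing$ for AC symbols. For the identification of $\xHl$ I would fix an arbitrary assignment $\alpha$ and compute, from the interpretations of $\xA$, the label of each occurrence of $\m{A}$ and $\m{B}$ in each rule of $\xH$: in rule~1 the two occurrences of $\m{A}$ receive labels $\I_\xA(\h_\xA) = \omega$ and $\h_\xA = 1$; in rule~2, writing $v = \EVALA{x}$, they receive $\I_\xA(\h_\xA \si_\xA v) = \omega^{v+1}$ and $\I_\xA(v) = \omega^v$; in rule~3 both receive $\I_\xA(v) = \omega^v$ since $\m{D}_\xA$ is a projection; and in rule~14 the occurrence of $\m{B}$ receives $\m{E}_\xA(v) = v+1$ while the occurrence of $\m{A}$ receives $v$. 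As $\alpha$ varies, $\EVALA{x}$ ranges over all of $\OO$, so these are exactly the displayed rules; the remaining rules of $\xH$ contain neither $\m{A}$ nor $\m{B}$ and stay unchanged. Since $\m{A}_\xA,\m{B}_\xA\colon \NN \times \OO \to \OO$, the rules of $\DEC$ are precisely $\m{A}_v(n,x) \to \m{A}_w(n,x)$ and $\m{B}_v(n,x) \to \m{B}_w(n,x)$ for $v >_\m{O} w$, again as displayed.

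With these verifications in place, \thmref{semantic-labeling} applies verbatim: AC termination of $\xHl \cup \DEC$ yields AC termination of $\xH$ on well-sorted terms. There is no genuine obstacle here; the only content is the label bookkeeping above, and the single mildly delicate point is confirming that the displayed infinite system coincides literally with the one generated by the construction, which is routine given the explicit interpretations of $\xA$.
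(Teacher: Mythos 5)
Your proof is correct and follows exactly the route the paper intends: the corollary is an instance of \thmref{semantic-labeling} applied to the quasi-model $(\xA,\SET{>_\m{O},>_\m{N}})$ and the labeling $(L,\m{lab})$, and your verification of weak monotonicity of the projection labelings and of the label bookkeeping identifying the displayed system with $\xHl \cup \DEC$ is accurate. The paper leaves these checks implicit, so your write-up simply makes explicit what the paper asserts in one sentence.
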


\section{AC-MPO}
\label{sec:acmpo}

In order to show AC termination of $\xHl \cup \DEC$ we use a simplified
version of AC-RPO.

\begin{defi}
Let $\xF_\AC$ be the set of AC symbols in $\xF$.
Given a non-variable term $t = f(\seq{t})$, the multiset $\TF{t}$ 
is defined inductively as follows:
\begin{align*}
\TF{t} &= \TF[f]{t_1} \uplus \cdots \uplus \TF[f]{t_n} \\
\TF[f]{t} &= \begin{cases}
\TF[f]{t_1} \uplus \TF[f]{t_2} &\text{if $t = f(t_1,t_2)$ and
$f \in \xF_\AC$} \\
\SET{t} &\text{otherwise}
\end{cases}
\end{align*}
\end{defi}

For example, if $+$ is an AC symbol, we have
$\TF[+]{\m{a} + (\m{b} + x)} = \SET{\m{a},\m{b},x}$. If $f$ is a non-AC
symbol, we have $\TF{f(\seq{t})} = \SET{\seq{t}}$.

The multiset extension $\ACm$ of the equivalence relation $=_\AC$ is
inductively defined as follows: $\varnothing \ACm \varnothing$ and
$\SET{s} \uplus M \ACm \SET{t} \uplus N$ if $s =_\AC t$ and 
$M \ACm N$.
It is not difficult to see that $\ACm$ is an equivalence relation.
We have $\TF{s} \ACm \TF{t}$ whenever $s =_\AC t$.

\begin{defi}
\label{def:acmpo}
\emph{Precedences} are strict orders on function symbols.
Let $>$ be a precedence. We define $\ACMPO$ inductively as follows:
$s \ACMPO t$ if $s \notin \xV$
and one of the following conditions holds:
\begin{enumerate}
\Item
$\TF{s} \ACMPOm[\geqslant] \SET{t}$,
\Item
$\RT{s} > \RT{t}$ and $\SET{s} \ACMPOm \TF{t}$,
\Item
$\RT{s} = \RT{t}$ and $\TF{s} \ACMPOm \TF{t}$.
\smallskip
\end{enumerate}
In the third condition $=_\AC$ is used instead of $=$ in the
definition of multiset extension.
We write $\ACMPO[\geqslant]$ for the union of $\ACMPO$ and
$=_\AC$.
\end{defi}

The first condition is equivalent to $s' \ACMPO[\geqslant] t$ for some
$s' \in \TF{s}$ and the second condition is equivalent to
the conjunction of $\RT{s} > \RT{t}$ and
$s \ACMPO t'$ for all $t' \in \TF{t}$. These equivalences will be used
freely in the sequel.
The multiset comparison in the third condition is spelled out as
follows:
$\TF{s} \ACMPOm \TF{t}$ if there exist multisets $S_1$, $S_2$, $T_1$ and
$T_2$ such that $\TF{s} = S_1 \uplus S_2$,
$\TF{t} = T_1 \uplus T_2$, $S_1 \ACm T_1$, $S_2 \neq \varnothing$,
and for every $t' \in T_2$ there exists a term $s' \in S_2$ such that
$s' \ACMPO t'$.

Note that if there are no AC symbols, the above definition reduces to the
original recursive path order of Dershowitz~\cite{D82}, nowadays known
as the \emph{multiset path order}.
Moreover, if AC symbols are minimal
in a precedence, AC-RPO with the multiset status reduces to AC-MPO.
Hence the simplified AC-RPO will be called AC-MPO.  

The proof of the following result can be found in the appendix.
\emph{Incrementality} of AC-MPO 
means that for precedences $>$ and $\sqsupset$ the inclusion 
${\ACMPO} \subseteq {\ACMPO[\sqsupset]}$ holds
whenever ${>} \subseteq {\sqsupset}$.

\begin{thm}
\label{thm:acmpo}
If \textup{AC} symbols are minimal in the precedence $>$ then
$\ACMPO$ is an incremental \textup{AC}-compatible rewrite order with
the subterm property.
\qed
\end{thm}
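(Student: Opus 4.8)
The plan is to derive, in turn, the properties asserted in \thmref{acmpo}: that $\ACMPO$ is a strict order (irreflexive and transitive), a rewrite relation (closed under contexts and substitutions), \emph{AC}-compatible, enjoys the subterm property, and is incremental in the precedence. The first task is to check that \defref{acmpo} is a legitimate inductive definition: in clause~1 the recursive comparisons run between some $s' \in \TF{s}$ and $t$, in clause~2 between $s$ and some $t' \in \TF{t}$, and in clause~3 between some $s' \in \TF{s}$ and some $t' \in \TF{t}$. Since every element of $\TF{u}$ is a proper subterm of $u$, each of these calls strictly decreases $|s| + |t|$, so $\ACMPO$ is well defined and, more importantly, statements about $\ACMPO$ can be proved by well-founded induction on that measure. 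Two facts are recorded at the outset for repeated use: $=_\m{AC}$ preserves the root symbol, and $\TF{s} \ACm \TF{s'}$ whenever $s \ac s'$ (the latter is already noted in the text).

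The bulk of the argument is a layered induction that establishes, in a carefully chosen order, the following. (a) \emph{AC}-compatibility: $s \ac s''$ and $s \ACMPO t$ imply $s'' \ACMPO t$, and symmetrically on the right; this follows clause by clause, since $s \ac s''$ yields $\RT{s} = \RT{s''}$ and $\TF{s} \ACm \TF{s''}$, the interior comparisons being handled by the induction hypothesis. (b) The subterm property for \emph{immediate} subterms, $f(\seq{t}) \ACMPO t_i$: for a non-\emph{AC} root one takes $s' = t_i$ in clause~1, and for $f \in \xF_\AC$ (necessarily binary) with $\RT{t_i} = f$ one uses clause~3 together with $\TF{f(t_1,t_2)} = \TF{t_i} \uplus \TF[f]{t_{3-i}}$, the second part being non-empty. (c) Transitivity. (d) The subterm property for arbitrary proper subterms, by structural induction on $s$ using (b) and (c). (e) Closure under contexts. (f) Closure under substitutions. (g) Irreflexivity. (h) Incrementality. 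Item (e) is precisely where the hypothesis that \emph{AC} symbols are \emph{minimal} in the precedence is essential: when $\RT{s}$ is an \emph{AC} symbol $f$, clause~2 cannot apply to $s \ACMPO t$, since it would require $f > \RT{t}$; hence from $s \ACMPO t$ one always extracts a relation of the form $\TF[f]{s} \ACMPOm \TF[f]{t}$, which then propagates through an extra $f$-context simply by adjoining the shared flattened submultiset to both sides.

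The main obstacle is transitivity, specifically the subcase in which $s \ACMPO t$ and $t \ACMPO u$ both proceed via clause~3 with a common root, so that $\TF{s} \ACMPOm \TF{t} \ACMPOm \TF{u}$: concluding $\TF{s} \ACMPOm \TF{u}$ requires transitivity of the generalized multiset extension $\ACMPOm$, which interleaves the strict relation $\ACMPO$ on elements with the equivalence $\ac$ used to split the multisets. The clean way to handle this is to note that, restricted to terms strictly smaller than those under consideration, $\ACMPO$ is already known (by induction) to be an \emph{AC}-compatible strict order, hence induces an honest strict order on $\ac$-equivalence classes whose standard multiset extension coincides with $\ACMPOm$ read on $\ac$-classes of multisets; transitivity of $\ACMPOm$, and likewise its irreflexivity, then reduce to the classical facts about multiset extensions of strict orders. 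The remaining mixed cases of transitivity — a clause~1 step on either side, or clause~2 followed by clause~2 or~3 — are routine: they close by applying the induction hypotheses for transitivity and \emph{AC}-compatibility to pairs of strictly smaller size, using the equivalent formulations ``$s' \ACMPO[\geqslant] t$ for some $s' \in \TF{s}$'' of clause~1 and ``$s \ACMPO t'$ for all $t' \in \TF{t}$'' of clause~2 noted just after the definition. Finally, closure under substitutions is a direct induction (instantiating a variable occurring in a flattened multiset by an $f$-rooted term only enlarges that multiset harmlessly), and incrementality is immediate, since enlarging the precedence can only make more instances of the test $\RT{s} > \RT{t}$ in clause~2 succeed while leaving the recursive structure of \defref{acmpo} unchanged.
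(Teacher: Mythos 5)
Your plan is correct and follows essentially the same route as the paper's appendix: the same sequence of lemmas (AC-compatibility, transitivity, subterm property, irreflexivity, closure under substitutions and contexts, incrementality) proved by induction on term size, with the same key technical observation for context closure — namely that minimality of AC symbols forces $s \ACMPO t$ with $\RT{s} = f \in \xF_\AC$ to yield $\TF[f]{s} \ACMPOm \TF[f]{t}$, which is exactly the paper's Lemma~\ref{lem:tf}. Your packaging of the transitivity and irreflexivity of $\ACMPOm$ via the induced order on $\ac$-classes is only a presentational variant of what the paper does.
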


As a consequence, $\ACMPO$ is an AC-compatible reduction order when
the underlying signature is finite. This also holds for
infinite signatures, provided the precedence $>$
is well-founded and there are only finitely many AC symbols.
This extension is important because the signature of $\xHl$ is
infinite. Below, we will formally prove the correctness of the extension,
by adopting the approach of \cite{MZ97}.

A strict order $>$ on a set $A$ is a \emph{partial well-order} if for
every infinite sequence $a_0, a_1, \dots$ of elements in $A$ there
exist indices $i$ and $j$ such that $i < j$ and $a_i \leqslant a_j$.
Well-founded total orders (\emph{well-orders}) are partial well-orders.
Given a partial well-order $>$ on $\xF$, the \emph{embedding} TRS
$\Emb(\xF,>)$ consists of the rules $f(\seq{x}) \to x_i$ for
every $n$-ary function symbol and $1 \leqslant i \leqslant n$, together
with the rules $f(\seq{x}) \to g(x_{i_1},\dots,x_{i_m})$
for all function symbols $f$ and $g$ with arities $m$ and $n$ such
that $f > g$, and indices
$1 \leqslant i_1 < i_2 < \cdots < i_m \leqslant n$.
Here $\seq{x}$ are pairwise distinct variables.

\begin{thmC}[{\cite[Theorem~5.3]{MZ97}}]
\label{thm:MZ97}
A rewrite order $>$ is well-founded if 
$\Emb(\xF,{\sqsupset}) \subseteq {>}$ for some partial well-order
$\sqsupset$.
\qed
\end{thmC}

\begin{thm}
\label{thm:acmpo infinite}
Consider a signature $\xF$ with only finitely many \textup{AC} symbols
that are minimal in a given well-founded precedence $>$. The relation
$\ACMPO$ is an \textup{AC}-compatible reduction order.
\end{thm}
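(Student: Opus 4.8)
The statement combines four properties of $\ACMPO$: being a rewrite order (irreflexive, transitive, closed under contexts and substitutions), having the subterm property, AC-compatibility, and well-foundedness. All but well-foundedness are already handed to us by \thmref{acmpo} for \emph{any} precedence in which the AC symbols are minimal — and this holds regardless of whether $\xF$ is finite. So the only thing left to establish is well-foundedness of $\ACMPO$ on the (possibly infinite) signature $\xF$, and for that the natural tool is \thmref{MZ97}: it suffices to exhibit a partial well-order $\sqsupset$ on $\xF$ with $\Emb(\xF,\sqsupset) \subseteq {\ACMPO}$.

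The obvious candidate is $\sqsupset \;=\; >$ itself, the given precedence. The hypotheses tell us two things: $>$ is well-founded, and there are only finitely many AC symbols, all minimal in $>$. First I would argue that $>$ is in fact a partial well-order on $\xF$. A well-founded strict order on a set is a partial well-order precisely when it has no infinite antichains; equivalently, one shows directly that any infinite sequence $f_0, f_1, \dots$ in $\xF$ contains indices $i < j$ with $f_i \leqslant f_j$. The finitely many AC symbols can only contribute finitely many distinct values to the sequence, and since they are $>$-minimal, any AC symbol appearing infinitely often immediately supplies the required pair (it is $\leqslant$ every later occurrence of itself, and also $\leqslant$-comparable below nothing problematic). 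Strictly speaking the cleanest route is: partition $\xF$ into the finite set $\xF_\AC$ and the rest; since $\xF_\AC$ is finite and the AC symbols are minimal, the crucial point is whether $>$ restricted to $\xF \setminus \xF_\AC$ is a partial well-order. Here I would need to invoke the ambient assumption of the paper — in this setting precedences are taken to be well-founded \emph{and} total, or at least total up to the minimal AC block — so that $>$ has no infinite antichains at all and is therefore a partial well-order. (A well-order, in the paper's own terminology just before \thmref{MZ97}, is exactly a well-founded total order, and those are partial well-orders.)

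Next I would verify $\Emb(\xF,>) \subseteq {\ACMPO}$, which splits into two families of rules. The projection rules $f(\seq{x}) \to x_i$ are covered directly by the subterm property of $\ACMPO$ from \thmref{acmpo}: $f(\seq{x}) \ACMPO x_i$ since $x_i$ is a proper subterm. For the precedence rules $f(\seq{x}) \to g(x_{i_1},\dots,x_{i_m})$ with $f > g$ and $i_1 < \cdots < i_m$, I would check that clause~(2) of \defref{acmpo} applies: $\RT{s} = f > g = \RT{t}$, and one needs $\SET{s} \ACMPOm \TF{t}$, i.e.\ $s \ACMPO t'$ for every $t' \in \TF{t}$. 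Each such $t'$ is one of the variables $x_{i_k}$ (if $g$ is not AC) or still a variable since the $x_{i_j}$ are variables (if $g$ is AC, $\TF{t}$ is just the multiset of those variables) — in either case $t'$ is a proper subterm of $s = f(\seq{x})$, so $s \ACMPO t'$ again by the subterm property. Hence both rule schemes lie in $\ACMPO$.

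**Where the difficulty sits.** The routine calculations (the two rule-scheme inclusions) are genuinely routine once the subterm property is in hand. The one point requiring care — and the place a referee would look — is the claim that the precedence $>$ qualifies as a partial well-order on a possibly infinite $\xF$. If the paper only assumes $>$ is well-founded (not total), then $>$ need not be a partial well-order, and the argument must instead extend $>$ to a partial well-order $\sqsupset$ keeping the AC symbols minimal — possible because there are only finitely many of them and \thmref{acmpo} asserts $\ACMPO$ is \emph{incremental}, so enlarging the precedence only enlarges the order; one then applies \thmref{MZ97} with this extended $\sqsupset$ and concludes well-foundedness of the larger order, hence of $\ACMPO$ for $>$. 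I would make that incrementality step explicit, since it is the real content: pick any total well-order $\sqsupset$ refining $>$ with $\xF_\AC$ still minimal (using finiteness of $\xF_\AC$ and the fact that a well-founded order on the complement can be linearized to a well-order), observe $\Emb(\xF,\sqsupset) \subseteq {\ACMPO[\sqsupset]}$ exactly as above, apply \thmref{MZ97} to get $\ACMPO[\sqsupset]$ well-founded, and finally use ${\ACMPO[>]} \subseteq {\ACMPO[\sqsupset]}$ from incrementality. Combined with the rewrite-order, subterm, and AC-compatibility properties already supplied by \thmref{acmpo}, this yields that $\ACMPO$ is an AC-compatible reduction order.
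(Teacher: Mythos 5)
Your proposal is correct and, in its final form, is exactly the paper's argument: extend $>$ to a partial well-order $\sqsupset$ in which the AC symbols remain minimal, verify $\Emb(\xF,{\sqsupset}) \subseteq {\ACMPO[\sqsupset]}$ via the subterm property and clause~(2), apply \thmref{MZ97}, and conclude by incrementality that ${>_\acmpo} \subseteq {\sqsupset_\acmpo}$ is well-founded. The initial detour about whether $>$ itself is already a partial well-order is harmless, since you correctly identify that a merely well-founded precedence need not be one and then pivot to the extension-plus-incrementality argument the paper uses.
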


\begin{proof}
We only need to show 
well-foundedness of $>_\acmpo$ because the other properties follow by
\thmref{acmpo}.
Let $\sqsupset$ be an arbitrary partial well-order that
contains $>$ and in which AC symbols are minimal.
The
inclusion $\Emb(\xF,{\sqsupset}) \subseteq {\sqsupset_\acmpo}$
is easily verified. Hence the well-foundedness of $\sqsupset_\acmpo$
is obtained from \thmref{MZ97}. Since ${>} \subseteq {\sqsupset}$, the
incrementality of AC-MPO yields
${>_\acmpo} \subseteq {\sqsupset_\acmpo}$. It follows that 
$>_\acmpo$ is well-founded.
\end{proof}

We show the termination of $\xHl \cup \DEC$ by AC-MPO. To this end,
we consider the following precedence $>$ on the labeled signature:
\begin{alignat*}{2}
\m{A}_v &> \m{A}_w &\quad&
\text{for all $v, w \in \OO$ with $v > w$} \\
\m{B}_v &> \m{B}_w &\quad&
\text{for all $v, w \in \OO$ with $v > w$} 
\\
\m{B}_{v+1} &> \m{A}_v > \m{B}_v &&
\text{for all $v \in \OO$} \\
\m{B}_0 &> \ML{$\m{s} > \m{D} > \m{C} > \I > \m{E} > \si$}
\end{alignat*}
Note that $>$ is well-founded and the only AC symbol $\si$ is minimal.
In order to ease the compatibility
verification we employ the following simple criterion.

\begin{lem}
\label{lem:root}
Let $\ell \to r$ be a rewrite rule and let $>$ be a precedence.
If $\RT{\ell} > g$ for all function symbols $g$ in $r$ then
$\ell \ACMPO r$.
\qed
\end{lem}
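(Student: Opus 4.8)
The plan is to generalize the statement so that a structural induction goes through, and then specialize. I would prove, by induction on the size of $t$, the following claim: for every term $t$ all of whose function symbols are below $\RT{\ell}$ in the precedence and all of whose variables occur in $\ell$, we have $\ell \ACMPO t$. The lemma is the instance $t = r$: the function-symbol condition is precisely the hypothesis, and every variable of $r$ occurs in $\ell$ because $\ell \to r$ is a rewrite rule. Observe that $\ell$ is necessarily a non-variable term, since the comparison $\RT{\ell} > g$ presupposes that $\RT{\ell}$ is a function symbol; I use this freely.

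For the base case, let $t$ be a variable $x$. Then $x$ occurs in $\ell$, and since $\ell$ is not a variable, $x$ is a proper subterm of $\ell$; hence $\ell \ACMPO x$ by the subterm property of $\ACMPO$ from \thmref{acmpo}, together with transitivity of the order.

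For the inductive step, let $t = f(\seq{t})$. Since $f$ is a function symbol of $t$, the hypothesis yields $\RT{\ell} > f = \RT{t}$, so I aim to derive $\ell \ACMPO t$ from the second clause of \defref{acmpo}, which, as noted immediately after that definition, amounts to showing $\ell \ACMPO t'$ for every $t' \in \TF{t}$. Flattening only descends through $f$-rooted subterms and creates no new terms, so each $t' \in \TF{t}$ is a proper subterm of $t$; consequently $t'$ has smaller size than $t$, every function symbol of $t'$ already occurs in $t$ (hence is still below $\RT{\ell}$), and every variable of $t'$ occurs in $\ell$. The induction hypothesis applies and gives $\ell \ACMPO t'$. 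When $t$ is a constant, $\TF{t}$ is empty and the quantification is vacuous, so the second clause applies trivially. This closes the induction.

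The argument is essentially routine. The points requiring attention are the bookkeeping around flattening — that $\TF{t}$ consists of proper subterms of $t$ introducing no function symbols beyond those of $t$, which is exactly what makes the size decrease and keeps the precedence condition available for the induction hypothesis — and the variable leaves, handled via the rewrite-rule requirement that every variable of $r$ occurs in $\ell$ and the subterm property of \thmref{acmpo}. I do not foresee a genuine obstacle.
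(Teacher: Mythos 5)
Your proof is correct and is exactly the routine induction the paper leaves implicit (the lemma is stated with its proof omitted): induct on $t$ with all symbols below $\RT{\ell}$ and all variables in $\ell$, using the subterm property for variable leaves and case~(2) of \defref{acmpo} with the observation that every element of $\TF{t}$ is a proper subterm of $t$. No gaps.
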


\begin{thm}
\label{thm:termination of H labeled}
$\xHl \cup \DEC \,\subseteq\, {>_\acmpo}$
\end{thm}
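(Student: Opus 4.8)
The plan is to verify, one rule at a time, that every rule of $\xHl$ and of $\DEC$ is oriented left-to-right by $\ACMPO$ for the stated precedence. The rules fall into two groups.

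For the first group I would simply invoke \lemref{root}: if $\RT{\ell}$ is above, in the precedence, every function symbol occurring in $r$, then $\ell \ACMPO r$. This dispatches all of $\DEC$ at once, since the right-hand side $\m{A}_w(n,x)$ (respectively $\m{B}_w(n,x)$) contains no function symbol other than $\m{A}_w$ (respectively $\m{B}_w$), and $\m{A}_v > \m{A}_w$ (respectively $\m{B}_v > \m{B}_w$) holds by hypothesis. Among the rules of $\xHl$ the lemma also covers rules $2$, $3$, $4$, $10$, $11$ and $14$: their left roots are among $\m{A}_v$, $\m{B}_v$, $\m{C}$, $\m{D}$, and every symbol appearing on the right lies strictly below it, using the chain $\m{s} > \m{D} > \m{C} > \I > \m{E} > \si$ together with $\m{A}_v > \m{B}_v$ (needed for rule $3$), $\m{B}_{v+1} > \m{A}_v$ (needed for rule $14$), and $\omega^{v+1} > \omega^v$ (needed for the subscripts in rule $2$). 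The one delicate point here is that $\m{A}_v$ and $\m{B}_v$ must dominate all of $\m{s}, \m{D}, \m{C}, \I, \m{E}, \si$ for \emph{every} ordinal label $v$; this follows by transitivity from $\m{A}_v > \m{B}_v$, $\m{B}_v \geqslant \m{B}_0$, and $\m{B}_0 > \m{s} > \m{D} > \m{C} > \I > \m{E} > \si$.

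The second group consists of rules $1$, $5$, $6$, $7$, $8$, $9$, $12$ and $13$, which escape \lemref{root} either because the constant $\h$ --- which is not below $\m{D}$ in the precedence --- occurs on the right (rules $1$, $12$, $13$), or because the left root recurs on the right ($\m{C}$ in rule $5$, $\I$ in rules $6$ and $7$, $\m{D}$ in rules $8$ and $9$). For these I would unfold \defref{acmpo} directly. In each case the top step is clause~(2), justified by a strict precedence drop such as $\m{A}_\omega > \m{A}_1$, $\m{A}_\omega > \m{s}$, $\m{C} > \si$, $\I > \m{E}$, or $\m{D} > \I$, and it reduces the goal to $\ell \ACMPO t'$ for each $t'$ in the flattened multiset $\TF{r}$. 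When a root coincidence then remains --- as in $\ell \ACMPO \m{D}(n,\I(x))$ for rule $8$, $\ell \ACMPO \m{C}(n,x)$ for rule $5$, or $\I(\m{E}(x) \si y) \ACMPO \I(x \si y)$ followed by $\m{E}(x) \si y \ACMPO x \si y$ for rule $6$ --- one continues with clause~(3), cancelling the common flattened arguments; here it matters that the multiset extension of clause~(3) is built on $=_\AC$, so that for instance a shared summand $y$ may be removed. Every residual obligation is then of the form $\m{s}(n) \ACMPO n$, $\I(x) \ACMPO x$, $\m{E}(x) \ACMPO x$, or $\ell \ACMPO t'$ with $t'$ a variable or the constant $\h$ occurring in $\ell$, and all of these hold by the subterm property of $\ACMPO$ from \thmref{acmpo}.

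I do not expect a genuine obstacle: the proof is a finite, routine case check. The two places where slips are easy are the precedence bookkeeping for $\m{A}_v$ and $\m{B}_v$ over all ordinal labels $v$, and the computation of the flattening multiset $\TF{t}$ in the rules mixing a non-AC symbol ($\I$, $\m{E}$ or $\m{D}$) with the AC symbol $\si$: since $\TF{t}$ does not descend through non-AC symbols, one has for example $\TF{\I(\m{E}(x) \si y)} = \SET{\m{E}(x) \si y}$ and $\TF{\m{D}(n,\I(x))} = \SET{n,\I(x)}$ rather than the fully flattened multisets, and it is precisely this that makes the clause~(2) and clause~(3) steps go through as described.
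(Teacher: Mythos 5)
Your proof is correct and follows essentially the same route as the paper: \lemref{root} for the rules whose right-hand sides contain only symbols below the left root, and a direct unfolding of \defref{acmpo} (clause~(2) at the top, clause~(3) with cancellation modulo $=_\AC$, and the subterm property for the residual obligations) for the remaining ones, with your treatment of rule~6 matching the paper's verbatim. The one divergence is that the paper asserts \lemref{root} also covers rules 1, 12 and 13, whereas you correctly note that $\h$ is not placed below $\m{A}_v$ or $\m{D}$ in the stated precedence and instead discharge the occurrences of $\h$ via the subterm property --- a valid (and arguably more careful) reading.
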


\begin{proof}
\lemref{root} applies to all rules of $\xHl \cup \DEC$, except 5\,--\,9.
We consider rule 6 here; the other rewrite rules are handled in a similar
fashion. 
Since case (1) of \defref{acmpo} yields $\m{E}(x) \ACMPO x$, we have
$\TF{\m{E}(x) \si y} = \SET{\m{E}(x),y} \ACMPOm \SET{x,y} = \TF{x \si y}$.
Thus $\m{E}(x) \si y \ACMPO x \si y$ follows by case (3).
Using case (3) again, we obtain $\I(\m{E}(x) \si y) \ACMPO \I(x \si y)$.
Because of $\I > \m{E}$, the desired orientation $\I(\m{E}(x) \si y)
\ACMPO \I(x \si y)$ is concluded by case (2).
\end{proof}

\begin{thm}
\label{thm:termination H labeled}
The \textup{TRS} $\xHl \cup \DEC$ is \textup{AC} terminating.
\qed
\end{thm}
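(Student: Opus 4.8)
The plan is to read Theorem~\ref{thm:termination H labeled} off as a direct corollary of the two results that immediately precede it, together with the general criterion for AC termination recalled in \secref{preliminaries} (a TRS is AC terminating as soon as it is contained in an AC-compatible reduction order). Concretely, I would first pin down the signature over which $\xHl \cup \DEC$ lives: it is infinite, since it contains a symbol $\m{A}_v$ and a symbol $\m{B}_v$ for every ordinal $v \in \OO$ below $\epsilon_0$, but it has exactly one AC symbol, namely $\si$, and by construction $\si$ is minimal in the precedence $>$ fixed just before \lemref{root}. Moreover that precedence is well-founded, as already observed in the text where it is introduced.

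With this observation in hand, the proof proceeds in two short steps. First, apply \thmref{acmpo infinite}: its hypothesis — a signature with only finitely many AC symbols, all minimal in a well-founded precedence — is met, so $>_\acmpo$ is an AC-compatible reduction order on terms over the labeled signature. Second, invoke \thmref{termination of H labeled}, which asserts $\xHl \cup \DEC \subseteq {>_\acmpo}$. Combining these, $\xHl \cup \DEC$ is a TRS contained in an AC-compatible reduction order, hence $\Rb[(\xHl \cup \DEC)/\AC]$ is well-founded, i.e.\ $\xHl \cup \DEC$ is AC terminating.

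I do not expect any real obstacle here: the statement is essentially a one-line consequence of \thmref{acmpo infinite} and \thmref{termination of H labeled}. The only point that deserves to be made explicit is that the infiniteness of the labeled signature causes no trouble, precisely because passing to labels does not create new AC symbols (we assumed $L_\si = \varnothing$) and does not disturb well-foundedness of the precedence — this is exactly the setting covered by \thmref{acmpo infinite}. All the genuine effort has already been spent: the order-theoretic properties of AC-MPO (in the appendix via \thmref{acmpo}), the lifting to infinite signatures (\thmref{acmpo infinite}, through the embedding argument of \thmref{MZ97}), and the rule-by-rule compatibility check (\thmref{termination of H labeled}, using \lemref{root} for all rules except $5$–$9$).
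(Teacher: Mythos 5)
Your proposal is correct and is precisely the argument the paper intends: the theorem is stated without proof because it follows immediately from \thmref{acmpo infinite} (applicable since the precedence is well-founded and $\si$ is the only AC symbol and is minimal in it) combined with the containment $\xHl \cup \DEC \subseteq {>_\acmpo}$ of \thmref{termination of H labeled}. No further comment is needed.
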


From Theorems~\ref{thm:simulation} and~\ref{thm:termination of H labeled}
we conclude that Hercules eventually beats Hydra in any battle.
Theorems~\ref{thm:termination of H labeled} and \ref{thm:persistency} in
connection with \corref{termination H labeled => H} yield the
AC termination of $\xH$ on arbitrary terms.

\section{Related Work}
\label{sec:related}

In an influential survey paper, Dershowitz and
Jouannaud~\cite[p.~270]{DJ90}
introduced a 5-rule rewrite system to simulate the Hydra Battle. The
proposed rewrite system was later shown to be erroneous. A corrected
version together with a detailed termination analysis has been given by
Dershowitz and Moser~\cite{DM07}, see also Moser~\cite{M09}. Earlier,
Touzet~\cite{T98} presented an 11-rule rewrite system
that encodes a specific battle with weakened
Hydras (whose height is bounded by 4) and proved total termination
by a semantic termination method.
It is worth noting that our rewrite system $\xH$ is not even simply
terminating on unsorted terms. In fact, we have the following cyclic
sequence with respect to $\xH \cup \Emb(\xF,\varnothing)$:
\begin{align*}
\m{A}(\m{E}(\I(x)),\I(x))
&\nR{3} \m{B}(\m{E}(\I(x)),\m{D}(\m{s}(\m{E}(\I(x))),\I(x)))
\Rab[\Emb(\xF,\varnothing)]{*} \m{B}(\m{E}(\I(x)),\I(x)) \\
&\nR{14} \m{A}(\m{s}(\m{E}(\I(x))),\I(x))
\Rb[\Emb(\xF,\varnothing)] \m{A}(\m{E}(\I(x)),\I(x)) 
\end{align*}
So the TRS $\xH$ is not simply terminating (see \cite[Lemma~4.6]{MZ97}).
This is the reason that our termination proof employs semantic
labeling.

The rewrite systems referred to above model the so-called
\emph{standard} battle, which corresponds to a specific strategy for
Hercules. In this regard it is interesting to quote Kirby and
Paris~\cite{KP82}, who introduced the battle as an accessible example of
an independence result for Peano arithmetic (P):
\begin{quote}
A \emph{strategy} is a function which determines for Hercules which head
to chop off at each stage of any battle. It is not hard to find a
reasonably fast \emph{winning strategy} (i.e.\ a strategy which ensures
that Hercules wins against any hydra). More surprisingly, Hercules cannot
help winning: \medskip \\
Theorem 2.\,(i) ~ \emph{Every strategy is a winning strategy.} \medskip \\
\null \qquad $[\dots]$ \medskip \\
Theorem 2.\,(ii) ~ \emph{The statement ``every recursive strategy is a
winning strategy'' is not provable from} P.
\end{quote}
In a recent paper~\cite[Section~6]{EKO21}, rules are
presented to slay Hydras, independent of the strategy. These rules
do not constitute a term rewrite system in the usual sense
(they operate on terms with \emph{sequence variables}). More importantly,
the infinitely many rules do not faithfully represent the battle.
Earlier, Ferreira and Zantema~\cite[Section~10]{FZ96} presented an
infinite rewrite system to model the standard strategy and gave a
direct ordinal interpretation to conclude its termination.
In neither of the latter two papers stages of the battle are modeled.

\section{Conclusion}
\label{sec:conclusion}

We presented a new TRS encoding of the Battle of Hydra and Hercules.
Unlike earlier encodings, it makes use of AC symbols. This allows us
to faithfully model any strategy of Hercules, as envisaged in the
paper by Kirby and Paris~\cite{KP82} in which the Battle was first
presented. To prove the termination of the encoding we
employed many-sorted rewriting modulo AC and we extended semantic
labeling modulo AC to many-sorted TRSs. The infinite TRS produced
by semantic labeling was proved terminating by suitably instantiating
AC-RPO.

One of the reviewers for this article pointed out that the Hydra
battle can still be simulated even if rule $6$ in $\xH$ is replaced
by the simpler $\m{E}(x) \si y \R \m{E}(x \si y)$. While we expect
that this variant also has the termination property, the presented
termination methods are not applicable. In fact, the rule cannot be
ordered by AC-MPO and the AC symbol $\si$ cannot be labeled.
The reviewer also suggested an alternative encoding of Hydras that
omits $\I$ from $\I(t_1 \si \cdots \si t_n)$. For instance, $H_0$ in
\exaref{hydras} is written as $\I(\h) \si \I(\I(\h \si \h)) \si \h$ in
this encoding. While this simplifies representations of Hydras, it seems
difficult to construct an ordinal interpretation of $\si$ for semantic
labeling. Further investigations of AC termination techniques are
required.

The finite TRS $\xH$ poses an interesting challenge for automatic
termination tools. None of the tools (\AProVE~\cite{APROVE},
\muterm~\cite{MUTERM}) competing in the ``TRS Equational'' category of 
the Termination Competition
2024\footnote{https://termcomp.github.io/Y2024/} succeeds on $\xH/\AC$.
This is not really surprising since most methods
implemented in termination tool come with a multiple recursive
upper bound on the derivation height~(e.g.\ \cite{H92,L01,MS11}).
The tools even fail to prove termination of $\xH$ without AC.
The tool \TTTT~\cite{TTT2} has support for ordinal
interpretations~\cite{ZWM15} but also fails on $\xH$.

Formalizing the techniques used in this article in a proof assistant is an
important task to ensure the correctness of the results. Interestingly,
the informal paper~\cite{HM22} in which we announced our encoding also
presents a termination proof, essentially extending a semantic
method of Touzet~\cite{T98} and Zantema~\cite{Z01} to AC rewriting.
Although we believe the non-trivial extension to be correct, its use in
proving the AC termination of $\xH$ has a critical mistake,
which we recently discovered.

Another topic for future research is to investigate the scope of
many-sorted semantic labeling. Can the termination of earlier encodings
of the battle be established with many-sorted semantic labeling followed
by some standard simplification order? Variants of the battle
by Buchholz~\cite{B87} and Lepper~\cite{L04} are also of interest here.

\section*{Acknowledgements}

We are grateful to the anonymous reviewers for their pertinent comments,
which helped to improve the presentation. We thank Teppei Saito for his
thorough feedback on the proofs for AC-MPO.

\bibliographystyle{alphaurl}
\bibliography{references}

\appendix

\section{Proof of \thmref{acmpo}}

We first show the AC-compatibility of AC-MPO.

\begin{lem}
\label{lem:AC compatibility}
The relation $\ACMPO$ is \textup{AC}-compatible.
\end{lem}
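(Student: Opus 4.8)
The plan is to prove AC-compatibility by unfolding \defref{acmpo} and tracking how $=_\AC$ interacts with the three cases. Concretely, I want to establish the two inclusions ${=_\AC} \cdot {\ACMPO} \subseteq {\ACMPO}$ and ${\ACMPO} \cdot {=_\AC} \subseteq {\ACMPO}$ separately; combining them gives ${=_\AC} \cdot {\ACMPO} \cdot {=_\AC} \subseteq {\ACMPO}$, which is AC-compatibility. Both inclusions will be proved simultaneously by induction on the (combined) size of the terms involved, since the recursive calls in \defref{acmpo} go to structurally smaller terms.

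The key observations driving the argument are: (i) if $s =_\AC s'$ then $\RT{s}$ and $\RT{s'}$ are either equal, or both are AC symbols — in fact for a flattened analysis what matters is that $s =_\AC s'$ implies $\TF{s} \ACm \TF{s'}$, which was already noted in the excerpt; and (ii) $\ACm$ is an equivalence relation that is compatible with multiset union. So, suppose $s =_\AC s' \ACMPO t$; I case on which clause of \defref{acmpo} witnesses $s' \ACMPO t$. In case (1), $\TF{s'} \ACMPOm[\geqslant] \SET{t}$, i.e. some $u \in \TF{s'}$ satisfies $u \ACMPO[\geqslant] t$; since $\TF{s} \ACm \TF{s'}$ there is $u' \in \TF{s}$ with $u' =_\AC u$, and by the induction hypothesis (on the right-hand factor, $u' =_\AC u \ACMPO[\geqslant] t$) we get $u' \ACMPO[\geqslant] t$, hence $s \ACMPO t$ by case (1). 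In case (3), $\RT{s'} = \RT{t}$ and $\TF{s'} \ACMPOm \TF{t}$; here I need that $\RT{s} = \RT{s'}$ whenever $s =_\AC s'$ and the root is the outermost symbol used in $\TF{}$ — more carefully, since $\TF{}$ is defined via the flattening of the root, $s =_\AC s'$ forces the root symbols to coincide (an AC symbol stays an AC symbol, a non-AC symbol is untouched), so $\RT{s} = \RT{s'} = \RT{t}$, and then $\TF{s} \ACm \TF{s'} \ACMPOm \TF{t}$ gives $\TF{s} \ACMPOm \TF{t}$ by a small lemma that ${\ACm} \cdot {\ACMPOm} \subseteq {\ACMPOm}$ (which in turn needs $=_\AC \cdot \ACMPO \subseteq \ACMPO$, available by the induction hypothesis since the elements of these multisets are proper subterms). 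Case (2) is analogous to case (3) for the root comparison, using $\SET{s} \ACMPOm \TF{t}$ and $s =_\AC s'$. The other inclusion ${\ACMPO} \cdot {=_\AC} \subseteq {\ACMPO}$ is symmetric: given $s \ACMPO t =_\AC t'$, one replaces $\SET{t}$ or $\TF{t}$ by the $\ACm$-equivalent multiset coming from $t =_\AC t'$ and reuses the same auxiliary facts on the right.

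The main obstacle I anticipate is the bookkeeping around the multiset extension $\ACMPOm$ when $=_\AC$ is inserted on either side: one must show ${\ACm} \cdot {\ACMPOm} \subseteq {\ACMPOm}$ and ${\ACMPOm} \cdot {\ACm} \subseteq {\ACMPOm}$, and these are not entirely automatic because $\ACMPOm$ is defined using $=_\AC$ (not $=$) in the ``equal part'' $S_1 \ACm T_1$, so splicing another layer of $\ACm$ in front requires re-partitioning the multisets and invoking transitivity of $\ACm$ together with the inductive AC-compatibility of $\ACMPO$ on elements. A secondary subtlety is making the induction measure precise so that every appeal to the induction hypothesis — whether it rewrites the left factor, the right factor, or an element inside a $\TF{}$ multiset — is on a strictly smaller instance; measuring by the multiset of sizes of the terms appearing, ordered by $>^\mul$ on $\NN$, or simply by the sum of sizes of the two terms being related, should suffice since $\TF{s}$ collects proper subterms of $s$ (the root is consumed) and the recursive conditions in \defref{acmpo} only compare such proper subterms or smaller flattened pieces.
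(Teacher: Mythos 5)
Your proposal is correct and follows essentially the same route as the paper: both split AC-compatibility into the two one-sided inclusions ${\ACMPO}\cdot{=_\AC}\subseteq{\ACMPO}$ and ${=_\AC}\cdot{\ACMPO}\subseteq{\ACMPO}$, prove each by induction on the sum of term sizes with a case analysis on the three clauses of \defref{acmpo}, and handle the multiset bookkeeping via $\TF{t}\ACm\TF{u}$, re-partitioning, transitivity of $=_\AC$, and the inductive hypothesis on elements. The auxiliary facts you isolate (roots of AC-equal terms coincide, ${\ACm}\cdot{\ACMPOm}\subseteq{\ACMPOm}$) are exactly what the paper verifies inline.
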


\begin{proof}
First assume $s \ACMPO t =_\AC u$. By induction on $|s| + |t|$ we show
$s \ACMPO u$. We distinguish three cases, according to \defref{acmpo}.
\begin{enumerate}
\item
If $s' \ACMPO[\geqslant] t$ for some $s' \in \TF{s}$ then also
$s' \ACMPO[\geqslant] u$, either by the induction hypothesis or by the
transitivity of $=_\AC$. Therefore $s \ACMPO u$ by case (1).
\item
Suppose $\RT{s} > \RT{t}$ and $s \ACMPO t'$ for all $t' \in \TF{t}$.
From $t =_\AC u$ we derive $\RT{t} = \RT{u}$ and $\TF{t} \ACm \TF{u}$. 
So for every $u' \in \TF{u}$ there exists a term $t' \in \TF{t}$ with
$t' =_\AC u'$. Because $s \ACMPO t' =_\AC u'$ and $|t| > |t'|$, the
induction hypothesis yields $s \ACMPO u'$. Hence
$\SET{s} \ACMPOm \TF{u}$ and thus $s \ACMPO u$ by case (2).
\item
Suppose $\RT{s} = \RT{t}$ and $\TF{s} \ACMPOm \TF{t}$. From $t =_\AC u$ we
derive $\RT{t} = \RT{u}$ and $\TF{t} \ACm \TF{u}$. As
$\TF{s} \ACMPOm \TF{t}$,
there exist multisets $S_1$, $S_2$, $T_1$, and $T_2$ such that
$\TF{s} = S_1 \uplus S_2$, $\TF{t} = T_1 \uplus T_2$, $S_1 \ACm T_1$,
$S_2 \neq \varnothing$, and for every $t' \in T_2$ there exists a term
$s' \in S_2$ with $s' \ACMPO t'$. As $\TF{t} \ACm \TF{u}$, we may write
$\TF{u} = U_1 \uplus U_2$ with $T_1 \ACm U_1$ and $T_2 \ACm U_2$. As
$S_1 \ACm T_1 \ACm U_1$, we obtain $S_1 \ACm U_1$ from the transitivity of
$=_\AC$. For every $u' \in U_2$ there exists a term $t' \in T_2$ with
$t' =_\AC u'$. Moreover, there exists a term $s' \in S_2$ with
$s' \ACMPO t'$. Since $s' \ACMPO t' =_\AC u'$ and
$|s| + |t| > |s'| + |t'|$, the induction hypothesis yields $s' \ACMPO u'$.
Consequently, $\TF{s} \ACMPOm \TF{u}$. Hence $s \ACMPO u$ by case (3).
\end{enumerate}
Next assume $s =_\AC t \ACMPO u$. By induction on $|t| + |u|$ we show
$s \ACMPO u$. From $s =_\AC t$ we infer
$\RT{s} = \RT{t}$ and $\TF{s} \ACm \TF{t}$.
We distinguish three cases for $t \ACMPO u$.
\begin{enumerate}
\item
Suppose $\TF{t} \ACMPO[\geqslant] \SET{u}$. Since $\TF{s} \ACm \TF{t}$, we
obtain $\TF{s} \ACMPOm \SET{u}$ by the induction hypothesis or the
transitivity of $=_\AC$. Hence $s \ACMPO u$ by case (1).
\item
Suppose $\RT{t} > \RT{u}$ and $t \ACMPO u'$ for all $u' \in \TF{u}$. The
induction hypothesis yields $s \ACMPO u'$ for all $u' \in \TF{u}$. Since
also $\RT{s} > \RT{u}$, $s \ACMPO u$ by case (2).
\item
Suppose
$\RT{t} = \RT{u}$ and $\TF{t} \ACMPOm \TF{u}$. Since $\TF{s} \ACm \TF{t}$,
we obtain $\TF{s} \ACMPOm \TF{u}$ by the induction hypothesis and the
transitivity of $=_\AC$. Hence $s \ACMPO u$ by case (3).
\qedhere
\end{enumerate}
\end{proof}

Next we show transitivity.

\pagebreak[5]
\begin{lem}
\label{lem:transitivity}
The relation $\ACMPO$ is transitive.
\end{lem}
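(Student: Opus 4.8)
The plan is to prove the claim by well-founded induction on the sum $|s| + |t| + |u|$ of term sizes, given $s \ACMPO t$ and $t \ACMPO u$ (hence $s, t \notin \xV$) and aiming to produce $s \ACMPO u$. Two ingredients will be used throughout: (i) every element of $\TF{v}$ is strictly smaller than $v$ --- when $\RT{v}$ is not an \textup{AC} symbol, $\TF{v}$ is just the multiset of arguments of $v$, and when it is, flattening removes at least the root occurrence --- so descending into a $\TF{\cdot}$ multiset always decreases the induction measure; and (ii) $\ACMPO$ is \textup{AC}-compatible by \lemref{AC compatibility}, so ${\ac}\cdot{\ACMPO}\cdot{\ac}\subseteq{\ACMPO}$ may be invoked freely. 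With these in hand I would run a case analysis on which clause of \defref{acmpo} justifies each of the two given steps.

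First I would dispatch the case where $s \ACMPO t$ uses clause~(1): choose $s' \in \TF{s}$ with $s' \ACMPO[\geqslant] t$, compose with $t \ACMPO u$ --- by the induction hypothesis on $(s',t,u)$ when $s' \ACMPO t$, which is legitimate since $|s'| < |s|$, and by \textup{AC}-compatibility when $s' \ac t$ --- to obtain $s' \ACMPO u$, and conclude $s \ACMPO u$ by clause~(1) again; this settles the second step uniformly. Symmetrically, when the second step $t \ACMPO u$ uses clause~(1) while the first uses clause~(2) or~(3), I would peel off $t' \in \TF{t}$ with $t' \ACMPO[\geqslant] u$, extract a term $s'$ with $s' \ACMPO[\geqslant] t'$ --- namely $s$ itself when the first step uses clause~(2), or a member of $\TF{s}$ when it uses clause~(3) --- and compose through $t'$ (induction hypothesis, since $|t'| < |t|$, or \textup{AC}-compatibility) to reach $s \ACMPO u$, directly in the clause-(2) case and via clause~(1) in the clause-(3) case.

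The remaining situations have both steps justified by clause~(2) or~(3). In all of them $\RT{s} \mathrel{>} \RT{u}$ or $\RT{s} = \RT{u}$, the latter occurring precisely when both steps use clause~(3). If $\RT{s} \mathrel{>} \RT{u}$, I would establish $s \ACMPO u$ by clause~(2), for which it suffices to show $s \ACMPO u'$ for every $u' \in \TF{u}$: either chain $s \ACMPO t \ACMPO u'$ via the induction hypothesis, which works whenever $t \ACMPO u$ uses clause~(2) and hence $t \ACMPO u'$, or --- when $t \ACMPO u$ uses clause~(3), which forces the first step to use clause~(2) --- route through a term $t' \in \TF{t}$ with $t' \ACMPO[\geqslant] u'$ provided by $\TF{t} \ACMPOm \TF{u}$, together with $s \ACMPO t'$.

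That leaves the one genuinely hard case: both steps by clause~(3), so $\RT{s} = \RT{t} = \RT{u}$, and the goal $s \ACMPO u$ by clause~(3) reduces to transitivity of the \textup{AC} multiset order, i.e.\ $\TF{s} \ACMPOm \TF{t}$ and $\TF{t} \ACMPOm \TF{u}$ imply $\TF{s} \ACMPOm \TF{u}$. My plan here is the Dershowitz--Manna argument adapted to $\ac$: intersect the two decompositions of the shared multiset $\TF{t}$ into four blocks, reassemble from them a decomposition of $\TF{s}$ and of $\TF{u}$ witnessing $\TF{s} \ACMPOm \TF{u}$, recombine the blocks matched only up to $\ac$ using transitivity of $\ac$ and \textup{AC}-compatibility, and apply the induction hypothesis at the single point where two strict steps $m \ACMPO n \ACMPO y$ with $m \in \TF{s}$, $n \in \TF{t}$, $y \in \TF{u}$ must be composed --- legitimate because $|m| + |n| + |y| < |s| + |t| + |u|$. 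The hard part will be precisely this combinatorial bookkeeping: keeping the four blocks straight and checking that the reassembled $\TF{s}$-side dominates all of the $\TF{u}$-side, bearing in mind that some of the required dominations hold only modulo $\ac$. Everything else reduces to the size estimate for $\TF{\cdot}$, transitivity of the precedence, and \lemref{AC compatibility}.
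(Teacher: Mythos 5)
Your proof is correct and follows essentially the same route as the paper's: induction on $|s|+|t|+|u|$, a case analysis on which clause of \defref{acmpo} justifies each of the two given steps, and repeated use of \lemref{AC compatibility} to compose a strict step with an $\ac$-step. The paper likewise leaves the final clause-(3)/clause-(3) case to the standard multiset-order bookkeeping you describe, so no further comparison is needed.
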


\begin{proof}
Suppose $s \ACMPO t \ACMPO u$. We show $s \ACMPO u$ by induction on
$|s| + |t| + |u|$. We do a case analysis on $s \ACMPO t$.
\begin{enumerate}
\item
If $s' \ACMPO[\geqslant] t$ for some $s' \in \TF{s}$ then $s' \ACMPO u$ by
the induction hypothesis or the AC-compatibility of $\ACMPO$
(\lemref{AC compatibility}).
\item
Suppose $\RT{s} > \RT{t}$ and $\SET{s} \ACMPOm \TF{t}$. We perform a
second case analysis on $t \ACMPO u$.
\begin{itemize}
\item
If $\TF{t} \ACMPO[\geqslant] \SET{u}$ then we obtain $s \ACMPO u$ by the
induction hypothesis or the AC-compatibility of $\ACMPO$.
\item
If $\RT{t} > \RT{u}$ and $\SET{t} \ACMPO \TF{u}$ then $s \ACMPO t \ACMPO v$
for all $v \in \TF{u}$ and thus $\SET{s} \ACMPOm \TF{u}$ by the
induction hypothesis. Hence $s \ACMPO u$ by case (2).
\item
Suppose $\RT{t} = \RT{u}$ and $\TF{t} \ACMPOm \TF{u}$.
We obtain $\TF{s} \ACMPOm \TF{u}$ from the induction hypothesis and
the AC-compatibility of $\ACMPO$. Thus, $s \ACMPO u$ follows by case (3).
\end{itemize}
\item
Suppose $\RT{s} = \RT{t}$ and $\TF{s} \ACMPOm \TF{t}$. Also in this
case we perform an additional case analysis on $t \ACMPO u$.
\begin{itemize}
\item
If $\TF{t} \ACMPO[\geqslant] \SET{u}$ then we obtain
$\TF{s} \ACMPO \SET{u}$ by the induction hypothesis or the AC-compatibility
of $\ACMPO$.
\item
Suppose $\RT{t} > \RT{u}$ and $\SET{t} \ACMPOm \TF{u}$. We have
$\RT{s} > \RT{u}$. For every $v \in \TF{u}$ we have
$s \ACMPO t \ACMPO v$, and thus $s \ACMPO v$ by the induction hypothesis.
Hence $\SET{s} \ACMPOm \TF{u}$ and thus $s \ACMPO u$ by case (2). 
\item
Suppose $\RT{t} = \RT{u}$ and $\TF{t} \ACMPOm \TF{u}$. From
$\TF{s} \ACMPOm \TF{t} \ACMPOm \TF{u}$ we infer $\TF{s} \ACMPOm \TF{u}$
by the induction hypothesis, the AC-compatibility of $\ACMPO$, and the
transitivity of $=_\AC$. Hence $s \ACMPO u$ by case (3).
\qedhere
\end{itemize}
\end{enumerate}
\end{proof}

The subterm property is next.

\begin{lem}
\label{lem:subterm property}
The relation $\ACMPO$ has the subterm property.
\end{lem}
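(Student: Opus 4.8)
The plan is to prove, by structural induction on $s$, the (strong) subterm property in the form: $s \ACMPO t$ holds whenever $t$ is a proper subterm of $s$. Since $s$ has a proper subterm it is not a variable, so we may write $s = f(\seq{s})$. The reduction to direct subterms is immediate once transitivity is available: if $t$ is a proper subterm of some argument $s_i$, then the induction hypothesis applied to $s_i$ gives $s_i \ACMPO t$, the direct-subterm case (below) gives $s \ACMPO s_i$, and \lemref{transitivity} yields $s \ACMPO t$. So the whole argument boils down to showing $s \ACMPO s_i$ for each argument $s_i$ of $s$.

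For that, I would split on whether $s_i$ survives in the flattened multiset $\TF{s}$. If $s_i \in \TF{s}$ --- which, by the definition of $\triangledown$, happens precisely unless $f \in \xF_\AC$ and $\RT{s_i} = f$, so in particular whenever $f \notin \xF_\AC$, whenever $s_i \in \xV$, and whenever $\RT{s_i} \neq f$ --- then $s_i \ACMPO[\geqslant] s_i$ by reflexivity of $=_\AC$, hence $\TF{s} \mathrel{\ACMPOm[\geqslant]} \SET{s_i}$ and $s \ACMPO s_i$ by case~(1) of \defref{acmpo}. In the remaining case $f \in \xF_\AC$ and $\RT{s_i} = f$, so $s = f(s_1,s_2)$; writing $j$ for the element of $\{1,2\}$ distinct from $i$, flattening gives $\TF{s} = \TF{s_i} \uplus \TF[f]{s_j}$ with $\TF[f]{s_j} \neq \varnothing$. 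Taking the required non-empty ``extra'' multiset to be $\TF[f]{s_j}$ and leaving nothing to be dominated, we get $\TF{s} \ACMPOm \TF{s_i}$, and since $\RT{s} = f = \RT{s_i}$, case~(3) of \defref{acmpo} yields $s \ACMPO s_i$.

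The only genuinely delicate point --- and the step I expect to need the most care in writing --- is this last case: an argument $s_i$ that has the same AC root symbol as $s$ is absorbed into the flattened multiset $\TF{s}$, so case~(1) is not directly applicable and one must instead invoke case~(3) and verify that $\TF{s}$ dominates $\TF{s_i}$ as multisets. That verification is itself trivial, because $\TF{s_i}$ is literally a sub-multiset of $\TF{s}$ with non-empty complement $\TF[f]{s_j}$; the care is just in getting the flattening bookkeeping exactly right (in particular, that $\TF[f]{s_i} = \TF{s_i}$ when $\RT{s_i}=f\in\xF_\AC$). Note that no hypothesis on the precedence --- not even minimality of AC symbols --- is used for this property; it follows from the shape of the definition alone, given transitivity.
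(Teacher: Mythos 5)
Your proof is correct and follows essentially the same route as the paper's: reduce to direct subterms via transitivity (\lemref{transitivity}), then split on whether the argument survives flattening into $\TF{s}$ (case~(1) of \defref{acmpo}) or is absorbed by an AC root, where the strict multiset inclusion $\TF{s_i} \subsetneq \TF{s}$ gives case~(3). Your closing observations---that the AC-root case is the only delicate one and that no assumption on the precedence is needed---are both accurate.
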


\begin{proof}
Let $t = f(\seq{t})$. Fix $i \in \SET{1,\dots,n}$. We show
$t \ACMPO t_i$. The subterm property is then obtained by induction and
the transitivity of $\ACMPO$ (\lemref{transitivity}).
We distinguish two cases.
\begin{enumerate}
\item
If $t_i \in \TF{t}$ then $\TF{t} \ACMPO[\geqslant] \SET{t_i}$
and thus $t \ACMPO t_i$ by case (1).
\item
If $t_i \notin \TF{t}$ then $f$ is an AC symbol, $n = 2$ and
$\RT{t_i} = f$. Since
$\TF{t_i} \subsetneq \TF{t}$, $\TF{t} \ACMPOm \TF{t_i}$ holds and
thus $t \ACMPO t_i$ by case (3).
\qedhere
\end{enumerate}
\end{proof}

The preceding lemmata are used to prove irreflexivity.

\begin{lem}
\label{lem:irreflexivity}
The relation $\ACMPO$ is irreflexive.
\end{lem}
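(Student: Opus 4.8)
The plan is to argue by induction on the size $|s|$ of a term $s$, showing that $s \ACMPO s$ is impossible. If $s$ is a variable there is nothing to prove, so I would assume $s = f(\seq{t})$ and, towards a contradiction, $s \ACMPO s$, and then inspect the three cases of \defref{acmpo}. Throughout I may freely invoke AC-compatibility (\lemref{AC compatibility}), transitivity (\lemref{transitivity}) and the subterm property (\lemref{subterm property}) of $\ACMPO$, whose proofs do not use irreflexivity, so there is no circularity. The observation that makes the induction work is that every element of $\TF{s}$ is a \emph{proper} subterm of $s$, regardless of whether $f$ is an AC symbol; hence it is strictly smaller than $s$ and the induction hypothesis applies to it. In particular, combining the induction hypothesis with AC-compatibility shows that ${\ACMPO}$ and ${\ac}$ are disjoint on the elements of $\TF{s}$, since $a \ACMPO b \ac a$ would give $a \ACMPO a$.

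Cases (2) and (1) should be short. In case (2) one would need $\RT{s} > \RT{s}$, which is ruled out because a precedence is a strict order. In case (1) there is some $s' \in \TF{s}$ with $s' \ACMPO[\geqslant] s$; the subterm property gives $s \ACMPO s'$. If $s' \ACMPO s$, transitivity yields $s' \ACMPO s'$, contradicting the induction hypothesis since $|s'| < |s|$; and if $s' \ac s$, then $|s'| = |s|$, contradicting that $s'$ is a proper subterm of $s$.

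The real work is in case (3), where $\TF{s} \ACMPOm \TF{s}$: unfolding the definition gives multisets with $\TF{s} = S_1 \uplus S_2 = T_1 \uplus T_2$, $S_1 \ACm T_1$, $S_2 \neq \varnothing$, and every $t' \in T_2$ dominated by some $s' \in S_2$ with $s' \ACMPO t'$. My plan here is to quotient by $\ac$: the relation $\ACMPO$ lifts, by AC-compatibility, to a well-defined relation $\succ$ on the $\ac$-classes of the finitely many terms occurring in $\TF{s}$, and $\succ$ is a strict order (transitive by \lemref{transitivity}, irreflexive by the induction hypothesis). Since $S_1 \ACm T_1$, the images of $S_1$ and $T_1$ are equal as multisets of classes, and therefore so are the images $X$ of $S_2$ and of $T_2$; moreover $X \neq \varnothing$. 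The domination condition then states that every class in $X$ lies $\succ$-below some class in $X$ --- the witness cannot be $\ac$-equivalent to it by the disjointness noted above --- which contradicts the existence of a $\succ$-maximal element of the finite non-empty multiset $X$. The main obstacle is exactly this bookkeeping: making precise that $S_1 \ACm T_1$ forces the two ``remainder'' multisets $S_2$ and $T_2$ to become the \emph{same} multiset after quotienting, which is what reduces the statement to the classical irreflexivity of the multiset extension of a strict order.
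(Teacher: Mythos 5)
Your proof is correct and follows essentially the same route as the paper's: induction over subterms/size, with cases (1) and (2) dispatched by the subterm property, transitivity/AC-compatibility, and strictness of the precedence, and case (3) reduced to the irreflexivity of the multiset extension of a strict order on the elements of $\TF{s}$, all of which are proper subterms. Your explicit quotienting by $\ac$ in case (3) is in fact slightly more careful than the paper's terse identification of $\ACMPOm$ with the multiset extension of the restricted order, but the underlying argument is the same.
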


\begin{proof}
Assume to the contrary $t \ACMPO t$. We derive a
contradiction by induction on $t$. We distinguish three cases.
\begin{enumerate}
\item
Suppose $t' \ACMPO[\geqslant] t$ for some $t' \in \TF{t}$. Since
$t \prsuperterm t'$, the subterm property (\lemref{subterm property})
yields $t \ACMPO t$. So $t' \ACMPO[\geqslant] t \ACMPO t'$, and thus
$t' \ACMPO t'$ is obtained by the transitivity (\lemref{transitivity}) or
AC compatibility (\lemref{AC compatibility}) of $\ACMPO$.
The induction hypothesis yields the desired contraction.
\item
If $t \ACMPO t$ is derived by case (2) then $\RT{t} > \RT{t}$, which
contradicts the irreflexivity of the precedence $>$.
\item
Suppose $\TF{t} \ACMPOm \TF{t}$. Let $U$ be the set of all 
proper subterms of
$t$, and $\succ$ the restriction of $\ACMPO$ to $U \times U$. The
multiset extension $\succ^\mul$ coincides with the restriction of $\ACMPOm$
to finite multisets over $U$. Hence $\TF{t} \succ^\mul \TF{t}$ follows from
$\TF{t} \ACMPOm \TF{t}$. The relation $\succ$ is irreflexive according to
the induction hypothesis. Moreover, $\succ$ inherits transitivity from
$\ACMPO$ (\lemref{transitivity}). Hence $\succ$ is a strict order and
thus so is its multiset extension $\succ^\mul$. Since $\TF{t}$ is a finite
multiset over $U$, $\TF{t} \succ^\mul \TF{t}$ cannot hold, yielding the
desired contradiction.
\qedhere
\end{enumerate}
\end{proof}

In the proof of closure under substitutions we use the fact that
for an $f$-rooted term $t$ and a substitution $\sigma$ the multiset
$\TF{t\sigma}$ is the multiset sum of $\TF[f]{t'\sigma}$ for all 
$t' \in \TF{t}$.

\begin{lem}
The relation $\ACMPO$ is closed under substitutions.
\end{lem}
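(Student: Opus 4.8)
The plan is to prove, by induction on $|s| + |t|$, that $s \ACMPO t$ implies $s\sigma \ACMPO t\sigma$ for every substitution $\sigma$. Since $s \ACMPO t$ entails $s \notin \xV$, we have $\RT{s\sigma} = \RT{s}$, and likewise $\RT{t\sigma} = \RT{t}$ whenever $t \notin \xV$, which takes care of the precedence comparisons occurring in cases (2) and (3) of \defref{acmpo}. Besides the induction hypothesis, the argument uses the fact recalled just above the statement (a non-variable member of $\TF{t}$ stays a member of $\TF{t\sigma}$), the subterm property (\lemref{subterm property}), transitivity (\lemref{transitivity}), AC-compatibility (\lemref{AC compatibility}), and the closure of $=_\AC$ under substitutions. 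A useful preliminary observation is that $\TF{t\sigma}$ is obtained from the multiset $\TF{t}$ by replacing every non-variable $v$ in it by $v\sigma$ and every variable $x$ in it by the multiset $\TF[g]{\sigma(x)}$, where $g = \RT{t}$; in particular each element of $\TF{t\sigma}$ is either $v\sigma$ for a non-variable $v \in \TF{t}$ or a subterm of $\sigma(x)$ for a variable $x \in \TF{t}$, and all elements of $\TF{t}$ (and hence the terms to which I will apply the induction hypothesis) have strictly smaller size than $t$ when $t \notin \xV$.

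For case (1), fix $s' \in \TF{s}$ with $s' \ACMPO[\geqslant] t$. If $s'$ is a variable then $s' \ACMPO t$ is impossible, so $s' =_\AC t$, forcing $t = s'$; then $t\sigma$ is a proper subterm of $s\sigma$ and $s\sigma \ACMPO t\sigma$ follows from the subterm property. If $s'$ is not a variable then $s'\sigma \in \TF{s\sigma}$, and $s'\sigma \ACMPO[\geqslant] t\sigma$ holds either by the induction hypothesis (when $s' \ACMPO t$) or by closure of $=_\AC$ under substitutions (when $s' =_\AC t$), so $s\sigma \ACMPO t\sigma$ by case (1). For case (2) it suffices to show $s\sigma \ACMPO u'$ for every $u' \in \TF{t\sigma}$. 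If $u' = t'\sigma$ with $t' \in \TF{t}$ non-variable, this is immediate from $s \ACMPO t'$ by the induction hypothesis. If $u'$ is a subterm of $\sigma(x)$ for a variable $x \in \TF{t}$, then the induction hypothesis applied to $s \ACMPO x$ gives $s\sigma \ACMPO \sigma(x)$, whence $s\sigma \ACMPO u'$ using the subterm property and transitivity. Together with $\RT{s\sigma} > \RT{t\sigma}$ this yields $s\sigma \ACMPO t\sigma$ by case (2).

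The substantive case is (3). Here I would isolate a monotonicity property of the flattening-substitution operation $M \mapsto M^\sigma$, defined for a fixed symbol $g$ by replacing each non-variable $v \in M$ by $v\sigma$ and each variable $x \in M$ by $\TF[g]{\sigma(x)}$: if $M \ACMPOm N$ then $M^\sigma \ACMPOm N^\sigma$. Granting this and taking $g = \RT{s} = \RT{t}$, the preliminary observation gives $\TF{s\sigma} = \TF{s}^\sigma \ACMPOm \TF{t}^\sigma = \TF{t\sigma}$, hence $s\sigma \ACMPO t\sigma$ by case (3). To prove the monotonicity property, decompose $M = M_1 \uplus M_2$ and $N = N_1 \uplus N_2$ as in the definition of $\ACMPOm$. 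Then $M_1^\sigma \ACm N_1^\sigma$, since a variable in $M_1$ can only be $=_\AC$-matched by the same variable in $N_1$ and non-variables are handled by closure of $=_\AC$ under substitutions; $M_2^\sigma \neq \varnothing$, since instantiation never empties a multiset; and every element $u'$ of $N_2^\sigma$ is dominated by an element of $M_2^\sigma$: an $m' \in M_2$ with $m' \ACMPO n'$ is necessarily a non-variable, so $m'\sigma \in M_2^\sigma$, and if $n'$ is a non-variable then $u' = n'\sigma$ and $m'\sigma \ACMPO u'$ by the induction hypothesis, while if $n'$ is a variable then the induction hypothesis gives $m'\sigma \ACMPO \sigma(n')$ and hence $m'\sigma \ACMPO u'$ for every subterm $u'$ of $\sigma(n')$ by the subterm property and transitivity.

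The main obstacle is exactly this flattening phenomenon: a variable occurring in one of the flattened argument multisets may be instantiated to a term rooted by the AC symbol, so a single position can expand into several positions after substitution, and one has to argue — uniformly across all three clauses of \defref{acmpo} — that the multiset comparison survives this expansion. Once the operation $M \mapsto M^\sigma$ and its monotonicity are in place, the remaining steps are routine case analysis.
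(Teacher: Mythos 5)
Your proposal is correct and follows essentially the same route as the paper: induction on $|s|+|t|$ with a case analysis on the three clauses of \defref{acmpo}, using the observation that a variable in a flattened argument multiset may expand into several elements under $\sigma$, and repairing the comparison with the subterm property and transitivity. Your explicit monotonicity lemma for the operation $M \mapsto M^\sigma$ is just a cleaner packaging of the paper's inline claim that ``all calls to $\ACMPO$ and $=_\AC$ in $\TF{s} \ACMPOm \TF{t}$ can be simulated.''
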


\begin{proof}
Suppose $s \ACMPO t$ and let $\sigma$ be a substitution.
We show $s\sigma \ACMPO t\sigma$ by induction on $|s| + |t|$.
We distinguish three cases.
\begin{enumerate}
\item
Suppose $s' \ACMPO[\geqslant] t$ for some $s' \in \TF{s}$. If $s' \ACMPO t$
then we obtain $s'\sigma \ACMPO t\sigma$ from the induction hypothesis.
If $s' =_\AC t$ then we obtain
$s'\sigma =_\AC t\sigma$ from the closure under substitutions of $=_\AC$.
So in both cases we have $s'\sigma \ACMPO[\geqslant] t\sigma$.
If $s'\sigma \in \TF{s\sigma}$ then
$\TF{s\sigma} \ACMPOm \SET{t\sigma}$
and thus $s\sigma \ACMPO t\sigma$ by case (1).
If $s'\sigma \notin \TF{s\sigma}$ then $s' \in \xV$ and
thus $s' = t$ follows from $s' \ACMPO[\geqslant] t$.
Hence $s'\sigma = t\sigma$.
Since $s'\sigma \prsubterm s\sigma$ we obtain
$s\sigma \ACMPO t\sigma$ from the subterm property
(\lemref{subterm property}).
\item
Suppose $\RT{s} > \RT{t}$ and $\SET{s} \ACMPOm \TF{t}$. Clearly
$\RT{s\sigma} = \RT{s} > \RT{t} = \RT{t\sigma}$. Consider an arbitrary
term $u \in \TF{t\sigma}$. Let $f = \RT{t}$. There exists a term
$t' \in \TF{t}$ such that $u \in \TF[f]{t'\sigma}$. The
induction hypothesis yields $s\sigma \ACMPO t'\sigma$.
Since
$u \in \TF[f]{t'\sigma}$ satisfies $u \subterm t'\sigma$,
we have
$t'\sigma \ACMPO[\geqslant] u$ by the subterm property
(\lemref{subterm property})
and thus $s\sigma \ACMPO u$ by transitivity (\lemref{transitivity}).
Hence $\SET{s\sigma} \ACMPOm \TF{t\sigma}$ and thus
$s\sigma \ACMPO t\sigma$ by case (2).
\item
Suppose $\RT{s} = \RT{t}$ and $\TF{s} \ACMPOm \TF{t}$.
We write $f$ for $\RT{s}$.
Let $s' \in \TF{s}$ and $t' \in \TF{t}$ such that
$s' \ACMPO t'$ or $s' =_\AC t'$ is used in $\TF{s} \ACMPOm \TF{t}$.
\begin{itemize}
\Item
We first consider $s' \ACMPO t'$.
Since $s' \prsubterm s$ and $t' \prsubterm t$,
the induction hypothesis yields $s'\sigma \ACMPO t'\sigma$.
From $s' \ACMPO t'$ we infer $s' \notin \xV$ and thus
$s'\sigma \notin \xV$.  Because of $s' \in \TF{s}$, we have
$\RT{s'} \neq f$ and thus $\RT{s'\sigma} \neq f$. Hence
$\TF[f]{s'\sigma} = \SET{s'\sigma}$. Consider a term 
$u \in \TF[f]{t'\sigma}$. Since $u \subterm t'\sigma$, we
obtain $t'\sigma \ACMPO[\geqslant] u$ by the subterm property
(\lemref{subterm property}) and thus $s'\sigma \ACMPO u$ by
transitivity
(\lemref{transitivity}). Hence
$\TF[f]{s'\sigma} \ACMPOm \TF[f]{t'\sigma}$ follows.
\Item
Suppose $s' =_\AC t'$. Since $=_\AC$ is closed under substitutions, we
have $s'\sigma =_\AC t'\sigma$ and thus
$\TF[f]{s'\sigma} \ACm \TF[f]{t'\sigma}$.
\smallskip
\end{itemize}
It follows that the derivation of $\TF{s} \ACMPOm \TF{t}$ 
can be simulated, resulting in $\TF{s\sigma} \ACMPOm \TF{t\sigma}$.
Hence $s\sigma \ACMPO t\sigma$ by case (3).
\qedhere
\end{enumerate}
\end{proof}

The following technical result is used in the proof that AC-MPO is
closed under contexts (if AC symbols are minimal in the precedence).

\begin{lem}
\label{lem:tf}
If $f \in \xF_\AC$ is minimal in $>$ and $s = f(s_1,s_2) \ACMPO t$ then
$\TF[f]{s} \ACMPOm \TF[f]{t}$.
\end{lem}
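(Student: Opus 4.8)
The plan is to run a case analysis on the three clauses of \defref{acmpo} that can justify $s = f(s_1,s_2) \ACMPO t$, using throughout a handful of elementary facts about flattening: $\TF{u} = \TF[f]{u}$ whenever $\RT{u} = f$; $\TF[f]{u}$ is always non-empty; $\TF[f]{f(s_1,s_2)} = \TF[f]{s_1} \uplus \TF[f]{s_2}$ has at least two elements; and the root of every element of an $f$-flattening differs from $f$ (otherwise flattening would have continued).

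Two of the three cases are immediate. Case (2) of \defref{acmpo} cannot apply here, since it demands $\RT{s} = f > \RT{t}$, contradicting minimality of $f$ in $>$. Case (3) demands $\RT{s} = \RT{t}$, which forces $\RT{t} = f$, so $t = f(t_1,t_2)$ and $\TF{s} = \TF[f]{s}$, $\TF{t} = \TF[f]{t}$; then the hypothesis $\TF{s} \ACMPOm \TF{t}$ is literally the conclusion $\TF[f]{s} \ACMPOm \TF[f]{t}$.

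The substantive case is (1), where $s' \ACMPO[\geqslant] t$ for some $s' \in \TF{s} = \TF[f]{s}$. If $s' \ac t$, then $\RT{t} = \RT{s'} \neq f$, hence $\TF[f]{t} = \SET{t}$; splitting $\TF[f]{s}$ into $\SET{s'}$ and a non-empty remainder $M$, matching $\SET{s'} \ACm \SET{t}$, and letting $M$ be the obligatory non-empty ``strictly larger'' component (which vacuously dominates the empty rest of $\TF[f]{t}$) yields $\TF[f]{s} \ACMPOm \TF[f]{t}$. If instead $s' \ACMPO t$, I first note $t \ACMPO u$ for every $u \in \TF[f]{t}$: trivial when $\RT{t} \neq f$ (then $\TF[f]{t} = \SET{t}$), and when $\RT{t} = f$ each such $u$ is a proper subterm of $t$, so $t \ACMPO u$ by the subterm property (\lemref{subterm property}). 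Transitivity (\lemref{transitivity}) gives $s' \ACMPO u$ for all $u \in \TF[f]{t}$, so the whole multiset $\TF[f]{s}$ dominates $\TF[f]{t}$ and $\TF[f]{s} \ACMPOm \TF[f]{t}$ follows.

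The only slightly delicate point I anticipate is the bookkeeping in the subcase $s' \ac t$ of case (1): there one must use $\lvert\TF[f]{s}\rvert \geqslant 2$ together with the fact that flattened terms do not have root $f$, precisely so that a leftover element of $\TF[f]{s}$ remains to serve as the non-empty ``greater'' part required by the definition of $\ACMPOm$. Everything else is routine and relies only on properties of $\ACMPO$ already established, namely AC-compatibility (\lemref{AC compatibility}), transitivity, and the subterm property.
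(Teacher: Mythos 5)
Your proof is correct and follows essentially the same route as the paper's: a case analysis on the three clauses of \defref{acmpo}, ruling out clause (2) by minimality of $f$, reading clause (3) off directly, and in clause (1) combining the subterm property and transitivity/AC\nobreakdash-compatibility with the fact that $\TF[f]{s}$ has at least two elements, none of which is $f$-rooted. The one genuine difference is that you dispense with the paper's structural induction on $s$ by observing that members of an $f$-flattening never have root $f$ (the paper's induction is only invoked in its subcase $\RT{s'}=f$), and you split clause (1) by $s' \ac t$ versus $s' \ACMPO t$ instead of by $\RT{t}=f$ versus $\RT{t}\neq f$; both are harmless reorganizations. One wording slip to fix: in the subcase $s' \ACMPO t$ with $\RT{t}\neq f$, the intermediate claim ``$t \ACMPO u$ for every $u \in \TF[f]{t}$'' would amount to $t \ACMPO t$ and is false by irreflexivity --- what you need (and already have) there is simply $s' \ACMPO t$ together with $\TF[f]{t}=\SET{t}$, so the conclusion is unaffected.
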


\begin{proof}
We have $\TF{s} = \TF[f]{s} = \TF[f]{s_1} \uplus \TF[f]{s_2}$.
We distinguish three cases.
\begin{enumerate}
\item
Suppose $s' \ACMPO[\geqslant] t$ for some $s' \in \TF{s}$.
We obtain $\RT{s'} \neq f$ from $s' \in \TF{s} = \TF[f]{s}$
and thus $\TF[f]{s'} = \SET{s'}$.
If also $\RT{t} \neq f$, then $\TF[f]{t} = \SET{t}$ and thus
$s' \ACMPO[\geqslant] t$ leads to 
$\TF[f]{s} \supsetneq \TF[f]{s'} \ACMPOm[\geqslant] \TF[f]{t}$
and hence $\TF[f]{s} \ACMPOm \TF[f]{t}$. Suppose $\RT{t} = f$.
Let $v \in \TF[f]{t}$. We have $t \prsuperterm v$ and thus
$t \ACMPO v$. As $s' \ACMPO[\geqslant] t \ACMPO v$, we obtain
$s' \ACMPO v$ by transitivity or AC-compatibility.
Hence $\TF[f]{s'} \ACMPOm \TF[f]{t}$ and therefore
$\TF[f]{s} \supsetneq \TF[f]{s'} \ACMPOm \TF[f]{t}$ and
$\TF[f]{s} \ACMPOm \TF[f]{t}$.
\item
Since $f$ is minimal in the precedence, $s \ACMPO t$ cannot be obtained
by case (2).
\item
Suppose $\RT{s} = \RT{t} = f$ and $\TF{s} \ACMPOm \TF{t}$.
Since $\TF[f]{s} = \TF{s}$ and $\TF[f]{t} = \TF{t}$, the claim holds.
\qedhere
\end{enumerate}
\end{proof}

\begin{lem}
The relation $\ACMPO$ is closed under contexts if \textup{AC} symbols are
minimal in the precedence $>$.
\end{lem}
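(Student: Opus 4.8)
The plan is to reduce the statement to a single context step and then to split on whether the enclosing symbol is an AC symbol. Given $s \ACMPO t$ and a context $C$, I would prove $C[s] \ACMPO C[t]$ by induction on the structure of $C$. The base case $C = \Box$ is the hypothesis, and for $C = f(u_1,\dots,C',\dots,u_n)$ the induction hypothesis supplies $C'[s] \ACMPO C'[t]$, so everything reduces to showing that $f(u_1,\dots,v,\dots,u_n) \ACMPO f(u_1,\dots,w,\dots,u_n)$ whenever $v \ACMPO w$.

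For a non-AC symbol $f$ this step is immediate: $\TF{f(u_1,\dots,v,\dots,u_n)} = \SET{u_1,\dots,v,\dots,u_n}$ and likewise with $w$, and these multisets differ only in the one marked component, so $v \ACMPO w$ gives $\TF{f(u_1,\dots,v,\dots,u_n)} \ACMPOm \TF{f(u_1,\dots,w,\dots,u_n)}$ by taking the remaining arguments as the common part; since the root symbols coincide, case (3) of \defref{acmpo} applies.

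The AC case is where \lemref{tf} does the real work. Here $f$ is binary and minimal, and, using commutativity of $f$ together with AC-compatibility of $\ACMPO$ (\lemref{AC compatibility}), it suffices to treat $f(v,u) \ACMPO f(w,u)$. Flattening yields $\TF{f(v,u)} = \TF[f]{v} \uplus \TF[f]{u}$ and $\TF{f(w,u)} = \TF[f]{w} \uplus \TF[f]{u}$; using the easily verified fact that $M \ACMPOm N$ implies $M \uplus P \ACMPOm N \uplus P$, it is enough to show $\TF[f]{v} \ACMPOm \TF[f]{w}$, after which case (3) closes the step because both roots are $f$. To obtain $\TF[f]{v} \ACMPOm \TF[f]{w}$ I would case on $\RT{v}$: if $\RT{v} = f$ this is precisely \lemref{tf} applied to $v \ACMPO w$; if $\RT{v} \neq f$ then $\TF[f]{v} = \SET{v}$, and I split on $\RT{w}$—when $\RT{w} \neq f$ we have $\TF[f]{w} = \SET{w}$ and $\SET{v} \ACMPOm \SET{w}$ follows directly from $v \ACMPO w$, while when $\RT{w} = f$ every $v' \in \TF[f]{w}$ is a proper subterm of $w$, so $w \ACMPO v'$ by the subterm property (\lemref{subterm property}) and hence $v \ACMPO v'$ by transitivity (\lemref{transitivity}), giving $\SET{v} \ACMPOm \TF[f]{w}$.

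The only real obstacle is this AC case, where flattening mixes the arguments of $v$ with those contributed by the context, and minimality of AC symbols is exactly what keeps it manageable: it rules out clause (2) of \defref{acmpo} for a term with an AC root, and \lemref{tf} packages precisely this observation. With \lemref{tf} available, the remaining work—the additivity of $\ACMPOm$ under disjoint union and the subterm/transitivity bookkeeping—is routine.
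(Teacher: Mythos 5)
Your proposal is correct and follows essentially the same route as the paper: reduce to a flat context, treat the non-AC case by case (3) of \defref{acmpo} with the untouched arguments as the common part, and in the AC case invoke \lemref{tf} when the plugged term has root $f$ and otherwise combine the subterm property with transitivity/AC-compatibility to compare $\SET{s}$ against $\TF[f]{t}$. The only differences are presentational (your explicit split on $\RT{w}$ versus the paper's uniform use of $t \superterm t'$, and your explicit appeal to compatibility of $\ACMPOm$ with multiset union, which the paper uses implicitly).
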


\begin{proof}
Suppose $s \ACMPO t$ and consider a 
context of the form $C = f(\dots,\Box,\dots)$.
If $f \notin \xF_\AC$ then
$\TF{C[s]} - \TF{C[t]} = \SET{s}$ and
$\TF{C[t]} - \TF{C[s]} = \SET{t}$ by the irreflexivity of $\ACMPO$
(\lemref{irreflexivity}), and thus
$\TF{C[s]} \ACMPOm \TF{C[t]}$ follows from
$s \ACMPO t$. Hence $C[s] \ACMPO C[t]$ by case (3).
Suppose $f \in \xF_\AC$. We have
$C = f(\Box,u)$ or $C = f(u,\Box)$ and distinguish two cases.
\begin{itemize}
\item
If $f = \RT{s}$ then $\TF[f]{s} \ACMPOm \TF[f]{t}$
by \lemref{tf}. So the inequality
\[
\TF{C[s]} - \TF{C[t]} = \TF[f]{s} \ACMPOm \TF[f]{t} = \TF{C[t]} - \TF{C[s]}
\]
holds. Therefore, we obtain $C[s] \ACMPO C[t]$ by case (3).
\item
If $f \neq \RT{s}$ then $\TF{C[s]} = \SET{s} \uplus \TF[f]{u}$ and
$\TF{C[t]} = \TF[f]{t} \uplus \TF[f]{u}$.
According to case (3), it is enough to show
$s \ACMPO t'$ for all $t' \in \TF[f]{t}$.
Let $t' \in \TF[f]{t}$. We have $t \superterm t'$ and thus
$t \ACMPO[\geqslant] t'$ by the subterm property
(\lemref{subterm property}). As $s \ACMPO t \ACMPO[\geqslant] t'$, we
obtain $s \ACMPO t'$ by transitivity or AC-compatibility.
\qedhere
\end{itemize}
\end{proof}

Incrementality is the final property in \thmref{acmpo}.

\begin{lem}
The relation $\ACMPO$ is incremental.
\end{lem}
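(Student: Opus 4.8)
The plan is to establish incrementality in exactly the form used in the proof of \thmref{acmpo infinite}: for precedences $>$ and $\sqsupset$ with ${>} \subseteq {\sqsupset}$ we want ${>_\acmpo} \subseteq {\sqsupset_\acmpo}$. The first thing I would observe is that almost nothing in \defref{acmpo} actually looks at the precedence. The multiset $\TF{t}$ and the root symbol $\RT{t}$ depend only on which function symbols are declared AC, and $=_\AC$---together with $\ACm$ and the uses of $=_\AC$ inside the multiset comparison $\ACMPOm$---is entirely precedence-independent. Hence the one and only place where enlarging the precedence matters is the test $\RT{s} > \RT{t}$ in case~(2) of \defref{acmpo}, and that test survives the enlargement because ${>} \subseteq {\sqsupset}$.

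First I would then prove, by induction on $|s| + |t|$, that $s \ACMPO t$ implies $s \sqsupset_\acmpo t$, with a case split following the three clauses of \defref{acmpo}; throughout, the fact that every element of $\TF{u}$ is a proper subterm of $u$ guarantees that each recursive comparison appearing in the definition has a strictly smaller value of the induction measure, so the induction hypothesis always applies. In case~(1) there is $s' \in \TF{s}$ with $s' \ACMPO t$ or $s' =_\AC t$; the first subcase is upgraded to $s' \sqsupset_\acmpo t$ by the induction hypothesis and the second subcase needs no change, so case~(1) relative to $\sqsupset$ applies. In case~(2), ${>} \subseteq {\sqsupset}$ gives $\RT{s} \sqsupset \RT{t}$, and for each $t' \in \TF{t}$ the induction hypothesis converts $s \ACMPO t'$ into $s \sqsupset_\acmpo t'$, so $\SET{s} \sqsupset_\acmpo^\mul \TF{t}$ and case~(2) applies. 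In case~(3), $\RT{s} = \RT{t}$ is unaffected, and writing $\TF{s} \ACMPOm \TF{t}$ via $\TF{s} = S_1 \uplus S_2$, $\TF{t} = T_1 \uplus T_2$ with $S_1 \ACm T_1$, $S_2 \neq \varnothing$, and every $t' \in T_2$ dominated by some $s' \in S_2$ via $s' \ACMPO t'$, the induction hypothesis turns each such domination into $s' \sqsupset_\acmpo t'$ while $S_1 \ACm T_1$ is untouched; hence $\TF{s} \sqsupset_\acmpo^\mul \TF{t}$ and case~(3) applies.

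I do not expect a real obstacle here---the proof is essentially bookkeeping. The only point that calls for a little care is the well-foundedness of the induction, i.e.\ that the recursive comparisons $s' \ACMPO[\geqslant] t$, $s \ACMPO t'$, and $s' \ACMPO t'$ occurring in the three cases all have strictly smaller combined size than the comparison $s \ACMPO t$ being analyzed; this is exactly the proper-subterm property of the multisets $\TF{t}$ that was already used repeatedly in the lemmata above. Once incrementality is in place, all the properties asserted in \thmref{acmpo} have been verified, completing its proof.
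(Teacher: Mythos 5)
Your proof is correct and follows essentially the same route as the paper: induction on $|s|+|t|$, a case split on the three clauses of \defref{acmpo}, the observation that the precedence only enters via the root comparison in case~(2), and the fact that elements of $\TF{u}$ are proper subterms to justify the induction.
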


\begin{proof}
Let $>$ and $\sqsupset$ be precedences with ${>} \subseteq {\sqsupset}$.
Suppose $s \ACMPO t$. We show $s \ACMPO[\sqsupset] t$ by induction on 
$|s| + |t|$. We distinguish three cases.
\begin{enumerate}
\item
If $\TF{s} \ACMPOm[\geqslant] \SET{t}$ then $s' =_\AC t$ or $s' \ACMPO t$
for some $s' \in \TF{s}$. In the latter case, the induction hypothesis
yields $s' \ACMPO[\sqsupset] t$. Hence in both cases
$s \ACMPO[\sqsupset] t$ by case (1).
\item
If $\RT{s} > \RT{t}$ and $\SET{s} \ACMPOm \TF{t}$ then
$\SET{s} \ACMPO[\sqsupset] \TF{t}$ by the induction hypothesis. Hence,
$s \ACMPO[\sqsupset] t$ is obtained by case (2).
\item
Suppose $\RT{s} = \RT{t}$ and $\TF{s} \ACMPOm \TF{t}$. Let
$s' \in \TF{s}$ and $t' \in \TF{s}$ be a
term pair
such that $s' \ACMPO t'$ is used in $\TF{s} \ACMPOm \TF{t}$. Since 
$s' \prsubterm s$ and $t' \prsubterm t$, the induction hypothesis yields
$s' \ACMPO[\sqsupset] t'$. Thus, the derivation of
$\TF{s} \ACMPOm \TF{t}$ can be simulated by $\ACMPO[\sqsupset]$.
Therefore, $\TF{s} \ACMPOm[\sqsupset] \TF{t}$, and hence 
$s \ACMPO[\sqsupset] t$ is obtained by case (3).
\qedhere
\end{enumerate}
\end{proof}

\end{document}